\documentclass[12pt]{iopart}

\usepackage[english]{babel}
\usepackage{mathrsfs}
\expandafter\let\csname equation*\endcsname\relax
\expandafter\let\csname endequation*\endcsname\relax
\usepackage{amsmath}
\usepackage{amsthm}
\usepackage{amsfonts}
\usepackage{BOONDOX-cal}
\usepackage{graphicx}
\usepackage{braket}
\usepackage{subcaption}
\usepackage{multirow}
\usepackage[table]{xcolor}
\usepackage{bbm}
\usepackage{perpage}
\usepackage{booktabs}
\usepackage{multirow}
\usepackage{amssymb}
\usepackage{xfrac}
\usepackage[switch]{lineno}
\usepackage{cite}

\definecolor{nicepurple}{HTML}{97a0cf}
\definecolor{lightgreen}{HTML}{88BFB7} 
\definecolor{lightergreen}{HTML}{88BFB7} 
\definecolor{cgreen}{HTML}{88BFB7}
\definecolor{melon}{HTML}{FCBCB8}
\definecolor{puce}{HTML}{C08497}

\renewcommand{\newline}{\\\\\noindent}

\newcommand{\expect}[1]{{\langle #1\rangle}}

\newcommand{\one}{\mathbbm{1}}
\newcommand{\scpr}[2]{\langle#1\, \vert \, #2 \rangle}
 
\newcommand{\norm}[1]{\left\lVert #1 \right\rVert}

\newcommand{\newparagraph}{\\\\\noindent}
\newcommand{\neuralqx}{\textsc{neuraLQX} }
\newcommand{\netket}{\textsc{NetKet} }
\newcommand{\jax}{\textsc{JAX} }
\newcommand{\flax}{\textsc{Flax} }
\newcommand{\mpijax}{\textsc{mpi4jax} }

\newtheorem{prop}{Proposition}

\MakePerPage{footnote}

\begin{document}

\title[]{Finding and characterising physical states of Euclidean Abelianized loop quantum gravity using neural quantum states}

\author{Hanno Sahlmann $^1$, Waleed Sherif $^2$\footnote{Author to whom any correspondence should be addressed}}

\address{Institute for Quantum Gravity, Department of Physics, Friedrich-Alexander-Universit\"{a}t Erlangen-N\"{u}rnberg (FAU), Staudtstraße 7, 91058 Erlangen, Germany}
\ead{$^1$ hanno.sahlmann@fau.de, $^2$ waleed.sherif@fau.de}
\vspace{10pt}
\begin{indented}
\item[] \today
\end{indented}

\begin{abstract}
We study physical (near-kernel of constraints) states of 4-d Euclidean loop quantum gravity in Smolin's weak coupling limit on the complete graph $K_5$ using variational Monte Carlo with neural network quantum states. We investigate the Hamilton constraint $\hat{H}$ in the ordering proposed by Thiemann, as well as $\hat{H}^\dagger$ and $\hat{H}+\hat{H}^\dagger$. We find that the variational optimisation selects distinct solution families for $\hat{H}$ and $\hat{H}^\dagger$ across several considered cutoffs on the kinematical degrees of freedom. The solution family of $\hat{H}$ is flat on all minimal loops and has non-vanishing volume expectation values. Its edge-charge marginals delocalise with increasing cutoff, which indicates they are approximations to solutions that are non-normalisable in the kinematical inner product. The solution family for $\hat{H}^\dagger$ is normalisable, shows non-trivial charge correlations, lies in the kernel of volume and is not flat. $\hat{H}+\hat{H}^\dagger$ turns out to be much harder to solve and yields quasi-solutions combining features of both previous families. We characterise all solutions using chromaticity 1- and 2-point functions, minimal loop holonomies, geometric area and volume observables and show that the two families can be interpreted as, on the one hand, a family of states  close to the Ashtekar-Lewandowski vacuum and the Dittrich-Geiller vacuum with some numerical noise on the other hand. We also present some results that link solutions of the truncated theory to solutions of the continuum theory.
\end{abstract}

\section{Introduction}
\label{sec:introduction}

Loop quantum gravity (LQG) \cite{Rovelli:1997yv,Thiemann:2001gmi,Thiemann:2007pyv,Ashtekar:2004eh} is a non-peturbative approach to quantise general relativity in a canonical framework. Instead of metric variables, one starts from the classical theory rewritten in terms of a connection and its conjugate momentum \cite{Ashtekar:1986yd,BarberoG:1994eia}, thus embedding its phase space in that of SU(2) Yang-Mills theory. On the kinematical level, the quantum theory is well developed. Namely, one has a kinematical Hilbert space spanned by graph-based states (spin-network functions (SNFs) \cite{Baez:1996aima} and one can define geometric operators such as areas and volumes with discrete spectra \cite{Ashtekar:1996eg,Rovelli:1995discreteness,Ashtekar:1997volume}.
\newparagraph
The difficulty in canonical LQG is not the kinematics, but rather the dynamics \cite{Thiemann:1996aw,Thiemann:1996av,Thiemann:1997rv,Thiemann:1997ru,Reisenberger:1996pu,Lewandowski:2014hza,Varadarajan:2022dgg,Varadarajan:2021zrk,Ashtekar:2021shortreview}. In the canonical formulation, Einstein's equations become constraints, namely the Gauß constraint generating internal gauge transformation, the diffeomorphism constraint generating spatial diffeomorphisms and, most importantly, the Hamiltonian constraint encoding the dynamics. Physical states are those which lie in the kernel of all three constraints. In the full theory, substantial effort that has been dedicated to formulating the quantum constraints, but \emph{obtaining} and \emph{interpreting} solutions to the Hamiltonian constraint remains technically and conceptually a challenging open problem. 
\newparagraph
Recently, among attempts to approach this problem \cite{Guedes:2024zbu,Guedes:2024duc,Assanioussi:2017tql,Kisielowski:2022wvk,Makinen:2026rof}, a complementary direction has started to look practicable whereby one treats a truncated, fixed-graph kinematical Hilbert space as a large, but finite, many-body system \cite{Sahlmann:2024pba}. This in turn allows one to utilise modern variational numerics to address the constraint problem. In earlier work on a $\mathrm{U}(1)$ BF-theory quantised with LQG methods, a two-fold truncation on the kinematical Hilbert space was introduced: First, by working on a fixed graph and its subgraphs, and second, by restricting the allowed representation labels (charges) to a finite set, effectively working with a $\mathcal{U}_q(1)$ theory at a root of unity. Even in this radically truncated setting, the Hilbert space grows exponentially with the graph size and cutoff, quickly leaving the regime of explicit exact numerical methods. Neural network quantum states (NQS) \cite{Carleo:2016svm} combined with variational Monte Carlo (VMC) provide an efficient compression whereby the network could approximate constraint-minimising states with high accuracy when benchmarked against exact diagonalisation on small instances, while using a number of parameters vanishingly small compared to the Hilbert space dimension.
\newparagraph
The latter point becomes even more pronounced on the follow-up work \cite{Sahlmann:2024kat} where the 3-dimensional Euclidean model of LQG in Smolin's weak coupling limit \cite{Smolin:1992wj} was considered. In such a case, the relevant gauge structure also becomes Abelian, and one obtains an effective $\mathrm{U}(1)^3$ BF-type description. Unlike previous work, this model demonstrated several important advancements in this program, the first of which is the ability to construct Thiemann-type constraints computationally, and obtain solutions for them. Second, the dimension of the Hilbert spaces in such work lied beyond any approach which does not efficiently compress the degrees of freedom of the problem at hand, demonstrating for the first time the capacity of NQS to nevertheless be robust in such LQG settings. Lastly, with the help of LQG specific geometric observables, it is also possible to not only obtain solutions, but also characterise them.
\newparagraph
The present work continues this numerical programme in a 4-dimensional Euclidean LQG setting, once again in Smolin's weak coupling limit which leads to, once again, an Abelian $\mathrm{U}(1)^3$ theory but now with genuine 4-d kinematics and dynamics on a non-trivial boundary graph. In this work, we consider the complete $K_5$ graph, the boundary graph of a 4-simplex, which is the canonical minimal graph for 4-d simplicial boundary data and a natural testbed for the 4-d dynamics on a fixed graph. Its symmetry helps to suppress purely combinatorial artefacts, while its non-planarity forces one to address the issue of embedding concretely. Lastly, it is a graph composed of four 4-valent vertices, allowing for non-trivial action of the Thiemann Hamiltonian on all vertices. As in earlier studies \cite{Sahlmann:2024pba,Sahlmann:2024kat}, we fix a graph and impose a charge cutoff on representation labels and work in the same $\mathcal{U}_q(1)^3$ limit to work in a truncated finite-dimensional Hilbert space. Unlike previous work, however, we show that using such numerical methods, one could in principle obtain solutions at arbitrarily high cutoff.
\newparagraph
Within this setting, we consider three orderings for the vertex Hamilton constraints: the ordering $H_v$ proposed by Thiemann \cite{Thiemann:1996aw,Thiemann:1996av}, the opposite ordering $H_v^\dagger$ and a symmetric ordering $(H_v+H_v^\dagger)/2$. It is an interesting open question, if different orderings of the constraint lead to similar spaces of solutions, and if so, using a variational ansatz, which solutions are easier to find for the various orderings. An important result of the present paper is that scalable numerics can make this an empirically tractable question. 
\newparagraph
Concretely, for each choice of ordering, we use a graph-adapted NQS architecture and VMC optimisation on the gauge invariant subspace to obtain variational near-kernel state across a range of cutoffs. We then characterise these states:  We analyse how their probability mass redistributes under cutoff refinement as a practical proxy for normalisability behaviour in the kinematical inner product. We evaluate 
correlation functions to better understand the structure of the solutions. We calculate expectation values of observables sensitive to spatial geometry and extrinsic curvature, namely area and volume operators and minimal loop holonomies. Altogether we obtain a concrete, multi-angle picture of the geometrical structure and physical content of the solutions that each ordering results in.
\newparagraph
The paper is organised as follows:
\begin{enumerate}
    \item Section \ref{sec:theoreticalframework} discusses the theoretical model considered. We outline the structure of the Thiemann Hamiltonian as well as the geometric observables to be used in this work in this setting. We also present results that link the kernel of the quadratic constraints considered in this work to solutions to the continuum theory in the sense of \cite{Thiemann:1996aw}. 

    \item In Section \ref{sec:computationalframework}, we present the computational framework where we discuss the chosen graph, the truncated gauge invariant Hilbert space as well as the NQS architecture used in this work.

    \item Section \ref{sec:results} presents the results obtained in this work where we demonstrate that solutions for both the quadratic constraints can be obtained successfully in Section \ref{subsec:solsfamily}. Throughout Section \ref{subsec:characterisation}, we provide a host of characterisations of the obtained solutions where we investigate their normalisability, their flatness, long-range correlations, relation to known kinematical vacua in LQG and their geometric structure. In Section \ref{subsec:symmordsols}, we demonstrate a systematic manner in which one can interpolate between different solution classes in this variational setting.

    \item We close this work with Section \ref{sec:discussionoutlook} whereby we discuss the limitations encountered in this work, what the results tell us about factor ordering from a practitioner's point of view, and how the same workflow can be pushed towards now the full $\mathrm{SU}(2)$ setting.
\end{enumerate}

\section{Theoretical framework}
\label{sec:theoreticalframework}
Canonical loop quantum gravity is based on a phase space with Ashtekar-Barbero variables $(A,E)$ \cite{Ashtekar:1986yd,BarberoG:1994eia,Ashtekar:2004eh} as coordinates. The dynamics of general relativity is encoded in three sets of constraints: Gau{\ss}, diffeomorphism and Hamilton constraint. Here, $A$ is a connection (i.e. gauge) field with structure group SU(2). In the present work, however, we will later pass to a simplified, Abelianised theory with structure group $\mathrm{U}(1)^3$ (see Section \ref{subsec:weakcouplinglimit} below). The kinematical construction that we review first extends mutatis mutandis to that Abelianised setting and therefore provides the natural starting point for the fixed-graph model studied here.
\newparagraph
The kinematical Hilbert space of LQG can be written as \cite{Ashtekar:1994mh} 
\begin{equation}
    \mathcal{H}_\text{AL} = L^2(\overline{\mathcal{A}},\text{d}\mu_\text{AL})
\end{equation}
where $\overline{\mathcal{A}}$ is a space of certain generalized connection fields $A$. For an oriented graph $\gamma$ embedded in $\Sigma$, Cyl$_\gamma$ denotes the space of functionals in $\mathcal{H}_\text{AL}$ that depends on the connection as a continuous function on the holonomies along the edges of $\gamma$, and Cyl their union for all $\gamma$. For details see for example \cite{Ashtekar:1994mh,Ashtekar:1994wa,Thiemann:2001gmi,Ashtekar:2004eh}. 
We define 
\begin{equation}
    \mathcal{H}_\gamma := \overline{\text{Cyl}_\gamma}^{|| \cdot ||}. 
\end{equation}
These spaces are in general not orthogonal to each other for different $\gamma$. However, there is an orthogonal decomposition \cite{Ashtekar:1996eg}
\begin{equation}
    \mathcal{H}_\text{AL} = \bigoplus_{\gamma} \mathcal{H}'_\gamma. 
\end{equation}
The relation between $\mathcal{H}_\gamma$ and this decomposition is 
\begin{equation}
    \mathcal{H}_\gamma = \bigoplus_{\gamma'\subseteq \gamma} \mathcal{H}'_{\gamma'} 
\end{equation}
where here and in the following, the notation $\gamma'\subseteq \gamma$ denotes that $\gamma'$ is a subgraphs of $\gamma$, including the possibility that $\gamma' = \gamma$.
\newparagraph
Gau{\ss}, diffeomorphism and Hamilton constraints lead to conditions on the states in $\mathcal{H}_\text{AL}$. Solutions to the Gau{\ss} constraint are the states invariant under the action of SU(2) gauge transformations on $\overline{\mathcal{A}}$. We will work directly work with solutions of the Gau{\ss} constraint for the rest of the article. The diffeomorphism constraints impose spatial diffeomorphism invariance. We denote with $\mathcal{H}_\text{diff}$ the space of its solutions, which is part of the algebraic dual of $\mathcal{H}_\text{AL}$ 
\newparagraph
We consider a family of Hamilton constraint operators 
\begin{equation}
    N \longmapsto \hat{H}(N)
\end{equation}
defined on a common dense domain in Cyl $\subset \mathcal{H}_\text{AL}$. We will specify them more concretely below, but we want to first state some general results. We assume that  the $\hat{H}(N)$ are graph-preserving in the following sense:\footnote{To be precise, in the following equation we should let $\hat{H}(N)$ act on $\mathcal{H}_\gamma \cap \text{dom} \hat{H}(N)$. Here and in the following, we will not explicitly take care of the domain of the operator, however. We do not want to clutter the equations, but we do not foresee any obstacle to take the domains into account.}  
\begin{equation}
\label{eq:graph_preserving}
    \hat{H}(N)\, \mathcal{H}_\gamma \subseteq \mathcal{H}_\gamma. 
\end{equation}
We will also assume that 
\begin{equation}
    \hat{H}_v := \hat{H}(\delta_v)
\end{equation}
is a well defined operator, where $\delta_v(p)$ is the Kronecker delta function
\begin{equation}
    \delta_v(p) =\begin{cases}
    1 & \text{ if } p=v\\ 0 & \text{ otherwise}
    \end{cases}. 
\end{equation}
Following \cite{Thiemann:1996aw}, solutions $|\Psi) \in \mathcal{H}_\text{diff}$ to a family of Hamilton constraints $H(N)$ by definition satisfy
\begin{equation}
\label{eq:solution}
    \left (\Psi \right |  \hat{H}(N) f \rangle = 0 \quad,\quad\text{ for all } N: \Sigma \mapsto \mathbb{R}, f\in \text{Cyl}.  
\end{equation}
From this set of equations, we will consider only a subset of conditions, referring to a chosen graph $\gamma_0$, and take a form that is amenable to a minimization approach. We consider 
\begin{equation}
\label{eq:necessary}
    \left (\Psi \right |  \hat{H}(N) f_{\gamma_0} \rangle = 0 \quad,\quad\text{ for all } N: \Sigma \mapsto \mathbb{R}, f_{\gamma_0}\in \mathcal{H}_{\gamma_0}.
\end{equation}
These equations are obviously implied by \eqref{eq:solution}, hence they are necessary, but not necessarily sufficient. Finally, we define for any family $\{\hat{O}_v| v\in V(\gamma)\}$ of operators the non-negative symmetric operator 
\begin{equation}
    \hat{\mathcal{Q}}_{\hat{O}} := \sum_{v \in \gamma_0} \hat{O}_v \hat{O}_v^\dagger. 
\end{equation}
In this situation, we have the following  
\begin{prop}
\label{pr:main_proposition1}
For a set of constraint operators $\hat{H}(N)$ that are graph-preserving in  the sense of \eqref{eq:graph_preserving}, the set of conditions \eqref{eq:necessary} is equivalent to
\begin{equation}
\label{eq:minimal_expectation}
\langle \hat{\mathcal{Q}}_{\hat{H}\rvert_{\mathcal{H}_{\gamma_0}}} \rangle_{\widetilde{\Psi}} = 0, 
\end{equation}
where 
\begin{equation}
    \hat{H}_v\rvert_{\mathcal{H}_{\gamma_0}}:= P_{\gamma_0} \hat{H}_v P_{\gamma_0}
\end{equation}
is the restriction of $\hat{H}_v$ to  $\mathcal{H}_{\gamma_0}$ and $\widetilde{\Psi}\in \mathcal{H}_{\gamma_0}$ is a certain state in the kinematical Hilbert space constructed from  $|\Psi)$. 
\end{prop}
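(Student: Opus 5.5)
The plan is to construct $\widetilde{\Psi}$ explicitly as the Riesz representative of the restriction of $(\Psi|$ to $\mathcal{H}_{\gamma_0}$, and then reduce \eqref{eq:necessary} to a sum of squared norms.

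Since $(\Psi|$ is an algebraic dual element, its restriction to $\mathcal{H}_{\gamma_0}$ defines a linear functional $f_{\gamma_0}\mapsto (\Psi|f_{\gamma_0}\rangle$. In the truncated, finite-dimensional setting of the paper, or more generally on a finite-dimensional cutoff subspace, this functional is bounded. By the Riesz representation theorem there is therefore a unique $\widetilde{\Psi}\in\mathcal{H}_{\gamma_0}$ such that $(\Psi|f_{\gamma_0}\rangle=\langle\widetilde{\Psi}|f_{\gamma_0}\rangle$ for all $f_{\gamma_0}\in\mathcal{H}_{\gamma_0}$. This $\widetilde{\Psi}$ is the "certain state constructed from $|\Psi)$." Graph preservation \eqref{eq:graph_preserving} guarantees that $\hat{H}(N)f_{\gamma_0}\in\mathcal{H}_{\gamma_0}$. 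Hence \eqref{eq:necessary} is equivalent to
\begin{equation*}
\langle\widetilde{\Psi}|\hat{H}(N)f_{\gamma_0}\rangle
=\langle\widetilde{\Psi}|P_{\gamma_0}\hat{H}(N)P_{\gamma_0}f_{\gamma_0}\rangle=0 .
\end{equation*}

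Next I reduce the dependence on the lapse to finitely many vertex operators. The assumption needed is that, on $\mathcal{H}_{\gamma_0}$, $\hat{H}(N)$ acts only at vertices of $\gamma_0$, so that $\hat{H}(N)P_{\gamma_0}=\sum_{v\in V(\gamma_0)}N(v)\,\hat{H}_vP_{\gamma_0}$. This holds for Thiemann-type operators and would be stated as part of the setup. Choosing $N=\delta_v$ shows that \eqref{eq:necessary} is equivalent to $\langle\widetilde{\Psi}|\hat{H}_v\rvert_{\mathcal{H}_{\gamma_0}}f\rangle=0$ for every $v$ and every $f\in\mathcal{H}_{\gamma_0}$. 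By density and the definition of the adjoint, this says $(\hat{H}_v\rvert_{\mathcal{H}_{\gamma_0}})^\dagger\widetilde{\Psi}=0$ for all $v$.

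Finally, I compute
\begin{equation*}
\langle\hat{\mathcal{Q}}_{\hat{H}\rvert_{\mathcal{H}_{\gamma_0}}}\rangle_{\widetilde{\Psi}}
=\sum_v\big\|(\hat{H}_v\rvert_{\mathcal{H}_{\gamma_0}})^\dagger\widetilde{\Psi}\big\|^2 .
\end{equation*}
A sum of non-negative terms vanishes iff each term vanishes, which gives both directions of the equivalence.

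The main obstacle is the first step. For a generic element of the algebraic dual, the restriction to the infinite-dimensional $\mathcal{H}_{\gamma_0}$ need not be bounded, so Riesz does not apply directly. I would handle this either by working on the charge-truncated space, where everything is finite-dimensional, or by letting $\widetilde{\Psi}$ be a formal vector and interpreting the norms componentwise. Domain issues for the adjoint are set aside as in the paper's footnote.
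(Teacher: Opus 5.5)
Your overall architecture matches the paper's. Graph preservation lets you replace $\hat H(N)$ by $P_{\gamma_0}\hat H(N)P_{\gamma_0}$ on $\mathcal H_{\gamma_0}$; you then pass from all lapses to the vertex operators, take adjoints, and conclude from $\langle\hat{\mathcal Q}_{\hat H\rvert_{\mathcal H_{\gamma_0}}}\rangle_{\widetilde\Psi}=\sum_v\|(\hat H_v\rvert_{\mathcal H_{\gamma_0}})^\dagger\widetilde\Psi\|^2$. Stating the vertex-locality assumption $\hat H(N)P_{\gamma_0}=\sum_v N(v)\hat H_vP_{\gamma_0}$ explicitly is more careful than the paper, which uses it tacitly when it trades ``for all $N$'' for ``for all $v\in\gamma_0$''.

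\textbf{The gap.} It is the step you flag yourself and leave open: you never establish a $\widetilde\Psi\in\mathcal H_{\gamma_0}$ with $(\Psi|f\rangle=\langle\widetilde\Psi|f\rangle$ for all $f\in\mathcal H_{\gamma_0}$. The proposition is formulated in Section~\ref{sec:theoreticalframework}, before any cutoff is imposed, where $\mathcal H_{\gamma_0}$ is infinite-dimensional. Retreating to the $\mathbb{Z}^3_{2m_\mathrm{max}+1}$ truncation therefore proves only a special case. The ``formal vector'' alternative does not rescue it either. The statement demands $\widetilde\Psi\in\mathcal H_{\gamma_0}$, and both your adjoint step and the norms $\|(\hat H_v\rvert_{\mathcal H_{\gamma_0}})^\dagger\widetilde\Psi\|^2$ have no meaning for a non-normalisable $\widetilde\Psi$.

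\textbf{The missing idea.} $|\Psi)$ is not an arbitrary element of the algebraic dual; it lies in $\mathcal H_\mathrm{diff}$. So it has the form $|\Psi)=\sum_{\gamma\in S}|\eta(\Psi_\gamma))$ with normalisable $\Psi_\gamma\in\mathcal H'_\gamma$. Pair this with a vector in $\mathcal H_{\gamma_0}=\bigoplus_{\alpha\subseteq\gamma_0}\mathcal H'_\alpha$; graph preservation guarantees that $\hat H(N)f_{\gamma_0}$ is such a vector. Orthogonality of the $\mathcal H'_\alpha$ then kills every term $U_\varphi\Psi_\gamma$ of the group average except those with $\varphi(\gamma)\subseteq\gamma_0$, and there are only finitely many of these.

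The paper defines $\widetilde\Psi$ as the sum of these surviving terms. Being a finite sum of normalisable vectors, it is a genuine element of $\mathcal H_{\gamma_0}$, and it satisfies $(\Psi|\hat H(N)f_{\gamma_0}\rangle=\langle\widetilde\Psi|\hat H(N)f_{\gamma_0}\rangle$. Once you have this, the functional you wanted to feed into the Riesz theorem is automatically bounded. Its representative coincides with the paper's $\widetilde\Psi$ by uniqueness, and the rest of your argument goes through unchanged.

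The explicit form also matters beyond this proof. Proposition~\ref{pr:main_proposition2} reverses precisely this construction, decomposing $\widetilde\Psi$ into its components $P'_\alpha\widetilde\Psi$ and re-applying $\eta$. An abstract Riesz representative does not make that structure visible.
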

\noindent The proof will be given in appendix \ref{app:proof_minimization_problem}. What we take from this is that for each solution of the constraints $\hat{H}(N)$ in the sense of \cite{Thiemann:1996aw}, there must exist $\widetilde{\Psi}\in \mathcal{H}_{\gamma_0}$ in which the non-negative operator $\hat{\mathcal{Q}}_{\hat{H}}$ has expectation value 0. By minimizing the expectation value of $\hat{\mathcal{Q}}_{\hat{H}}$ numerically, we can generate states approximating such a $\widetilde{\Psi}$. Note that $\hat{\mathcal{Q}}_{\hat{H}}$ can be thought of as playing the role of a master constraint for the constraints $\hat{H}(N)$ \cite{Thiemann:2006phoenix}. However, $\hat{\mathcal{Q}}_{\hat{H}}$ is merely a simple implementation of the general idea, suited for the present application. For information on the general master constraint program see \cite{Thiemann:2006phoenix,Dittrich:2006master}. 
\newparagraph
We note that it is not trivial to invert the construction of $\widetilde{\Psi}$ from $\lvert\Psi)$. Specifically, it is not clear a priori that a solution $\widetilde{\Psi}$ of \eqref{eq:minimal_expectation} always comes from a solution $\lvert\Psi)$ of \eqref{eq:solution}. In the following, we will identify situations in which this is the case, however. 
\newparagraph
We consider a set of constraints $\hat{H}(N)$ as above. We assume in particular that $\hat{H}$ is graph preserving. We call $\hat{H}$ \emph{covariant} under the graph symmetries GS$_{\gamma_0}$ of $\gamma_0$, if 
\begin{equation}
    U_g \hat{H}_v\rvert_{\mathcal{H}_{\gamma_0}} U_g^{-1} = \hat{H}_{g(v)}\rvert_{\mathcal{H}_{\gamma_0}},\quad\text{ for all } g\in \text{GS}_{\gamma_0}
\end{equation}
Note that is this a bit of abuse of notation: elements in $\text{GS}_{\gamma_0}$ are equivalence classes of diffeomorphisms \cite{Ashtekar:1995diffeo,Ashtekar:2004eh}, so a priori it is not clear how they would act on $\mathcal{H}_{\text{kin}}$. For the adjoint action on $\hat{H} P_{\gamma_0}$ with $\hat{H}$ graph preserving, however, one can just use any representative of the equivalent class. The result does not depend on which. 
\newparagraph
Note that for any subgraph $\alpha\subseteq\gamma_0$ of $\gamma_0$ there is a natural group homomorphism
\begin{equation}
\label{eq:gs_map}
    \text{GS}_{\gamma_0} \longrightarrow \text{GS}_{\alpha}.
\end{equation}
The situation is especially simple if this homomorphism is onto, i.e. that any symmetry in $\text{GS}_\alpha$ can be realized as a symmetry of the whole graph $\gamma_0$. If and when this is the case depends on the nature of the graph and the category of manifolds and diffeomorphisms that is considered. 
\newparagraph
Note further that a subgraph $\alpha\subset\gamma_0$ of $\gamma_0$ may be embeddable into $\gamma_0$ in more than one way, in the sense that there are $[\varphi], [\varphi'], \ldots \in \sfrac{\text{Diff}}{\text{Diff}_\alpha}$ that map $\alpha$ onto a diffeomorphic subgraphs $\alpha', \alpha'',\ldots$ of $\gamma_0$. The situation is especially simple if $\varphi, \varphi',\ldots$ are equivalent to graph symmetries of $\gamma_0$, 
\begin{equation}
    \label{eq:gs_inclusion}
    \varphi \in  g, \varphi' \in g', \ldots, \text{ with }  g,g'\ldots \in \text{GS}_{\gamma_0}.
\end{equation}
If and when this is the case again depends on the nature of the graph and the category of manifolds and diffeomorphisms that is considered. 
In this situation, we denote with $n(\alpha, \gamma_0)$ the number of elements $[\varphi], [\varphi'], \ldots \in  \sfrac{\text{Diff}}{\text{Diff}_\alpha}$ inducing different embeddings of $\alpha$ in $\gamma_0$. 
\newparagraph
With these provisions, we have the following
\begin{prop}
\label{pr:main_proposition2}
Let
\begin{itemize}
    \item $\gamma_0$ be a graph for which \eqref{eq:gs_map} is onto, and \eqref{eq:gs_inclusion} holds in all applicable cases, 
    \item $\hat{H}(N)$ be a family of Hamilton constraints that is covariant under $\text{GS}_{\gamma_0}$, 
    \item $\widetilde{\Psi}$ be a state in $\mathcal{H}_{\gamma_0}$ such that 
    \begin{equation}
      \langle \hat{\mathcal{Q}}_{\hat{H}} \rangle_{\widetilde{\Psi}} = 0, 
    \end{equation}
\end{itemize}
Then the state 
\begin{equation}
    \lvert\Psi) = \sum_{\alpha\subseteq\gamma_0}\, \frac{1}{n(\alpha, \gamma_0)} \, \vert\eta (P'_\alpha\widetilde{\Psi}))\qquad \in \mathcal{H}_{\text{diff}}
\end{equation}
fulfils the necessary condition \eqref{eq:necessary}. Here $\eta: \mathcal{H}_\text{AL} \rightarrow \mathcal{H}_{\text{diff}}$ denotes the rigging map, and $P'_\alpha$ is the projector onto $\mathcal{H}'_{\alpha}$. 
\end{prop}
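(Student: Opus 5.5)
The plan is to verify \eqref{eq:necessary} directly: for every lapse $N$ and every $f_{\gamma_0}\in\mathcal{H}_{\gamma_0}$, compute $(\Psi|\hat{H}(N) f_{\gamma_0}\rangle$ with the $\lvert\Psi)$ of the statement. The first step is to reduce to vertex operators. Since $\hat{H}$ is graph preserving, $\hat{H}(N)f_{\gamma_0}\in\mathcal{H}_{\gamma_0}$. On a fixed graph, the action of $\hat{H}(N)$ should only probe $N$ at the vertices, so that $\hat{H}(N)P_{\gamma_0}=\sum_{v\in V(\gamma_0)}N(v)\,\hat{H}_vP_{\gamma_0}$, as is the case for Thiemann-type constraints. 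It then suffices to show
\begin{equation*}
(\Psi|\hat{H}_v f_{\gamma_0}\rangle=0\quad\text{for all } v\in V(\gamma_0),\ f_{\gamma_0}\in\mathcal{H}_{\gamma_0}.
\end{equation*}
Next, $\langle\hat{\mathcal{Q}}_{\hat{H}}\rangle_{\widetilde{\Psi}}=\sum_v\|(\hat{H}_v\rvert_{\mathcal{H}_{\gamma_0}})^\dagger\widetilde{\Psi}\|^2=0$ gives, for each $v$,
\begin{equation*}
\langle\widetilde{\Psi}|P_{\gamma_0}\hat{H}_vP_{\gamma_0}g\rangle=0\quad\text{for all } g.
\end{equation*}

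The second step is to evaluate the rigging map on $\mathcal{H}_{\gamma_0}$. For $h\in\mathcal{H}'_\beta$ with $\beta\subseteq\gamma_0$, the functional $(\eta(P'_\alpha\widetilde{\Psi})|h\rangle$ is a sum over $[\varphi]\in\mathrm{Diff}/\mathrm{Diff}_\alpha$ of $\langle U_\varphi P'_\alpha\widetilde{\Psi}|h\rangle$. This vanishes unless $\varphi(\alpha)=\beta$, since the $\mathcal{H}'$ spaces are mutually orthogonal. The embeddings of $\alpha$ onto subgraphs of $\gamma_0$ are, by \eqref{eq:gs_inclusion}, realised by graph symmetries $g\in\mathrm{GS}_{\gamma_0}$. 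Because \eqref{eq:gs_map} is onto, the residual freedom is the symmetry group $\mathrm{GS}_\alpha$ acting on the $\alpha$-part. The intended outcome is a formula of the form
\begin{equation*}
(\Psi|h\rangle=\Big\langle \sum_{g}c_g\,U_g\widetilde{\Psi}\,\Big|\,h\Big\rangle,\qquad h\in\mathcal{H}_{\gamma_0},
\end{equation*}
where the weights $1/n(\alpha,\gamma_0)$ exactly compensate the overcounting of embeddings. The goal is that $(\Psi|$ restricted to $\mathcal{H}_{\gamma_0}$ equals $\langle\widetilde{\Psi}_{\rm sym}|$, where $\widetilde{\Psi}_{\rm sym}$ is a (suitably normalised) average of $U_g\widetilde{\Psi}$ over $\mathrm{GS}_{\gamma_0}$.

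The third step uses covariance. Since $U_g\hat{H}_v\rvert_{\mathcal{H}_{\gamma_0}}U_g^{-1}=\hat{H}_{g(v)}\rvert_{\mathcal{H}_{\gamma_0}}$ and $U_g$ is unitary and preserves $\mathcal{H}_{\gamma_0}$, each term can be rewritten as
\begin{equation*}
\langle U_g\widetilde{\Psi}|\hat{H}_vP_{\gamma_0}f\rangle=\langle\widetilde{\Psi}|P_{\gamma_0}\hat{H}_{g^{-1}(v)}P_{\gamma_0}U_g^{-1}f\rangle .
\end{equation*}
This vanishes by the first step, applied to the vertex $g^{-1}(v)$. Summing over $g$ then gives $(\Psi|\hat{H}(N)f_{\gamma_0}\rangle=0$.

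I expect the main obstacle to be the bookkeeping in the second step. It requires showing that the sum over $\alpha\subseteq\gamma_0$, weighted by $1/n(\alpha,\gamma_0)$, together with the sum over diffeomorphism classes in $\eta$, collapses to a clean $\mathrm{GS}_{\gamma_0}$-average. The delicate points are that the surjectivity of \eqref{eq:gs_map} lets us lift the graph symmetries of $\alpha$ that appear in $\eta$, and that the $\mathcal{H}'_\beta$ components of different $\beta$ are matched correctly. The argument does not need exact constants, only that each contributing term is of the form $\langle U_g\widetilde{\Psi}|\cdot\rangle$ restricted to $\mathcal{H}'_\beta$. I would first try writing $(\Psi|h\rangle$ as a sum over $\beta$ of $\langle P'_\beta U_g\widetilde{\Psi}|P'_\beta h\rangle$. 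For this I would use that $U_g$ commutes with the family of projectors $P'$ up to relabelling, $U_gP'_\alpha=P'_{g(\alpha)}U_g$, so that the whole expression recombines into $\langle U_g\widetilde{\Psi}|h\rangle$.
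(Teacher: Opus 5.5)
Your route works and uses the same ingredients as the paper's proof: orthogonality of the $\mathcal{H}'_\beta$, $U_gP'_\alpha U_g^{-1}=P'_{g(\alpha)}$, \eqref{eq:gs_inclusion}, surjectivity of \eqref{eq:gs_map}, and covariance. The difference is that you symmetrise at the opposite end.

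\textbf{How the paper orders it.} The paper first replaces $\widetilde{\Psi}$ by its $\text{GS}_{\gamma_0}$-average, which by covariance is still annihilated by every $\hat{H}_v^\dagger$. Then each $P'_\alpha\widetilde{\Psi}$ is $\text{GS}_\alpha$-invariant and satisfies $U_gP'_\alpha\widetilde{\Psi}=P'_{g(\alpha)}\widetilde{\Psi}$. So the average inside $\eta$ is trivial, and each contributing embedding just relabels $\alpha$ to some $\alpha''\in[\alpha]$. The $n(\alpha,\gamma_0)=|[\alpha]|$ equal terms cancel the prefactor, giving $(\Psi|h\rangle=\langle\widetilde{\Psi}|h\rangle$ on $\mathcal{H}_{\gamma_0}$ exactly, with no group counting. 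Strictly, this proves the claim for the state built from the averaged $\widetilde{\Psi}$. Identifying it with the state in the statement needs the easy but unstated facts $\eta\circ U_g=\eta$ and $n(g(\alpha),\gamma_0)=n(\alpha,\gamma_0)$.

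\textbf{What your order buys and costs.} You work with the given $\widetilde{\Psi}$, so you match the statement literally, and your step~3 is the paper's covariance argument moved to the end. The price is that step~2 becomes a real counting problem, and you have not carried it out.

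\textbf{The constants do matter.} What is needed is that their relative values across subgraphs are uniform. In general $\hat{H}_v$ does not preserve the individual sectors $\mathcal{H}'_\beta$, since loop holonomies create and annihilate edge charges. Hence $\langle P'_\beta U_k\widetilde{\Psi}|\hat{H}_vf\rangle$ need not vanish term by term; only $\langle U_k\widetilde{\Psi}|\hat{H}_vf\rangle$ does. You therefore need, for each $k$, the same coefficient in front of $\langle U_kP'_\alpha\widetilde{\Psi}|\cdot\rangle$ for every $\alpha\subseteq\gamma_0$. With weights other than $1/n(\alpha,\gamma_0)$, or another normalisation of $\eta$, the conclusion need not hold.

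\textbf{The counting identity.} Use the paper's normalisation of $\eta$: the average $|\text{GS}_\alpha|^{-1}\sum_{g\in\text{GS}_\alpha}U_g$ and no further class-dependent constant. Then the identity to prove is
\[
\frac{1}{n(\alpha,\gamma_0)\,|\text{GS}_\alpha|}\sum_{\alpha'\in[\alpha]}\ \sum_{g\in\text{GS}_\alpha}U_{g_{\alpha'}}U_{\tilde g}P'_\alpha=\frac{1}{|\text{GS}_{\gamma_0}|}\sum_{k\in\text{GS}_{\gamma_0}}U_kP'_\alpha .
\]
Here $g_{\alpha'}\in\text{GS}_{\gamma_0}$ maps $\alpha$ onto $\alpha'$, which exists by \eqref{eq:gs_inclusion}. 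Also $\tilde g\in\text{GS}_{\gamma_0}$ lifts $g$, which exists by surjectivity, and $U_{\tilde g}P'_\alpha$ does not depend on the choice of lift.

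The identity follows from orbit--stabiliser:
\begin{itemize}
\item $\text{GS}_{\gamma_0}$ acts transitively on $[\alpha]$.
\item The stabiliser of $\alpha$ maps onto $\text{GS}_\alpha$, with a kernel $K_\alpha$ that acts trivially on $\mathcal{H}'_\alpha$.
\item Hence $|\text{GS}_{\gamma_0}|=n(\alpha,\gamma_0)\,|\text{GS}_\alpha|\,|K_\alpha|$.
\item Each pair $(\alpha',g)$ is produced by exactly $|K_\alpha|$ elements $k=g_{\alpha'}s$ with $s|_\alpha=g$.
\end{itemize}
The right-hand side carries the $\alpha$-independent weight $|\text{GS}_{\gamma_0}|^{-1}$. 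Summing over $\alpha$ therefore gives $(\Psi|h\rangle=\langle\widetilde{\Psi}_{\mathrm{sym}}|h\rangle$ with $\widetilde{\Psi}_{\mathrm{sym}}=|\text{GS}_{\gamma_0}|^{-1}\sum_kU_k\widetilde{\Psi}$. Your step~3 then completes the proof.
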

\noindent The proof will again be given in appendix \ref{app:proof_minimization_problem}.

\subsection{Weak coupling limit and charge cutoff}
\label{subsec:weakcouplinglimit}
So far we have discussed the full $\mathrm{SU}(2)$ theory, albeit without making explicit reference to the specific gauge group at hand. The model studied in the present work is, however, a fixed-graph weak coupling reduction, in the sense of Smolin's weak coupling limit \cite{Smolin:1992wj}, in which the non-Abelian structure group is replaced by $\mathrm{U}(1)^3$ \cite{Smolin:1992wj,Bakhoda:2020fiy,Bakhoda:2022rut}. In such a case, one eliminates non-Abelian effects in the action and thus in the full holonomy-flux algebra, yielding a a technically simpler framework in which the action of the Hamiltonian constraint can be investigated. In the weak coupling limit, the regularised Hamiltonian constraint retains the local, graph-dependent action, while the non-Abelian recoupling action with a simpler Abelian structure. 
\newparagraph
In the weak coupling limit, the kinematical states are no longer spin network functions labelled by irreducible $\mathrm{SU}(2)$ representations and intertwiners, but \emph{charge network functions} whose edges carry triples of $\mathrm{U}(1)$ charges.
\newparagraph
More concretely, each oriented edge $e$ now carries a charge vector
\begin{equation}
    \vec{m}_e = (m_e^{(1)}, m_e^{(2)}, m_e^{(3)}) \in \mathbb{Z}^3.
\end{equation}
Holonomies act by shifting these integer labels, one copy at a time. The $E$-dependent geometric operators become diagonal in this charge network basis. 
A simple example is provided by the area operator associated with a 2-face $S_e$ dual to an edge $e$. Suppressing overall convention dependent prefactors, its action reads \cite{Long:2022wcLQGArea}
\begin{equation}
    \hat{A}({S_e}) \ket{\{\vec{m}\}} = \sqrt{\sum_{i=1}^3 \left(m_e^{(i)}\right)^2}\,\ket{\{\vec{m}\}},
\end{equation}
where $\ket{\{\vec{m}\}}$ denotes a basis element in the charge network basis (see Section \ref{sec:computationalframework} below), and $S_e$ is a surface transversal to $e$. Thus, in the $\mathrm{U}(1)^3$ model, the area eigenvalue of the face dual to $e$ depends only on the norm of the charge vector carried by that edge. 
\newparagraph
Similarly, at a vertex $v$, the volume entering the regularised constraint is determined by the oriented triples of incident edges and takes the form \cite{Bakhoda:2025U1cubedVolume}
\begin{equation}
    \hat{V}_v \ket{\{\vec{m}\}} = \sqrt{\left|\sum_{(e_i, e_j, e_k) \text{ at } v} \epsilon_v(e_i, e_j, e_k) \, \vec{m}_{e_i} \cdot (\vec{m}_{e_j} \times \vec{m}_{e_k})\right|} \ket{\{\vec{m}\}},
\end{equation}
We note that here, the sign factor $\epsilon_v(e_i, e_j, e_k)$ is influenced by a choice of embedding of the considered graph $\gamma$ into the spatial submanifold. More precisely, if one denotes by $\hat{t}_e(v)$ the unit tangent of the embedded edge $\iota(e)$ at the vertex $v$, oriented away from $v$, then
\begin{equation}
    \epsilon_v(e_i, e_j, e_k) = \mathrm{sgn} \,\mathrm{det}[\hat{t}_{e_i}(v), \hat{t}_{e_j}(v), \hat{t}_{e_k}(v)]
\end{equation}
with $\epsilon_v(e_i, e_j, e_k) = 0$ whenever the three tangents are linearly dependent. It follows in particular that $\epsilon_v$ is totally antisymmetric in its edge arguments and vanishes whenever the local embedding is degenerate.
\newparagraph
For a fixed abstract graph, the operator therefore depends on a choice of embedding. 
\newparagraph
For our work, it seems important that the chosen embedding will be relatively generic, to avoid effects from very special configurations. In particular, at each vertex under consideration, the tangent directions of incident edges should be such that no triple which contributes to the constraint is coplanar. In the present work, a generic embedding is obtained by assigning to each vertex $v \in V(\gamma)$ an independent random point with components that are Gaussian distributed,
\begin{equation}
    \iota(v) = X_v \in \mathbb{R}^3, \quad,\quad X_v \sim \mathcal{N}(\mu, \sigma^2),
\end{equation}
and taking edges to be the corresponding embedded segments connecting their endpoints. 
\newparagraph
There is a practical issue left for working with the model outlined so far on the computer: arbitrary large charge components imply an infinite dimensional Hilbert space. A cutoff 
\begin{equation}
    |m_e^{(i)}| \leq m_\text{max}
\end{equation}
must be introduced to render the Hilbert space finite dimensional. Since holonomies shift the charges, their action must be modified at the cutoff. A mathematically elegant and minimal way to do so is to go over to a discrete structure group $\mathbb{Z}_{2m_\text{max}+1}^3$ from the continuous U(1)$^3$ \cite{Sahlmann:2024pba}. This shift can be understood as going over to a quantum group. Charge labels then become
\begin{equation}
    m_e^{(i)} \in \{-m_\text{max},-m_\text{max}+1, \ldots, m_\text{max}\} 
\end{equation}
and holonomies act by cyclic addition \cite{Sahlmann:2024pba}. The remain unitary operators and commute among themselves. Formulas for operators depending on $E$ are completely unchanged. 

\subsection{Hamiltonian constraint on the fixed graph}
\label{subsec:hamconstraintpresentwork}

The dynamics considered in the present work is defined by a graph-preserving regularisation of the Hamilton constraint on a fixed graph $\gamma$. More precisely, the vertex Hamiltonian $\hat{H}_v$ is constructed from holonomies along edges adjacent to $v$, the volume operator $\hat{V}_v$ and holonomies around loops based at $v$. Its action is given by \cite{Thiemann:1996aw}
\begin{equation}
    \hat{H}_v = c_v \sum_{(e_i, e_j, e_k) \text{ at } v} \varepsilon^{ijk} \; \mathrm{tr}\left[ \left(\hat{h}_{\alpha_{ij}} - \hat{h}_{\alpha_{ij}}^\dagger\right) \hat{h}_{e_k}^\dagger [\hat{h}_{e_k}, \hat{V}_v] \right].
\end{equation}
In this expression, $(e_i, e_j, e_k)$ denotes an ordered triple of edges meeting at $v$, while $\alpha_{ij}$ denotes the loop based at $v$ which is outgoing along $e_i$ and incoming along $e_j$. The $\varepsilon^{ijk}$ denotes the standard Levi-Civita symbol and the prefactor $c_v$ is, in general, valence dependent. Since the graph considered here is symmetric, the same value is obtained for all vertices and thus, this factor may be dropped without loss of generality.
\newparagraph
The Hamiltonian is made graph preserving by employing \emph{minimal loop\footnote{Here, we mean a loop $\alpha$ that belongs to the set of minimal length cycles of the cycle basis of the cycle space of the considered graph.} holonomies} $\hat{h}_{\alpha_{ij}}$. To make this explicit, consider the standard Thiemann regularisation \cite{Thiemann:1996aw}, where the curvature at a vertex $v$ is approximated by the holonomy around a small auxiliary loop
\begin{equation}
    \alpha_{ij} (\Delta) = s_i(\Delta) \circ a_{ij} (\Delta) \circ s_j(\Delta)^{-1},
\end{equation}
where $\Delta$ is the regulating tetrahedron attached to $v$, $s_i(\Delta)$ and $s_j(\Delta)$ are short segments of edges at $v$ and $a_{ij}(\Delta)$ is an additional arc joining their endpoints. Since this arc is not, in general, already contained in the underlying graph $\gamma$, the corresponding holonomy operator maps a state over $\gamma$ to a state over the enlarged graph $\gamma \cup \alpha_{ij}(\Delta)$. The action is therefore genuinely graph changing. It introduces new segments, creates new vertices at the endpoints of the auxiliary arc, and reroutes the original edges through the newly attached loop. A second action of the Hamiltonian would then, a priori, also act at these newly created vertices. In Thiemann's construction, this potential obstruction is controlled by the choice of routing of the auxiliary arcs. Namely, the new vertices are arranged so that all incident tangents lie in a common plane, equivalently such that only two independent tangent directions are present there, causing the volume operator to therefore vanish at such vertices and consequently, the non-trivial action of the Hamiltonian remains effectively supported at the original vertices of the graph.
\newparagraph
In the graph preserving setting adopted here, one replaces the auxiliary arc $a_{ij}$ of Thiemann's construction by a path intrinsic to the fixed graph $\gamma$, chosen such that, together with the ordered pair of edges $(e_i, e_j)$ at $v$, it defines a minimal loop $\alpha_{ij}$ based at $v$. In the case of a complete $K_5$ graph, the simplest example is provided by a triangular cycle of length three.
\newparagraph
In this graph preserving, weak coupling $\mathrm{U}(1)^3$ theory, the action of such a constraint is relatively straightforward. The trace decomposes into three Abelian copies here labelled by $n = 1, 2, 3$. A minimal loop holonomy acts by shifting the charge labels along the edges belonging to the chosen loop, while the commutator with the segment holonomy produces a finite difference of the vertex volume. Let $\delta \alpha_{v \ell}^{(n)}$ denote the modular charge shift associated with a loop $\ell \in L_v(e_i, e_j)$ in the $n$\textsuperscript{th} gauge copy, where $L_v(e_i, e_j)$ is the set of minimal loops of the graph that pass through $v$, leaving $v$ along $e_1$ and returning through $e_2$. Further, let $\delta^{(n)}_{e_k}$ denote a modular unit shift of the $n$\textsuperscript{th} charge on the segment edge $e_k$. The Hamiltonian can be written in the charge network basis in the form
\begin{align}
\label{eq:hv_wcl_discrete}
    \hat{H}_v \ket{\{\vec{m}\}} = \frac{1}{T_v} \sum_{(e_i, e_j, e_k) \text{ at } v} & \frac{\varepsilon^{ijk}}{|L_v(e_i, e_j)|} \sum_{\ell \in L_v(e_i, e_j)}  \rho(\ell; \, e_i, e_j) \times \nonumber \\
    & \times \sum_{k=1}^3 \Delta \hat{V}_v^{(n)} (e_k; \vec{m}) \left(\ket{\{\vec{m} + \delta \alpha_{v\ell}^{(n)}\}} - \ket{\{\vec{m} - \delta \alpha_{v\ell}^{(n)}\}}\right),
\end{align}
where 
\begin{equation}
    \Delta \hat{V}_v^{(n)}(e_k; \vec{m}) := \hat{V}_v \ket{\{\vec{m}\}} - \hat{V}_v \ket{\{\vec{m} - \delta_{e_k}^{(n)}\}}.
\end{equation}
The factor $T_v$ is a combinatorial normalisation associated with the vertex sum, namely the total number of contributing triplets entering the chosen regularisation at $v$. The sign $\rho(\ell;\, e_i, e_j) \in \{\pm 1\}$ encodes an orientation convention used to identify the loop contribution associated with $(e_i, e_j)$. Namely, following the Thiemann construction, the loop must be outgoing from $e_1$ and at the vertex $v$ and incoming through $e_2$ when it closes back at the same vertex. A minimal loop of the oriented chosen graph need not be presented in that canonical order from the outset. Whenever the loop has to be reordered in order to satisfy this condition, the corresponding term acquires a multiplicative minus sign.
\newparagraph
The constraint in \eqref{eq:hv_wcl_discrete} is the graph preserving constraint realised in this work. The operator is local in the sense that, for each vertex $v$, it only involves triples of edges incident at $v$, the minimal loops through that same vertex, and the corresponding volume differences computed at $v$. Finally, the object that is minimised numerically is the quadratic operator built from these vertex contributions,
\begin{equation}
    \hat{\mathcal{Q}}_{\hat{H}} = \sum_{v \in V(\gamma_0)} \hat{H}_v \hat{H}_v^\dagger,
\end{equation}
and, depending on the operator ordering consideration, one may analogously study the variant built from $\hat{H}_v^\dagger \hat{H}_v$.

\section{Computational framework}
\label{sec:computationalframework}
Unlike previous work, this works focuses on a slightly larger graph, namely the complete $K_5$ graph. This choice, as previously mentioned, is motivated by several reasons. First, and perhaps most central, is that the $K_5$ graph is the boundary graph of a 4-simplex, and thus the canonical minimal graph underlying 4-dimensional simplicial boundary data. Secondary to that are two consistency reasons: (i) it contains a total of 5 vertices, all of which are 4-valent and thus admit a non-trivial action by the Thiemann constraint, and (ii) it is highly symmetric which strongly reduces and graph-dependent artefacts. Both of the points mentioned allow for a controlled study of genuine operator ordering and dynamics driven effects and the ability to cleanly eliminate purely combinatorial counterparts.
\begin{figure}[h]
    \centering
    \includegraphics[width=0.45\linewidth]{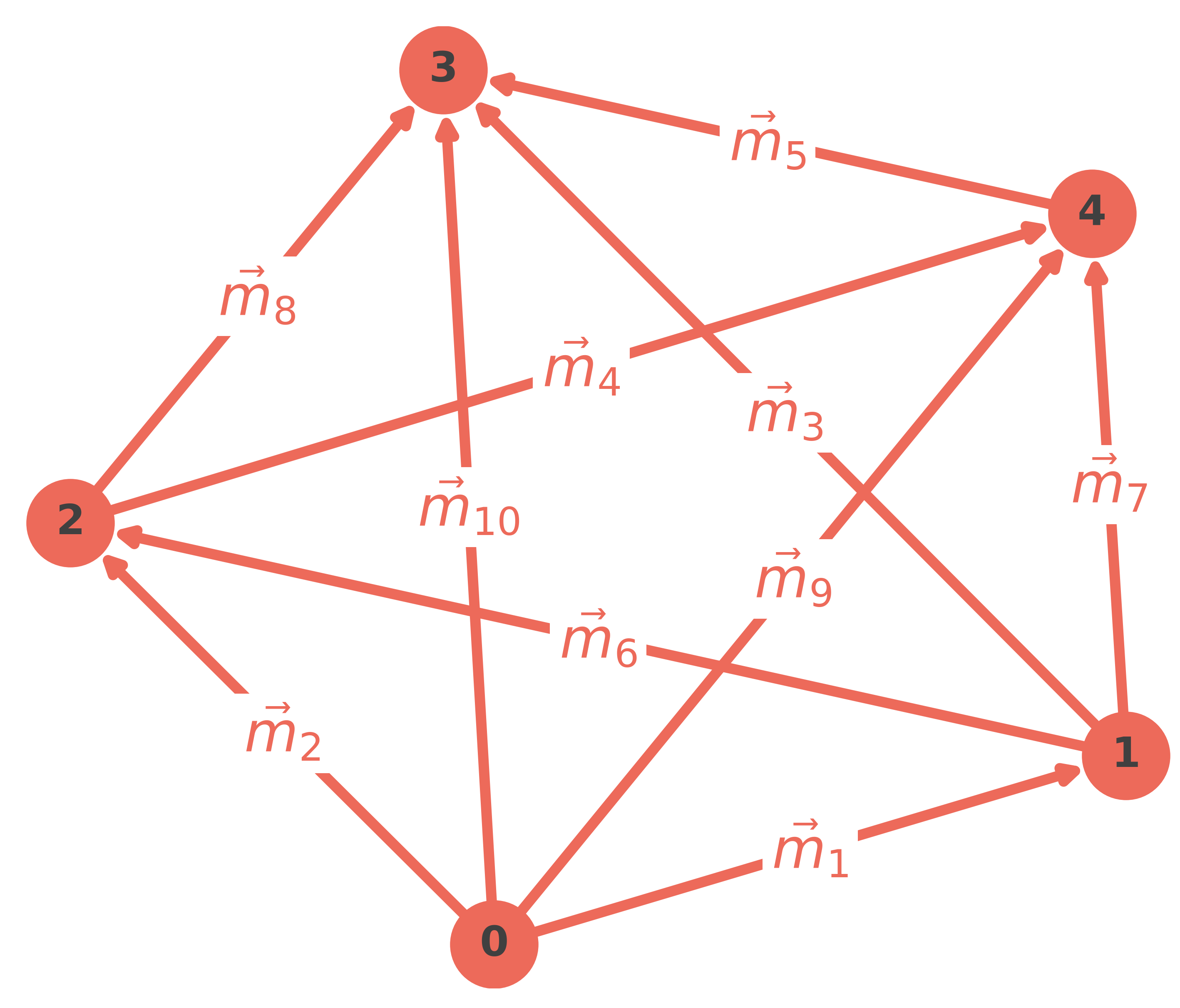}
    \caption{A planar representation of the oriented $K_5$ graph used in this work. Each edge carries a charge vector ($\mathrm{U}(1)^3$ representation label) $\vec{m}_i$.}
    \label{fig:planark5graph}
\end{figure}
\newparagraph
Figure \ref{fig:planark5graph} presents a \emph{planar} illustration of the oriented $K_5$ graph, which we will denote by $\gamma$, used throughout this work. As in \cite{Sahlmann:2024pba,Sahlmann:2024kat}, we perform computations in the dual representation \footnote{the dual graph $\Tilde{\gamma}$ composed of assigned to each edge in $\gamma$ a vertex in $\Tilde{\gamma}$ and then connecting them}. This choice is purely for facilitating the computational aspect and does not affect any results.
\newparagraph
A graph-theoretic aspect which is physically relevant in the present work is the spatial embedding of the non-planar $K_5$ graph. Unlike earlier studies restricted to planar graphs, this work uses non-planar graphs and thus the embedding data becomes essential as it fixes the orientation sign
\begin{equation}
    \epsilon_v(e_i, e_j, e_k) \in \{\pm 1, 0\},
\end{equation}
associated with any ordered triple of incident edges $(e_i, e_j, e_k)$ at a vertex $v \in V(\gamma)$, where $V(\gamma)$ is the set of all vertices of $\gamma$. As discussed in Section \ref{subsec:weakcouplinglimit}, $\epsilon_v(e_i, e_j, e_k)$ is determined by the handedness of the corresponding tangent vectors at $v$ (equivalently, by the sign of the scalar triple product), and it enters the definition of the vertex volume operator $\hat{V}_v$ and thus, this embedding-dependent sign feeds into the Thiemann regularised constraint operator used in our work.
\newparagraph
To avoid introducing ad hoc preferences while still fixing the signs unambiguously, we select a deterministic generic embedding that yields a correspondingly generic pattern of $\mathrm{sgn}$-factors. Starting from initial Cartesian vertex coordinates $(x^1, x^2, x^3)$, we apply a coordinate-wise map
\begin{equation}
    x^i \longmapsto \omega(x^i),
\end{equation}
where $\omega$ deterministically returns a pseudorandom draw from a Gaussian distribution with large variance (and a mean chosen to keep vertices well separated). This construction produces, with probability one, a non-degenerate embedding in $\mathbb{R}^3$. Namely, no three edge tangents at a vertex are coplanar and the scalar triple products are thus nonzero and consequently $\epsilon_v(e_i, e_j, e_k) = \pm 1)$ is well defined for all triplets. In this sense, the embedding is generic (free of special symmetries or degeneracies) without committing to a hand-crafted geometric realisation of the graph.
\newparagraph
As in \cite{Sahlmann:2024pba,Sahlmann:2024kat}, each edge $e \in E(\gamma)$, where $E(\gamma)$ is the set of all edges of $\gamma$, carries a local Hilbert space 
\begin{equation}
    \mathscr{H}_e = \bigotimes_{i = 1}^{3} \mathscr{H}_e^{(1)},
\end{equation}
where here, $\mathscr{H}_e^{(1)}$ denotes a local Hilbert space with basis elements labelled by $\mathrm{U}(1)$ representation labels (charges) and thus, $\mathscr{H}_e$ contains basis elements which carry $\mathrm{U}(1)^3$ labels (charge vectors). The kinematical Hilbert space is thus simply
\begin{equation}
    \mathscr{H}_\mathrm{kin} = \bigotimes_{e \in E(\gamma)} \mathscr{H}_e.
\end{equation}
Similar to previous work \cite{Sahlmann:2024pba,Sahlmann:2024kat}, we truncate the allowed representation labels by specifying a cutoff hereby denoted by $m_\mathrm{max}$ such that any charge $m^i$ in any given charge vector $\vec{m}_k$ is allowed to take values in a symmetric set of charges which we denote by $M := [-m_\mathrm{max}, \cdots, 0, \cdots, m_\mathrm{max}]$. Consequently, the kinematical Hilbert space has dimensions $\dim\mathscr{H}_\mathrm{kin} = (2m_\mathrm{max} + 1)^{3|E(\gamma)|}$.
\newparagraph
Computationally, basis states in $\mathscr{H}_\mathrm{kin}$ are stored using a gauge-strided layout. Concretely, for a generic basis element
\begin{equation}
    \ket{\{\vec{m}\}} := \ket{\vec{m}_1 \vec{m}_2 \cdots \vec{m}_{10}} = \ket{\vec{m}_1}\otimes \ket{\vec{m}_2}\otimes \cdots \otimes \ket{\vec{m}_{10}},
\end{equation}
with $\vec{m}_k=(m_k^{(1)},m_k^{(2)},m_k^{(3)})\in \mathrm{U}(1)^3$, we represent the associated charge data as three contiguous blocks obtained by grouping the components by gauge factor rather than by edge
\begin{equation}
    \big(m_1^{(1)},\ldots,m_{10}^{(1)} \ \big|\  m_1^{(2)},\ldots,m_{10}^{(2)} \ \big|\  m_1^{(3)},\ldots,m_{10}^{(3)}\big).
\end{equation}
Equivalently, the charges are stored as a length-$3|E(\gamma)|$ integer array with stride $|E(\gamma)|$ between consecutive $\mathrm{U}(1)$ components belonging to the same edge. This layout is convenient as many of the elementary operators act component-wise in the $\mathrm{U}(1)^3$ factors. Importantly, this choice is purely an implementation detail, and therefore has no bearing on any physical statements or numerical results.

\subsection{Network architecture}
\label{subsec:networkarchitecture}
Traditional, off-the-shelf neural network Ansätze (e.g. dense Restricted Boltzmann Machine (RBM)-like architectures, generic fully-connected multi-layer perceptrons (MLPs), or attention based models applied naively to flattened edges data) proved inadequate in our setting. This motivated the design of a physics-informed neural quantum state (PINN, in the broad sense of physics-informed inductive bias) whose architecture hard-wires the graph locality and operator support of the model we consider in this work.
\newparagraph
For what follows, we let $\gamma = (E(\gamma), V(\gamma))$ be an arbitrary but fixed oriented graph and denote by $n_e := |E(\gamma)|$ and $n_v := |V(\gamma)|$. For the sake of generality, each edge $e$ carries a $\mathrm{U}(1)^d$ representation label (in our case, $d = 3$) 
\begin{equation}
    \vec{m}_e = (m_e^{(1)}, \ldots, m_e^{(d)}) \in M^d \quad,\quad M = [-m_\mathrm{max}, \ldots, 0 \ldots m_\mathrm{max}] \subset \mathbb{Z}.
\end{equation}
We will denote a full configuration (e.g. basis element over $\gamma$) by $\sigma = \{\vec{m}_e\}_{e \in E(\gamma)} \in \mathscr{H}_\mathrm{kin}$. Computationally, we regard $\vec{m}_e$ as a real vector which we denote by $\vec{x}_e$. We encode the graph locality via the (undirected) incidence mask
\begin{equation}
    A = \in \{0, 1\}^{n_e \times n_v} \quad,\quad A_{ve} = \begin{cases}
        1, \quad e \text{ is incident at } v, \\
        0, \quad \text{otherwise}.
    \end{cases}
\end{equation}
Orientation is accounted for in the operators themselves, the ansatz only needs to incidence structure to respect vertex locality. The output of the network is understood as the complex log-amplitude
\begin{equation}
    \log \psi_\theta(\sigma) = \log|\psi_\theta(\sigma)| + \mathrm{i} \varphi(\sigma) \in \mathbb{C}.
\end{equation}
Throughout this work, we use the sigmoid linear unit (SiLU) activation
\begin{equation}
    \rho(t) = t \cdot \mathrm{sigmoid}(t) = \frac{t}{1 + e^{-t}},
\end{equation}
applied component-wise to vectors. The network is embodies a graph neural network (GNN)-like construction, and thus is constructed as a hierarchy of shared local maps combined with permutation-invariant aggregations, constructed using edge-wise, vertex-wise and global hierarchies which themselves are stacks of constant width MLPs with linear readout vectors. We denote by $F_e$ and $L_e$ the width and the depth of the edge-wise MLP tower. Likewise, $F_v, L_v$ and $F_g, L_g$ denote the vertex-wise and global MLP towers' width and depth respectively. 
\newparagraph
Inputs to the network are first passed through the edge tower, which we denote by $f_\mathrm{edge}$ and is shared over every edge. Thus, let $\vec{h}^{(0)} = \vec{x}_e \in \mathbb{R}^d$ denote an initial input, then for $l = 1, \ldots, L_e$
\begin{equation}
    \vec{h}_e^{(l)} = \rho\left(W_e^{(l)} \vec{h}_e^{(l-1)} + \vec{b}_e^{(l)}\right),
\end{equation}
with $W_e^{(1)} \in \mathbb{R}^{F_e \times d}$, $W_e^{(l)} \in \mathbb{R}^{F_e \times F_e}$ for $l \geq 2$, and the bias vector is $\vec{b}_e^{(l)} \in \mathbb{R}^{F_e}$. Thus, each charge vector $\vec{m}_e$ is embedded into a learned feature vector which we denote by $\vec{z}_e := \vec{h}_e^{(l)} \in \mathbb{R}^{F_e}$.
\newparagraph
The operators considered in this model (i.e. the volume or Thiemann constraint operators) act on small neighbourhoods around vertices and only couple incident edges. Therefore, for each $v \in V(\gamma)$, we aggregate incident edge embeddings by
\begin{equation}
    \vec{s}_v = \sum_{e \in E(v)} A_{ve} \vec{z}_e \in \mathbb{R}^{F_e}.
\end{equation}
This is aggregation makes the representation invariant under any reordering of edges within a vertex neighbourhood. For the vertex tower of MLPs, which we denote by $f_\mathrm{vertex}$, we initialise $\vec{u}_v^{(0)} := \vec{s}_v$. Then, for $l = 1, \ldots, L_v$, apply
\begin{equation}
    \vec{u}_v^{(l)} = \rho\left(W_v^{(l)}\vec{u}_v^{(l-1)} + \vec{b}_v^{(l)}\right),
\end{equation}
with once again the weights and biases being $W_v^{(1)} \in \mathbb{R}^{F_v \times F_e}$, $W_v^{(l)} \in \mathbb{R}^{F_v \times F_v}$ for $l \geq 2$ and $b_v^{(l)} \in \mathbb{R}^{F_v}$. The final vertex embedding is denoted by $\vec{q}_v := \vec{u}_v^{(L_v)} \in \mathbb{R}^{F_v}$.
\newparagraph
Finally, the vertex features are pooled over the full graph using another sum
\begin{equation}
    \vec{g} = \sum_{v \in V(\gamma)} \vec{q}_v \in \mathbb{R}^{F_v},
\end{equation}
which enforces invariance under relabelling of vertices and allows the network to only depend on the abstract graph structure and the configuration $\sigma$. The last stack of MLPs denote the global tower $f_\mathrm{global}$. Initialise $\vec{r}^{(0)} := \vec{g}$, then for $l = 1, \ldots, L_g$ apply
\begin{equation}
    \vec{r}_v^{(l)} = \rho \left(W_g^{(l)} \vec{r}^{(l-1)} + \vec{b}_g^{(l)}\right).
\end{equation}
Similarly, $W_g^{(1)} \in \mathbb{R}^{F_g \times F_v}$, $W_g^{(l)} \in \mathbb{R}^{F_g \times F_g}$ for $l \geq 2$ and $\vec{b}_g^{(l)} \in \mathbb{R}^{F_g}$. The readout of such global tower is denoted by $\vec{y} := r^{(l)} \in \mathbb{R}^{F_g}$. The amplitude head is the bias free
\begin{equation}
    a(\sigma) = \vec{w}_A^\top \vec{y} \quad,\quad \vec{w}_A \in \mathbb{R}^{F_g},
\end{equation}
and likewise, the phase branch head is
\begin{equation}
    \phi(\sigma) = \vec{w}_\varphi^\top \vec{y} \quad,\quad \vec{w}_\varphi \in \mathbb{R}^{F_g}.
\end{equation}
We note that, for reasons which will become apparent in subsequent sections, we add an optional smooth bias discouraging the network from a mode collapse to any all-zero configuration $\sigma$ using
\begin{equation}
    a(\sigma) \mapsto a(\sigma) + \lambda \log(1 + \|\sigma\|_2^2) \quad, \quad \|\sigma\|_2^2 := \sum_{e \in E(\gamma)} \sum_{k = 1}^d (m_e^{(k)})^2,
\end{equation}
with $\lambda \geq 0$, which acts as a gentle, nonsingular regulariser. Thus, for any input configuration
\begin{align}
    \log \psi_\theta(\sigma) = \vec{w}_A^\top f_\mathrm{global} \left(\sum_{v \in V(\gamma)} f_\mathrm{vertex}\left(\sum_{e \in E(v)} A_{ve} f_\mathrm{edge}(\vec{x}_e)\right) \right) + \lambda \log(1 + \|\sigma\|_2^2) \nonumber\\
    + \mathrm{i} \vec{w}_\varphi^\top f_\mathrm{global} \left(\sum_{v \in V(\gamma)}f_\mathrm{vertex} \left(\sum_{e \in E(v)} A_{ve} f_\mathrm{edge}(\vec{x}_e)\right)\right),
\end{align}
where $\vec{x}_e = (m_e^{(1)}, \ldots m_e^{(d)})$. 
\newparagraph
This, rather complicated, architecture exploits the graph based nature of our model as well as the relevant constraint action, leading to better performance compared to generic architectures. For example, the dominant cost and structure in the variational problem come from evaluating graph-local operators that couple only incident edges at every vertex. By constructing features through sums of the form $\sum_{e \in E(v)}$  and $\sum_{v \in V(\gamma)}$, the network's internal representation is aligned with exactly the neighbourhoods on which the physics acts, dramatically reducing the burden on the optimiser compared to a standard dense ansatz that must \emph{learn} locality from data.
\newparagraph
Additionally, those sums over incident edges and vertices make the representation invariant to arbitrary reordering of edges within $E(v)$ and the vertex indexing. Since such orderings are pure bookkeeping, enforcing these symmetries by construction reduces variance and improves sample efficiency. The model cannot waste capacity learning an ordering convention. Note that both $f_\mathrm{edge}$ and $f_\mathrm{vertex}$ are shared across all edges and vertices. This in turn means that the number of trainable parameters is essentially independent of $|E(\gamma)|$ and $|V(\gamma)|$ (up to the final pooling dimension). This is crucial for stability. In this work, as will be shown in what follows, we explore very large Hilbert spaces, with dimensions growing as $(2m_\mathrm{max} + 1)^{d|E(\gamma)|}$. The ansatz, however, remains lightweight but expressive.
\newparagraph
Using a smooth nonlinearity and a \emph{shallow} hierarchy tends to give well behaved gradients in VMC settings \cite{Sherif:2025hfl}, where stochastic estimates and expensive operator evaluations may already contribute significant noise. Deeper, or more generic, architectures often amplify this noise without commensurate representational benefit. Thus, this minimal graph-like neural network is tailored to fit the specific dynamics of this model, expressive exactly where the constraints demand it to be so (i.e. vertex neighbourhoods), invariant under irrelevant relabelling, and small enough that training is dominated by the operator evaluation rather than by an overhead caused by the forward passes of the network itself.

\section{Results}
\label{sec:results}
The 4-dimensional model, including all relevant operators, described in Section \ref{sec:theoreticalframework} was computationally implemented using \neuralqx \cite{Sherif:2026neuralqx}. In what follows, we discuss several results obtained from analysing this model. Our main interest remains in investigating operator ordering in the quadratic constraint and thus, in Section \ref{subsec:solsfamily}, we present the two solution classes obtained and characterise them in terms of their normalisability, flatness and geometric observables, long range correlations and give a qualitative analysis of the geometry they represent in Section \ref{subsec:characterisation}. We conclude with investigating the third class of quasi-solutions for the symmetrically ordered constraint in Section \ref{subsec:symmordsols}. 

\subsection{Solution families of standard and alternative constraint ordering}
\label{subsec:solsfamily}
The main question of this work is to investigate the simple factor ordering in the quadratic constraint by comparing the generic variational solutions one obtains. Namely, for both the standard and alternative ordered quadratic constraints
\begin{equation}
    \hat{\mathcal{Q}}_{\hat{H}} = \sum_{v \in V(\gamma)} \hat{H}_v \hat{H}_v^\dagger, \quad,\quad \hat{\mathcal{Q}}_{\hat{H}^\dagger} = \sum_{v \in V(\gamma)} \hat{H}_v^\dagger \hat{H}_v,
\end{equation}
respectively, we obtain and characterise solutions which lie close to their respective kernels. In all what follows, we work in the $\mathcal{U}_q(1)^3$ gauge invariant subspace of the kinematical Hilbert space, and thus the Gauß constraint is satisfied identically. Additionally, in this setting the diffeomorphism constraint action can be understood as the group averaging over certain graph symmetries. Consequently, once the variational solutions are obtained, a projected variational wavefunction
\begin{equation}
    \psi_\mathrm{diff}(\vec{m}) = \left(\hat{P}_\mathrm{diff}\, \psi\right)(\vec{m}) = \frac{1}{|G|} \sum_{g \in G} \psi(g^{-1} \cdot \vec{m}),
\end{equation}
where $\vec{m}$ is an input basis element, can be constructed either pre- or post-simulation. Unless stated otherwise, all results shown are with respect to the \emph{non-projected} state.
\begin{figure}[htbp]
    \centering

    \begin{subfigure}[t]{0.49\textwidth}
        \centering
        \includegraphics[width=\textwidth]{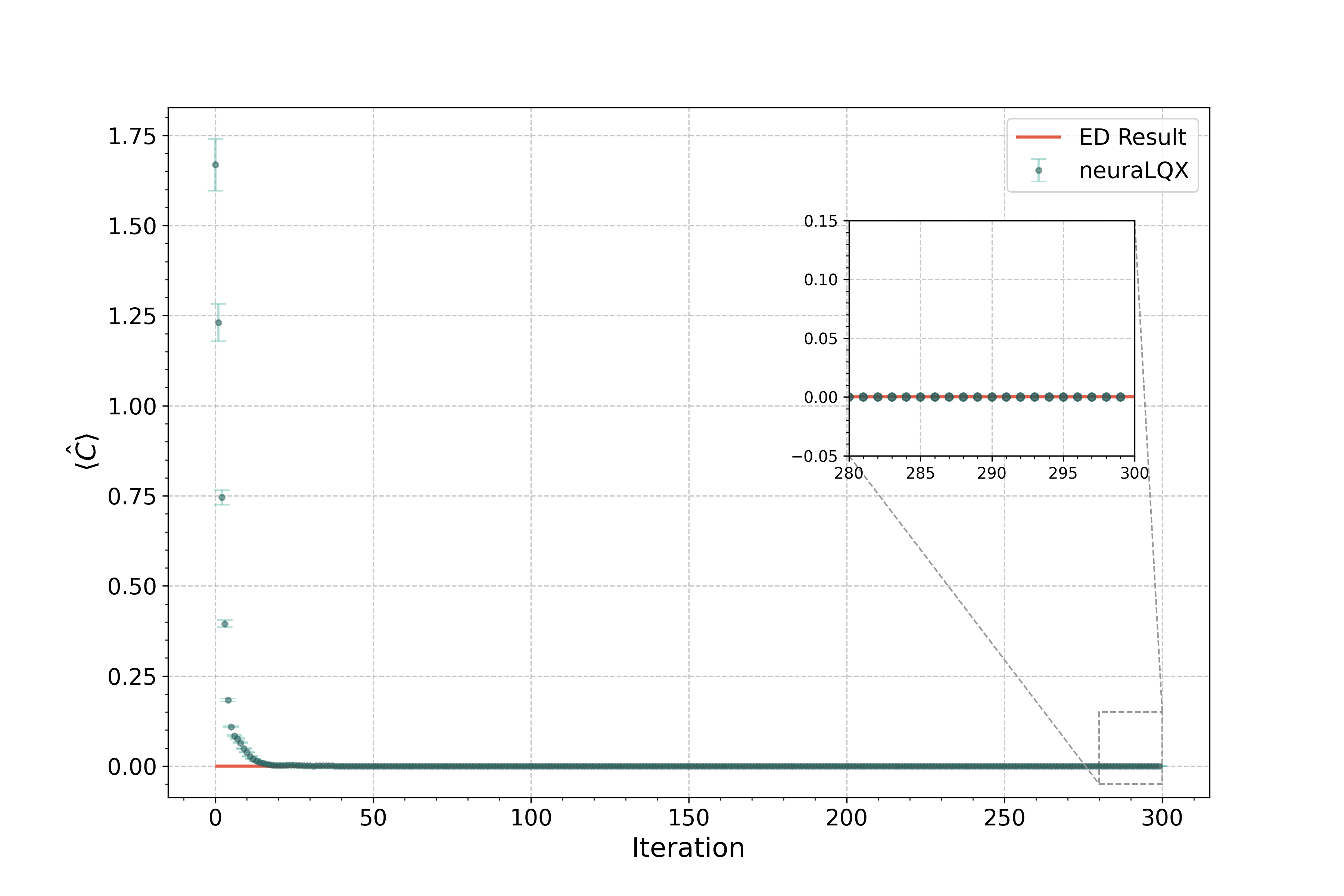}
        \caption{$\hat{\mathcal{Q}}_{\hat{H}} \,\,, \,\, m_\mathrm{max} = 1$}
        \label{fig:mincurvesstdalt:a}
    \end{subfigure}
    \hfill
    \begin{subfigure}[t]{0.49\textwidth}
        \centering
        \includegraphics[width=\textwidth]{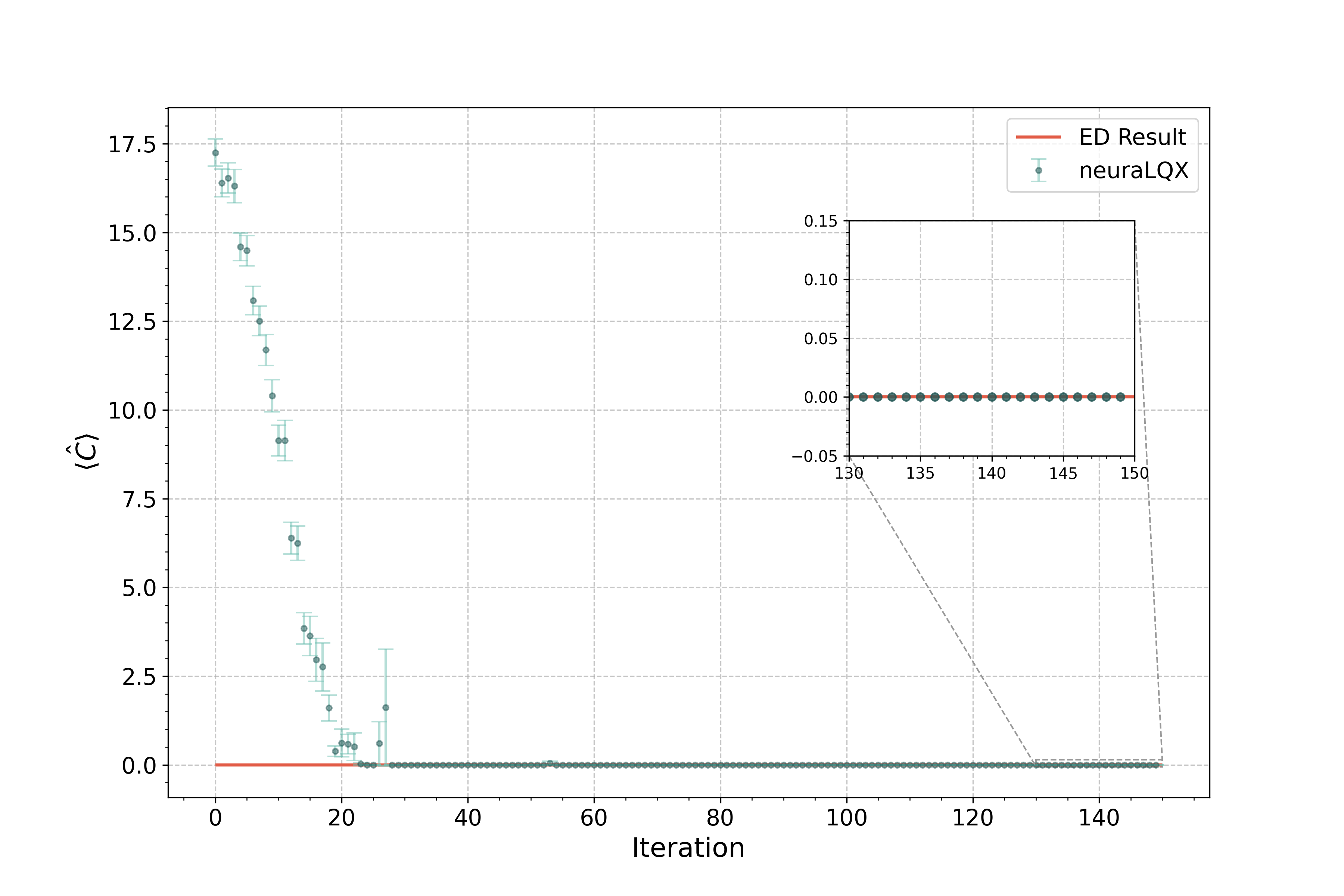}
        \caption{$\hat{\mathcal{Q}}_{\hat{H}^\dagger} \,\,, \,\, m_\mathrm{max} = 1$}
        \label{fig:mincurvesstdalt:b}
    \end{subfigure}

    \vspace{0.5cm}

    \begin{subfigure}[t]{0.49\textwidth}
        \centering
        \includegraphics[width=\textwidth]{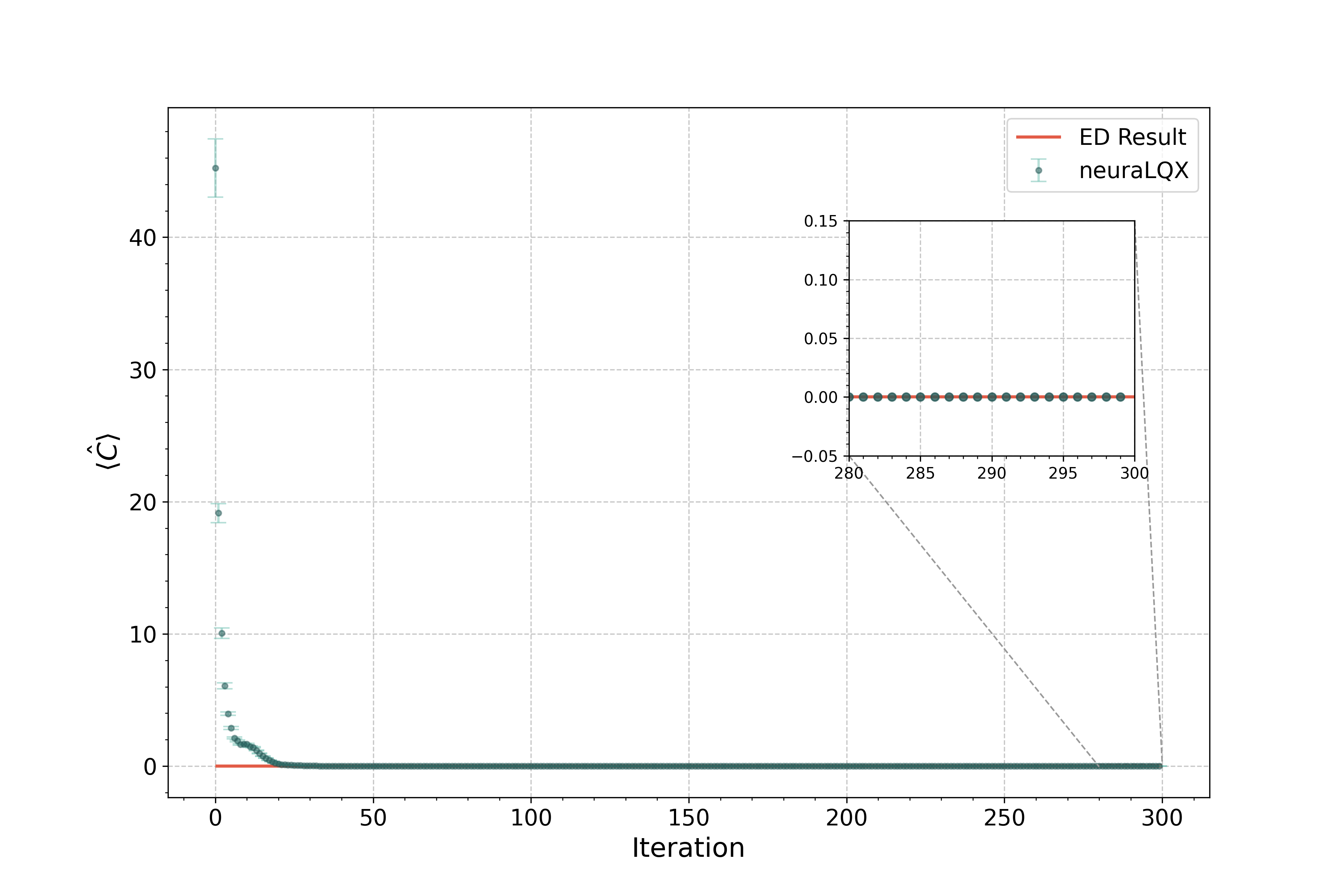}
        \caption{$\hat{\mathcal{Q}}_{\hat{H}} \,\,, \,\, m_\mathrm{max} = 5$}
        \label{fig:mincurvesstdalt:c}
    \end{subfigure}
    \hfill
    \begin{subfigure}[t]{0.49\textwidth}
        \centering
        \includegraphics[width=\textwidth]{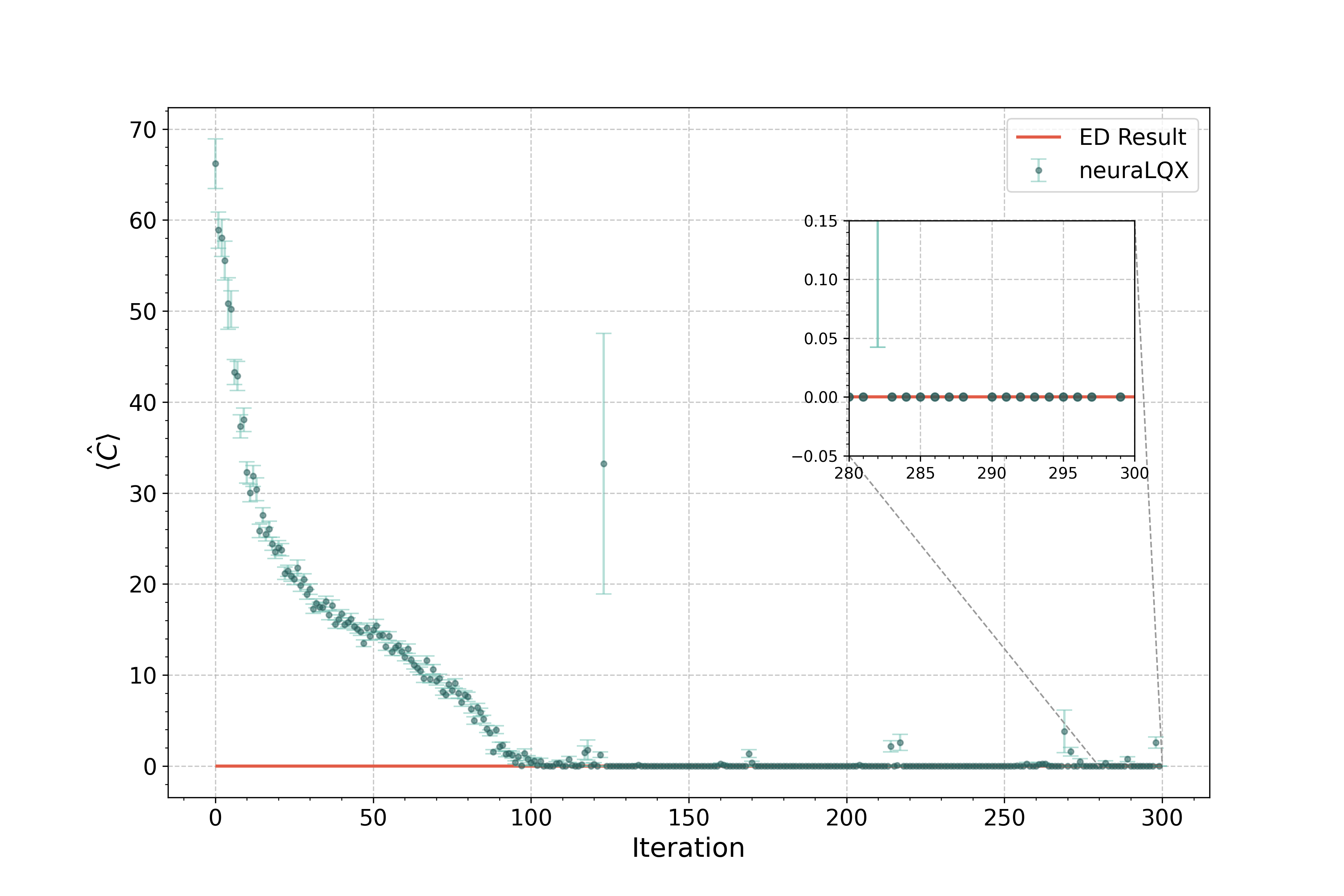}
        \caption{$\hat{\mathcal{Q}}_{\hat{H}^\dagger} \,\,, \,\, m_\mathrm{max} = 5$}
        \label{fig:mincurvesstdalt:d}
    \end{subfigure}

    \caption{
        Variational optimisation curves for solving the quadratic constraint $\hat{\mathcal{Q}_1}$ at $m_\mathrm{max} = 1, 5$ in (a) and (c) and similarly for $\hat{\mathcal{Q}_2}$ at $m_\mathrm{max} = 1, 5$ in (b) and (d). The results are representative of simulations conducted across 75 different seeds each.
    }
    \label{fig:mincurvesstdalt}
\end{figure}
\newparagraph
Figure \ref{fig:mincurvesstdalt} shows the variational optimisation curves for solving the standard ordered constraint $\hat{\mathcal{Q}}_{\hat{H}}$ at cutoffs of $m_\mathrm{max}=1, 5$ in Figures \ref{fig:mincurvesstdalt:a} and \ref{fig:mincurvesstdalt:c} respectively and similarly for the alternative ordered constraint $\hat{\mathcal{Q}}_{\hat{H}^\dagger}$ in Figures \ref{fig:mincurvesstdalt:b} and \ref{fig:mincurvesstdalt:d} respectively. The shown results are representative results of 75 different simulations for each case. 
\newparagraph
For the simulations shown in the Figure \ref{fig:mincurvesstdalt}, the neural network used in the variational ansatz and the optimisation parameters used in the work were as follows
\begin{table}[h]
\small
    \centering
    \caption{\small Simulation parameters for simulations solving the standard and alternative ordered constraints at $m_\mathrm{max} = 1$ and 5.}
    \begin{tabular}{c|c|cccccccc}
        \toprule[1.2pt]
        Constraint &
        $m_\mathrm{max}$ &
        $L_g$ &
        $L_v$ &
        $L_e$ &
        $F_g$ &
        $F_v$ &
        $F_e$ &
        $N_\mathrm{MC}$ &
        $n_\mathrm{chains}$ 
        \\
        \midrule

        \multirow{2}{*}{$\hat{\mathcal{Q}}_{\hat{H}}$}
           & 1 & 1 & 2 & 2 & 32 & 32 & 16 & 1350 & 10  \\
           & 5 & 3 & 2 & 2 & 32 & 32 & 16 & 2700 & 10  \\

        \midrule

        \multirow{2}{*}{$\hat{\mathcal{Q}}_{\hat{H}^\dagger}$}
           & 1 & 1 & 2 & 2 & 32 & 32 & 16 & 1350 & 10  \\
           & 5 & 3 & 2 & 2 & 32 & 32 & 16 & 2700 & 10  \\

        \bottomrule[1.2pt]
    \end{tabular}
    \label{tab:stdaltsimsetup}
\end{table}
\\\noindent
Additionally, all simulations were conducted using the adaptive momentum (Adam) optimiser with an annealed learning rate, starting at a value of $0.005$ and annealed linearly to a value of $0.0001$ over the course of the entire simulation. As also shown in Table \ref{tab:stdaltsimsetup}, the sizes of the networks used in such cutoffs is vanishingly small compared to the dimensions of the respective Hilbert spaces. This can be seen more explicitly in Table \ref{tab:stdaltresults}. 
\begin{table}[h]
\small
    \centering
    \caption{\small Representative simulation results for solving both standard and alternative ordered constraints at $m_\mathrm{max} = 1$ and 5. The compression rate is computed as $|\mathrm{Params}(\Psi_\mathrm{NQS})| / \dim\mathscr{H}_\mathrm{kin}^\mathrm{GI}$ in percentage.}
    \begin{tabular}{c|c|cc}
        \toprule[1.2pt]
        Constraint &
        $m_\mathrm{max}$ &
        $\langle \hat{\mathcal{Q}} \rangle_\mathrm{NQS}$ &
        Compression rate (\%)
        \\
        \midrule

        \multirow{2}{*}{$\hat{\mathcal{Q}}_{\hat{H}}$}
           & 1 & $(8.6 \pm 4.6)\times 10^{-6}$ & $7.8881 \times 10^{-4}$  \\
           & 5 & $(8.25 \pm 0.34)\times 10^{-5}$ & $3.1176 \times 10^{-26}$  \\

        \midrule

        \multirow{2}{*}{$\hat{\mathcal{Q}}_{\hat{H}^\dagger}$}
           & 1 & $(2.066 \pm 0.048)\times 10^{-5}$ & $7.8881 \times 10^{-4}$  \\
           & 5 & $0.0 \pm 0.0$ & $3.1176 \times 10^{-26}$  \\

        \bottomrule[1.2pt]
    \end{tabular}
    \label{tab:stdaltresults}
\end{table}
\\\noindent
As shown in Table \ref{tab:stdaltresults}, throughout the cutoffs of $m_\mathrm{max} = 1, \ldots, 5$, despite the vanishingly small number of network parameters in all cases as shown in the compression rate\footnote{Here, the compression rate denotes the rate at which the variational optimisation problem (i.e. the number of free parameters in the model which have to be fine tuned to approximate a solution, which has a worst case scenario of $\dim\mathscr{H}_\mathrm{kin}^\mathrm{GI}$) is compressed, and is computed using $|\mathrm{Params}(\Psi_\mathrm{NQS})| / \dim\mathscr{H}_\mathrm{kin}^\mathrm{GI}$.}, approximate variational near-kernel states to both the standard and alternative constraints were obtained whereby the final constraints expectation values are relatively small. The shown solutions are representative of several simulations conducted with different seeds. Note that for $m_\mathrm{max} = 5$ for the case of $\hat{\mathcal{Q}}_{\hat{H}^\dagger}$, unlike the shown shown, identically zero solution, the \emph{typical} solution expectation value is in the order of magnitude of $10^{-1}$. 
\newparagraph
We also note that solutions at large cutoffs of $m_\mathrm{max} = 50$ as well as $m_\mathrm{max} = 100$ for the case of, especially, $\hat{\mathcal{Q}}_{\hat{H}}$ were successfully obtained with comparable final constraint expectation values as shown in the table above. However, to keep the presentation consistent, and as several of our upcoming characterisation analyses become infeasible at such large Hilbert spaces, we will only concern ourselves with cutoffs of at most 5 when comparing two solutions of the two orderings. The optimisation feasibility and the coarse graining qualitative signatures accessible at high cutoff (e.g. constraint suppression and selected local diagnostics) persists at such high $m_\mathrm{max}$ cutoffs but the full characterisation suite is only reported for $m_\mathrm{max} \leq 5$ where it is statistically controlled. We note that, nevertheless, these large $m_\mathrm{max}$ simulations, however, serve to prove that one can indeed conduct LQG simulations using the developed numerical framework up to very large-scale, previously unattainable regimes.
\newparagraph
While the Figure \ref{fig:mincurvesstdalt} and Table \ref{tab:stdaltresults} demonstrate that the optimiser drives $\langle \hat{\mathcal{Q}}_i \rangle, i = \hat{H}, \hat{H}^\dagger$ to small, near-zero values across cutoffs, the \emph{absolute} magnitude of the final numbers is not, by itself, necessarily an adequate measure for how close one is to the kernel, especially when comparing different $m_\mathrm{max}$ or different orderings. This is particularly relevant for the $\hat{\mathcal{Q}}_{\hat{H}^\dagger}$ case at $m_\mathrm{max} = 5$ where, as mentioned, unlike the identically vanishing run shown in Table \ref{tab:stdaltresults} a typical converged value is of the order $10^{-1}$. Taken in isolation, this can appear less compelling than the $10^{-5}$-$10^{-6}$ values seen for other cutoffs. 
\newparagraph
However, $\langle \hat{\mathcal{Q}}_i \rangle, i = \hat{H}, \hat{H}^\dagger$ are positive semidefinite sums of vertex contributions and thus, for the alternative ordered constraint
\begin{equation}
    \langle \hat{\mathcal{Q}}_{\hat{H}^\dagger} \rangle_\mathrm{NQS} = \sum_{v \in V(\gamma)} \langle \Psi_\mathrm{NQS}, \hat{H}_v^\dagger \hat{H}_v \Psi_\mathrm{NQS} \rangle \geq 0,
\end{equation}
and thus what matters for interpretability is the scale on which these squared norms \emph{typically} live in the \emph{same} truncated theory. Crucially, as the cutoff is increased, the Hilbert space at lower cutoff becomes  a subspace. Thus the norm of the constrained operator cannot decrease with larger cutoff. In particular, any variational proxy for $||\hat{\mathcal{Q}}_{\hat{H}^\dagger}|| = \sup_\Psi \langle \hat{\mathcal{Q}}_{\hat{H}^\dagger} \rangle_\Psi$ is monotone non-decreasing under an increase in $m_\mathrm{max}$. Therefore, the observation that already at $m_\mathrm{max} = 3$ one finds a generic variational magnitude $\langle \hat{\mathcal{Q}}_{\hat{H}^\dagger} \rangle_\mathrm{gen} \approx 3.8 \times 10^2$ (cf. Section \ref{subsubsec:genericvmagconst12}) implies that the corresponding typical scale at $m_\mathrm{max} = 5$ must be \emph{at least} as large (and generically larger since increasing $m_\mathrm{max}$ permits larger charge excitations and hence larger matrix elements in the vertex operators), which can also be seen in Figure \ref{fig:Qsym_scale} (see Section \ref{subsec:symmordsols}) where such a generically achievable variational maximum of this constraint is an eigenspace well above 1200. From this perspective, an attained value of $\langle \hat{\mathcal{Q}}_{\hat{H}^\dagger} \rangle_\mathrm{NQS} \sim 10^{-1}$ at $m_\mathrm{max} = 5$ is already consistent with a suppression by \emph{multiple} orders of magnitude to the intrinsic scale of $\hat{\mathcal{Q}}_{\hat{H}^\dagger}$ at that cutoff.

\subsubsection{Generic variational magnitude of the constraints}
\label{subsubsec:genericvmagconst12}
A conceptually clean way to quantify how close a variational state is to the target physical subspace (in this case, the zero eigenspace) would be to compare it to the spectral information of the respective quadratic constraints by, for example, locating the lowest eigenvalues, estimating a spectral gap or projecting onto the exact kernel of the operators. In our setting, this route is simply unavailable. The kinematical Hilbert space grows as $\dim\mathscr{H}_\mathrm{kin} = (2m_\mathrm{max} + 1)^{30}$ for the present case of the $K_5$ graph and thus even for the smallest cutoff of $m_\mathrm{max} = 1$, exact diagonalisation is infeasible. This is true even if one considers the gauge invariant subspace, as in our case, in which even at $m_\mathrm{max} = 1$, then $\dim\mathscr{H}_\mathrm{kin}^\mathrm{GI} \sim 10^8$. Consequently, even in such a case exact diagonalisation methods remain unavailable and sparse extremal-eigenvalue methods are infeasible in a controlled manner, especially with higher $m_\mathrm{max}$ values. Consequently, throughout we adopt a genuinely variational viewpoint. Namely, we measure proximity to the kernel through the Monte Carlo estimate of the $\langle \hat{\mathcal{Q}}_i \rangle_\mathrm{NQS}, i = \hat{H}, \hat{H}^\dagger$ for the two quadratic constraints and interpret $\langle \hat{\mathcal{Q}}_i \rangle_\mathrm{NQS} \approx 0.0$ (within statistical uncertainty) as evidence that the variational state lies very close to the corresponding kernel.
\newparagraph
A natural follow-up question would be whether the \emph{absolute} values of $\langle \hat{\mathcal{Q}}_i \rangle_\mathrm{NQS}$ reported in Table \ref{tab:stdaltresults} are small only in an absolute sense, or also small relative to a \emph{typical} magnitude of the constraint operator at the same cutoff within the same variational family. Since we do not have access to the spectrum, we cannot normalise by, say, an operator norm or spectral gap. Nevertheless, one may define an \emph{intrinsic variational scale} by solving an auxiliary optimisation problem designed to make the constraint expectation value as large as possible within the chosen NQS manifold. A convenient way to implement this is to minimise a variant of the regularised inverse-square objective
\begin{equation}
    \mathcal{L}^{(i)}_\mathrm{inv}[\Psi_\mathrm{NQS}] := \frac{1}{(\langle \hat{\mathcal{Q}}_i \rangle_{\Psi_\mathrm{NQS}}+ a)^2} \quad, \quad a > 0,
\end{equation}
where the small shift $a$ prevents numerical pathologies when the expectation value $\langle \hat{\mathcal{Q}}_i \rangle_{\Psi_\mathrm{NQS}}$ is close to zero. Because $\hat{\mathcal{Q}}_i$ are positive semidefinite, minimising $\mathcal{L}_\mathrm{inv}^i$ is monotone-equivalent to maximising $\langle \hat{\mathcal{Q}}_i \rangle$ over the same variational class and with the same VMC estimators. Thus, the resulting value $\langle \hat{\mathcal{Q}}_i \rangle_\mathrm{gen} $ can be taken as a proxy for a \emph{generic maximal variational magnitude} of the constraint at that cutoff and ordering.
\newparagraph
To illustrate, at $m_\mathrm{max} = 3$, for the alternative ordering we find for the following variant of the inverse-squared objective
\begin{equation}
    \tilde{\mathcal{L}}_\mathrm{inv}[\hat{\mathcal{Q}}_{\hat{H}^\dagger}] := \sum_{v \in V(\gamma)} \frac{1}{(\langle \hat{H}_v^\dagger \hat{H}_v\rangle_{\Psi_\mathrm{NQS}} + a)^2},
\end{equation}
then
\begin{equation}
    \langle \hat{\mathcal{Q}}_{\hat{H}^\dagger} \rangle_\mathrm{gen} \approx 5 \,\cdot\, \frac{1}{\sqrt{\tilde{\mathcal{L}}_\mathrm{inv}[\hat{\mathcal{Q}}_{\hat{H}^\dagger}] / 5}} - a \approx 377.9747,
\end{equation}
for $a = 1 \times 10^{-4}$. This shows that the \emph{typical} variationally achieved near-kernel values of $\langle \hat{\mathcal{Q}}_{\hat{H}^\dagger} \rangle$ in this work (in the order of $10^{-1}$) can be viewed as being suppressed by many orders of magnitude relative to this intrinsic scale. Further, for $m_\mathrm{max} = 5$, the generic magnitude is well above 1200 (see Figure \ref{fig:Qsym_scale} in Section \ref{subsec:symmordsols}). In this specific sense, the inverse objective calibration provides a meaningful, ansatz-adapted normalisation without requiring any spectral input.
\newparagraph
It is important, however, to stress what this procedure \emph{does not} provide, and why we do not pursue it systematically across all cutoffs and orderings. First, $\langle \hat{\mathcal{Q}}_i \rangle_\mathrm{gen}$ is not an operator-norm estimate and it does not approximate the true spectral maximum. Rather, it is merely a \emph{variational reach} determined by the expressivity of the ansatz and the optimisation landscape. Second, it is a \emph{separate} (and, in practice, comparably expensive) optimisation problem. Driving $\langle \hat{\mathcal{Q}}_{\hat{H}^\dagger} \rangle$ large can be as non-trivial as driving it to zero, with additional and different sensitivity to the choice of regulariser $a$, optimisation hyperparameters, and so on. Third, even if obtained robustly, this scale does not by itself characterise nor help in characterising the physical solution space, it rather yields a useful dimensionless suppression factor. For these reasons, we rely on the variational observation that $\langle \hat{\mathcal{Q}}_i \rangle$ can be driven relatively close to zero, and in what follows we therefore regard the obtained solutions as close to the kernel and proceed to characterise them.

\subsection{Characterisation of the solution classes}
\label{subsec:characterisation}
The variational study of the two orderings $\hat{\mathcal{Q}}_{\hat{H}}$ and $\hat{\mathcal{Q}}_{\hat{H}^\dagger}$ reveals a striking and reproducible pattern. Namely, although both quadratic constraints can be driven to very small expectation values, the resulting near-kernel states are not the same. Instead, the optimisation consistently settles into two distinct solution classes that are naturally or generically associated with the two orderings. In practice, simulations targeting $\hat{\mathcal{Q}}_{\hat{H}}$ converge to a family of wavefunctions (which we will throughout denote by \emph{type-A solutions}) that share one set of qualitative physical features whereas those targeting $\hat{\mathcal{Q}}_{\hat{H}^\dagger}$ converge to a different family (which we will throughout denote by \emph{type-B solutions}) with clearly distinguishable characteristics. This dichotomy persists across all cutoffs $m_\mathrm{max}$  considered and remains stable under repeated trainings with different random initialisations and optimiser settings. This, in turn, suggests that it reflects genuine structure of the variational landscape induced by the different operator orderings rather than a contingent artefact of a particular simulation. 
\newparagraph
The goal of this section is to characterise and contrast the two ordering-induced solution classes. To do so, we organise the discussion around three complementary diagnostics probing distinct aspects of the obtained solution classes. First, Section \ref{subsubsec:normalisability} examines the normalisability and concentration properties of the learned wavefunctions in their respective Hilbert spaces. Section \ref{subsubsec:correlations} then investigates long-range correlations across the graph by studying 2-point functions (and their fluctuations) of the charge distribution on the edges of the graph, providing a sensitive handle on whether the two ordering-induced solution classes differ primarily by local re-weighting of different basis elements or by genuinely non-local collective structure. Lastly, Section \ref{subsubsec:geometricstructure} probes the geometric content of the two classes using expectation values of geometric observables such as both area and volume, thereby connecting the ordering-dependent near-kernel solutions to emergent (or absent) geometric features.

\subsubsection{Normalisability}
\label{subsubsec:normalisability}
Solutions of the quantum constraints are not expected to be normalisable, generally, with respect to the kinematical inner product. Rather, they often live naturally as distributional objects with the \emph{physical} inner product to be defined only after solving the constraints. From the perspective of Thiemann regularised quantum Hamilton constraints and the related quadratic constraints, this is not a pathology per se but rather reflects the fact that the Hamiltonian constraint generates a gauge flow rather than a true evolutions. Therefore, its solutions need not belong to $\mathscr{H}_\mathrm{kin}$ even when the constraint operator is densely defined there. 
\newparagraph
In the present work, however, our variational method is implemented directly in the gauge invariant subspace of the truncated kinematical Hilbert space, and the NQS defines an honest square-summable vector at each finite cutoff. It is therefore meaningful to ask whether the sequence of variational near-kernel states remains \emph{compatible} with normalisability as the cutoff is increased, or whether it exhibits signatures of a runaway toward a distributional object in the $m_\mathrm{max} \rightarrow \infty$ limit.
\newparagraph
Concretely, recall that in the charge-network basis, a kinematical state is specified by coefficients $\psi_{\vec{m}_1 \cdots \vec{m}_{10}}$ and normalisability in the (non-truncated) kinematical Hilbert space amounts to the $\ell^2$-condition
\begin{equation}
    \|\Psi\|^2 = \sum_{\vec{m}_1 \cdots \vec{m}_{10}} |\psi_{\vec{m}_1 \cdots \vec{m}_{10}}|^2 < \infty.
\end{equation}
At finite $m_\mathrm{max}$, this sum is finite by construction. The potential issue arises only when viewing the trained solutions $\Psi^{(m_\mathrm{max})}$ at a certain cutoff $m_\mathrm{max}$ as a \emph{sequence} under refinement of the truncation. A natural quantity to inspect under cutoff refinement is the probability carried by the newly admitted outer shell
\begin{equation}
    \mathcal{S}_{m_\mathrm{max}} := \{\vec{m} \in M^{3|E(\gamma)|} \,:\, \max_{e,k} |m_e^{(k)}| = m_\mathrm{max}\},
\end{equation}
together with the induced probability distribution $p_\mathrm{max}(\vec{m}) = |\Psi^{(m_\mathrm{max})}(\vec{m})|^2 / \|\Psi^{(m_\mathrm{max})}\|^2$. One may then monitor the shell weight
\begin{equation}
    q_{m_\mathrm{max}} := \sum_{\vec{m} \in \mathcal{S}_{m_\mathrm{max}}} p_{m_\mathrm{max}}(\vec{m})
\end{equation}
which measures how much probability mass is carried by configurations that only become available at the current cutoff. Persistent order-one values of $q_{m_\mathrm{max}}$ indicate that the variational state does not concentrate away from the cutoff boundary as the truncation is enlarged.
\newparagraph
This, however, must then be interpreted with care. In general, substantial weight near the cutoff does \emph{not} by itself distinguish between a sequence approximating a genuinely non-normalisable continuum solution and a cutoff-induced branch whose support is tied to truncation artefacts. Moreover, any comparison of $\Psi^{m_\mathrm{max}}$ across different cutoffs already assumes that the optimisation continues to track approximations to the same underlying continuum solution, which need not hold a priori and rarely does in practice. 
\newparagraph
For this reason, we do not elevate the shell weight to a general criterion for normalisability. Rather, we regard it as merely a heuristic diagnostic of whether the learned states remain tight under refinement. The conducted normalisability analysis rather relies on two operational and visual diagnostics that can be evaluated in the present case of rather large $\dim\mathscr{H}_\mathrm{kin}^\mathrm{GI}$, in which case methods relying on direct enumeration of the full state vector are not possible. The first is a state-vector scan based on the sequential labelling $\iota$ of Appendix \ref{app:sequential_labelling} and the stratified-envelope construction of Appendix \ref{subsubsec:fast_envelope_sampling}. Concretely, for a given cutoff we regard the variational map as a coefficient map $\psi_i, i \in \{0, \ldots, D-1\}$, where $D = \dim\mathscr{H}_\mathrm{kin}^{\mathrm{GI}}$ and partition the index set into into $B \ll D$ contiguous bins $\mathcal{I}_b$. We then draw $K$ indices per bin, $i_{b,t} \sim \text{Uniform}(\mathcal{I}_b)$, decode the corresponding configurations and evaluate the generally unnormalised amplitudes which we denote by $\tilde{\psi}_{b,t}$ as defined in Appendix \ref{subsubsec:fast_envelope_sampling}. From these samples, we compute the stratified, unbiased estimator of the squared norm
\begin{equation}
    \widehat{\|\tilde{\Psi}\|^2} = \sum_{b=0}^{B-1} \frac{W_b}{K} \sum_{t=1}^K |\tilde{\psi}_{b,t}|^2,
\end{equation}
and rescale the sampled coefficients for visualisation via
$\hat{\psi}_{b,t} := \tilde{\psi}_{b, t} / \widehat{\|\tilde{\Psi}\|}$. The resulting envelope bands, as discussed in Appendix \ref{subsubsec:fast_envelope_sampling}, provide a compressed but global view of both $\Re(\psi_i)$ and $\Im(\psi_i)$ across the full basis ordering. For fixed binning, this estimator converges to the exact squared norm $\|\tilde{\Psi}\|^2$ as the number of samples per bin $K$ is increased, while increasing the number of bins $B$ refines the spatial resolution of the envelope by decreasing the bin widths.
\newparagraph
This diagnostic is particularly informative in the refinement analysis. If a family of variational states $\Psi^{(m_\mathrm{max})}$ does approximate one and the same $\ell^2$-state under cutoff increase, then the probability mass in the newly admitted near-cutoff sectors is expected to decay with increasing cutoff. From this perspective, persistence of a noise-like, non-decaying envelope structure in the index regions dominated by near-cutoff configurations is a strong indication of a failure of concentration. By itself, this does not constitute a general proof of non-normalisability. In the type-A case as to be shown below, however, it fits consistently with the further evidence that the states approximate a non-normalisable continuum solution.
\newparagraph
To see this, we employ the algorithm in Appendix \ref{app:sequential_indexing_and_plotting} to obtain the stratified state-vectors for all solutions (across all seeds) per cutoff. It was observed that \emph{all} solutions obtained in a given cutoff, as well as across all cutoffs, exhibited the same behaviour depending on their solution class. To begin, stratified-state vectors of type-A solutions (corresponding to solutions of $\hat{\mathcal{Q}}_{\hat{H}}$) are shown below.
\begin{figure}[htbp]
    \centering

    \begin{subfigure}[t]{0.49\textwidth}
        \centering
        \includegraphics[width=\textwidth]{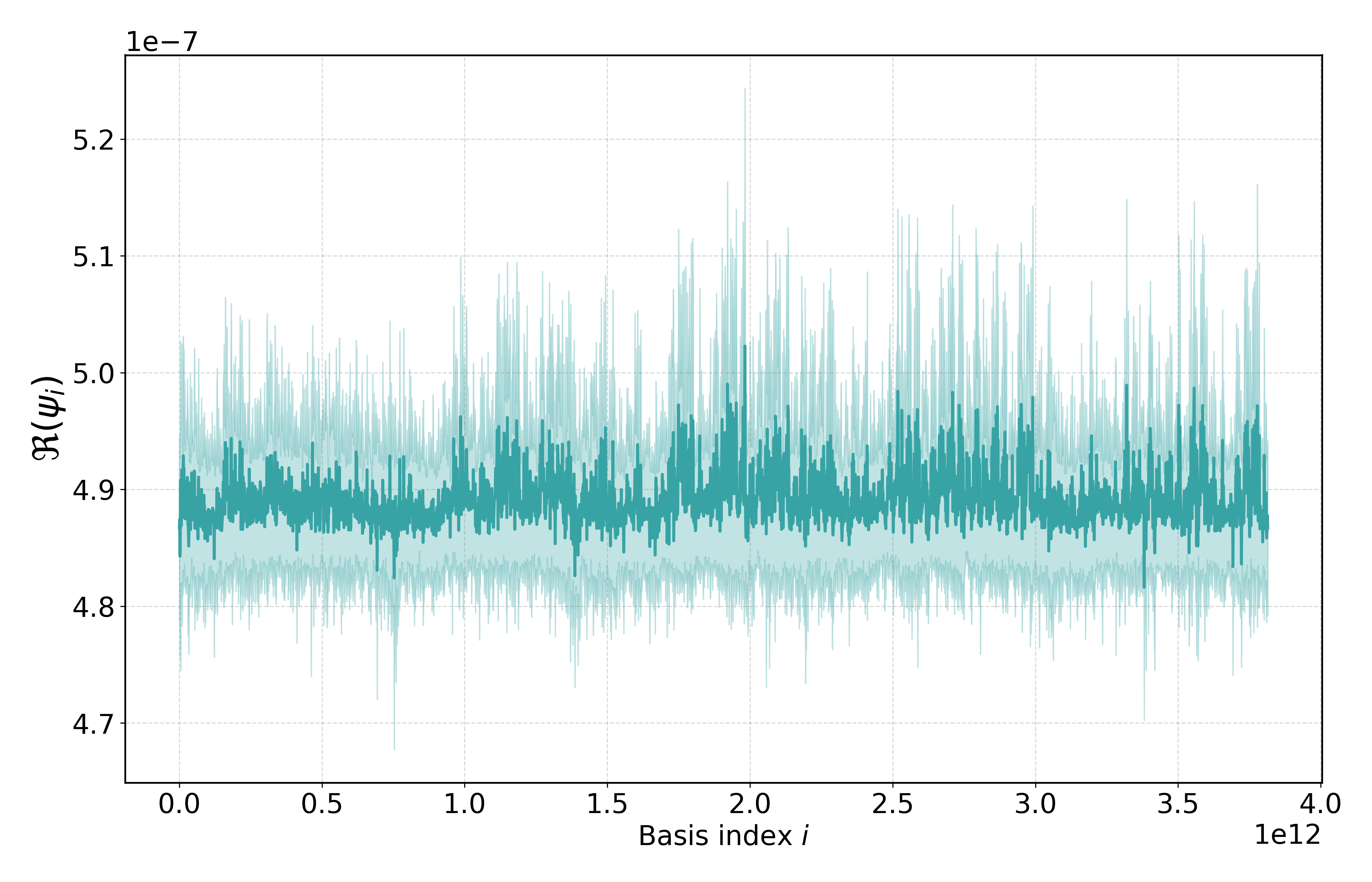}
        \caption{$m_\mathrm{max} = 2$}
        \label{fig:statevecstd:a}
    \end{subfigure}
    \hfill
    \begin{subfigure}[t]{0.49\textwidth}
        \centering
        \includegraphics[width=\textwidth]{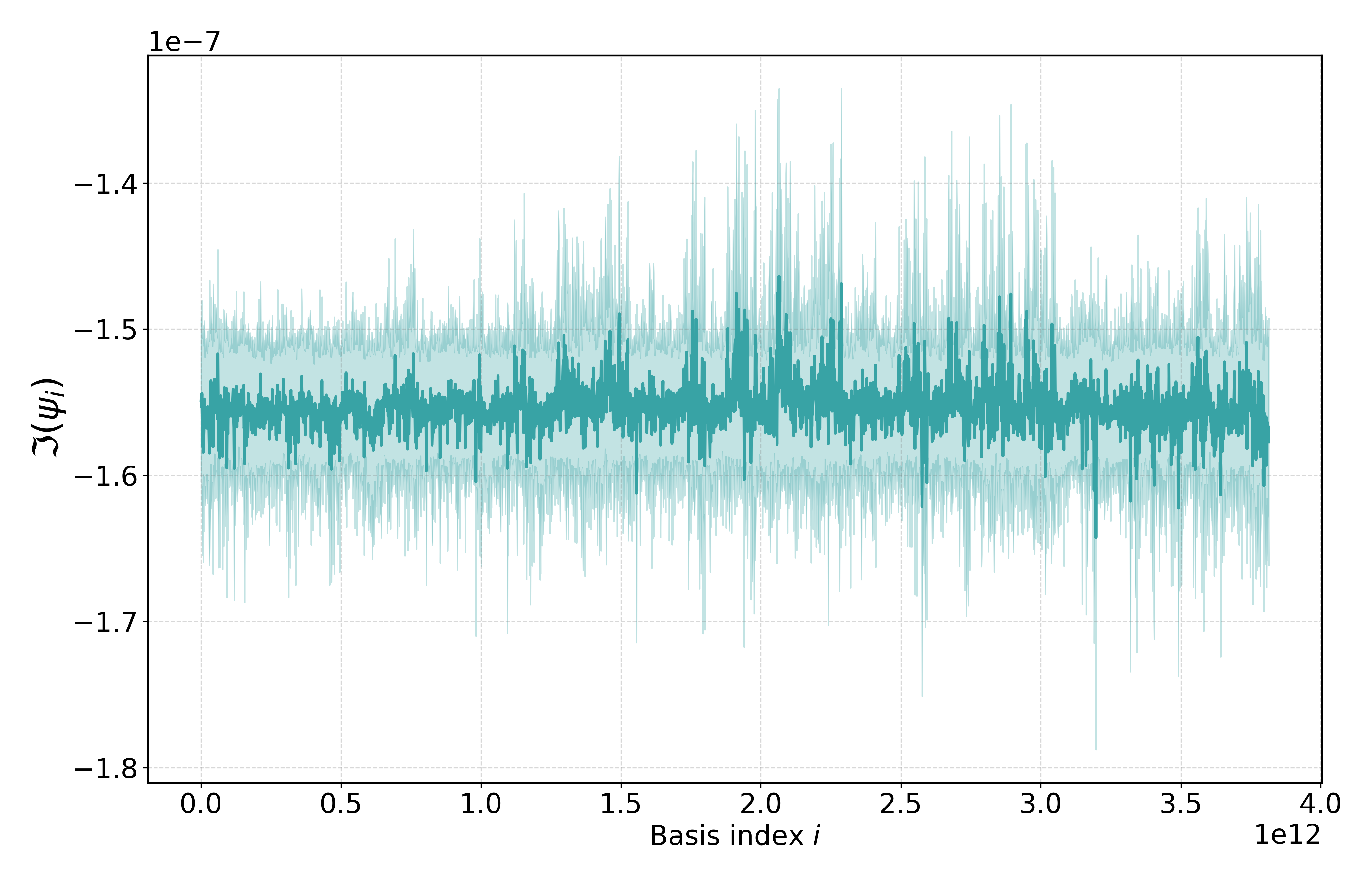}
        \caption{$ m_\mathrm{max} = 2$}
        \label{fig:statevecstd:b}
    \end{subfigure}

    \vspace{0.5cm}

    \begin{subfigure}[t]{0.49\textwidth}
        \centering
        \includegraphics[width=\textwidth]{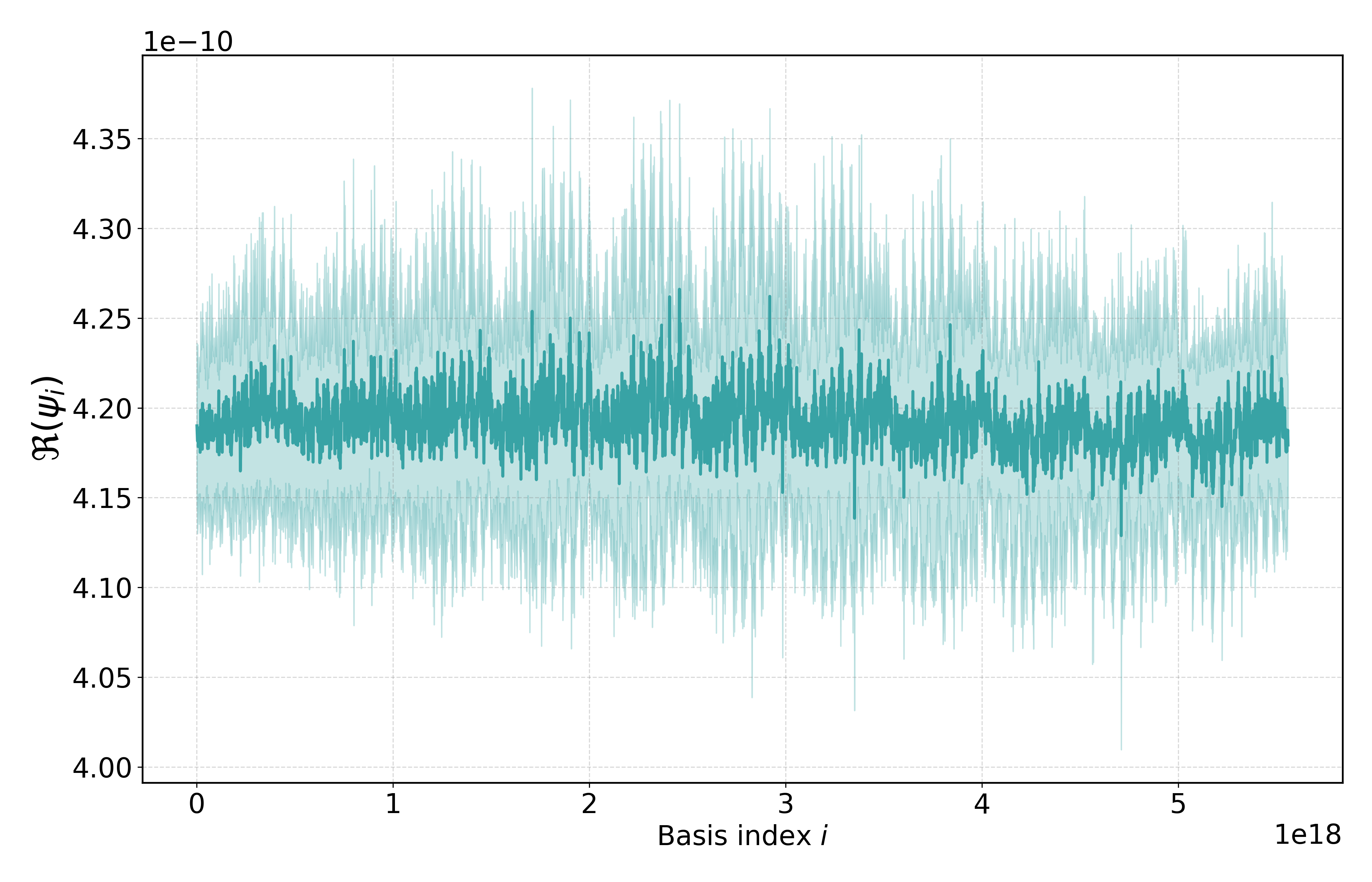}
        \caption{$m_\mathrm{max} = 5$}
        \label{fig:statevecstd:c}
    \end{subfigure}
    \hfill
    \begin{subfigure}[t]{0.49\textwidth}
        \centering
        \includegraphics[width=\textwidth]{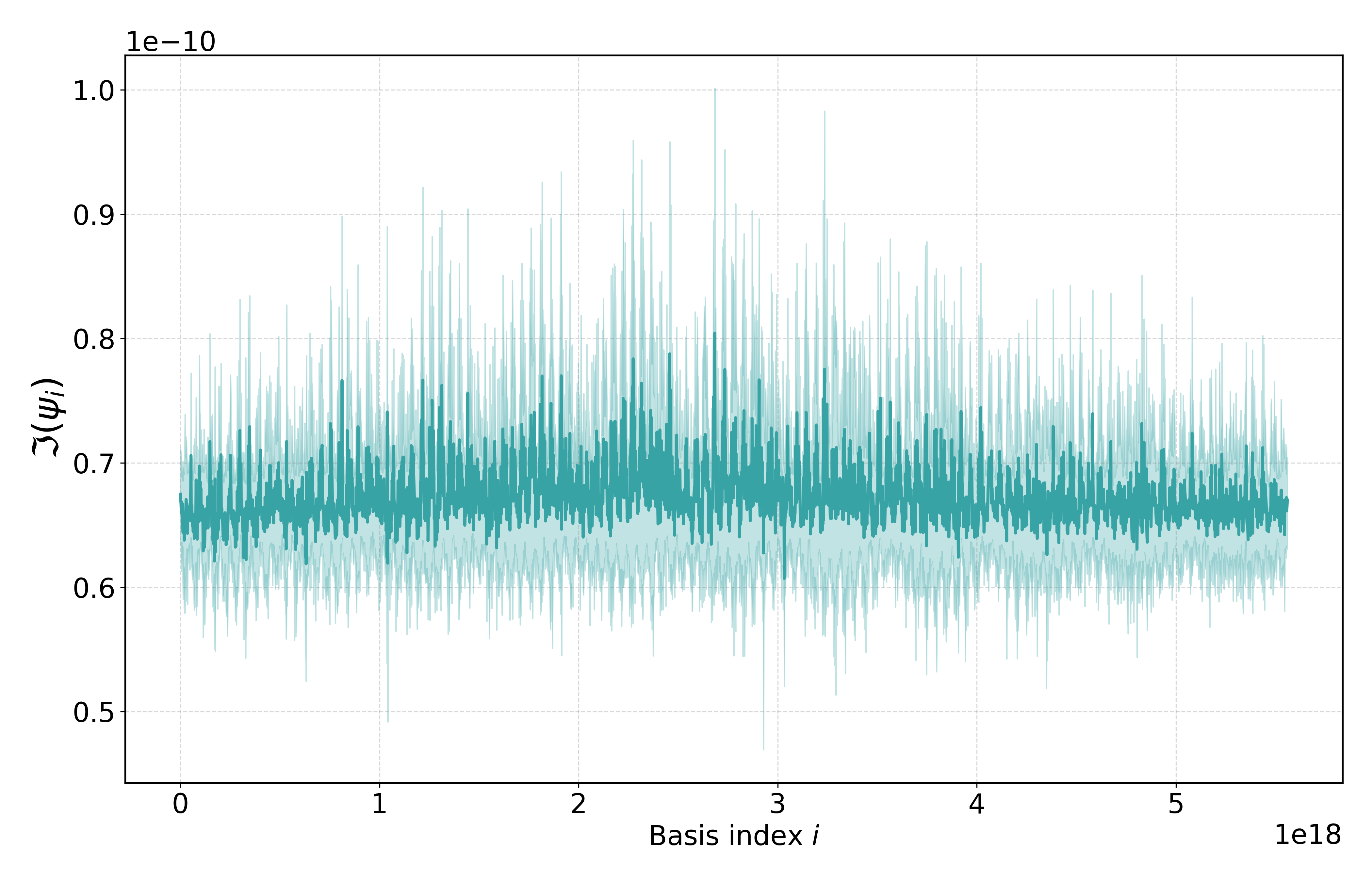}
        \caption{$m_\mathrm{max} = 5$}
        \label{fig:statevecstd:d}
    \end{subfigure}

    \caption{
        Stratified state-vector plots of representative variational near-kernel type-A solutions for the cutoffs $m_\mathrm{max} = 2, 5$
    }
    \label{fig:statevecstd}
\end{figure}
\newparagraph
Figure \ref{fig:statevecstd} shows the stratified state-vector plots of representative variational near-kernel type-A solutions for the cutoffs $m_\mathrm{max} = 2$ (Figures \ref{fig:statevecstd:a} and \ref{fig:statevecstd:b} for real and imaginary parts respectively) and similarly both real and imaginary parts for for $m_\mathrm{max} = 5$ in Figures \ref{fig:statevecstd:c} and \ref{fig:statevecstd:d} respectively. As shown in Figure \ref{fig:statevecstd}, the characteristic noise-like band structure exists irrespective of the cutoff. With the chosen sequential indexing, the majority of near cutoff basis elements are on both the left and right tails of each of the plots above. Since one does not observe any suppression of newly added outer shell amplitudes, this is taken as a strong indicator that the solutions obtained are not normalisable. 
\newparagraph
We emphasise that in this section the stratified estimator is used primarily to provide a consistent approximate normalisation and to expose global structure of the coefficients under refinement. For the type-A class, the absence of visible suppression in the near-cutoff index regions persists across cutoffs, matching the interpretation that weight continues to populate newly admitted outer-shell configurations rather than concentrating.
\newparagraph
To compare, stratified state-vector plots for representative type-B solutions are shown in Figure \ref{fig:statevecalt}.
\begin{figure}[htbp]
    \centering

    \begin{subfigure}[t]{0.49\textwidth}
        \centering
        \includegraphics[width=\textwidth]{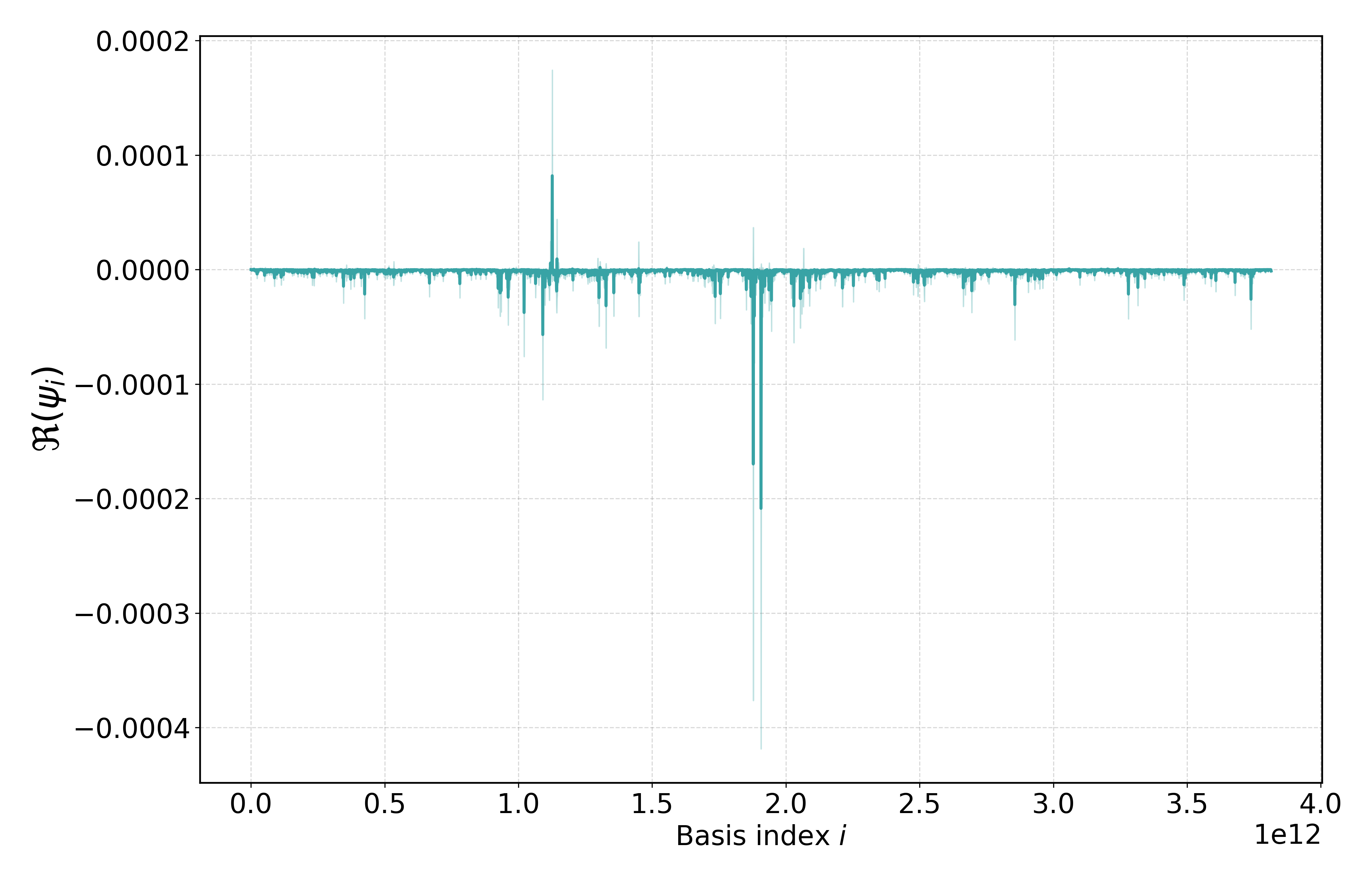}
        \caption{$m_\mathrm{max} = 2$}
        \label{fig:statevecalt:a}
    \end{subfigure}
    \hfill
    \begin{subfigure}[t]{0.49\textwidth}
        \centering
        \includegraphics[width=\textwidth]{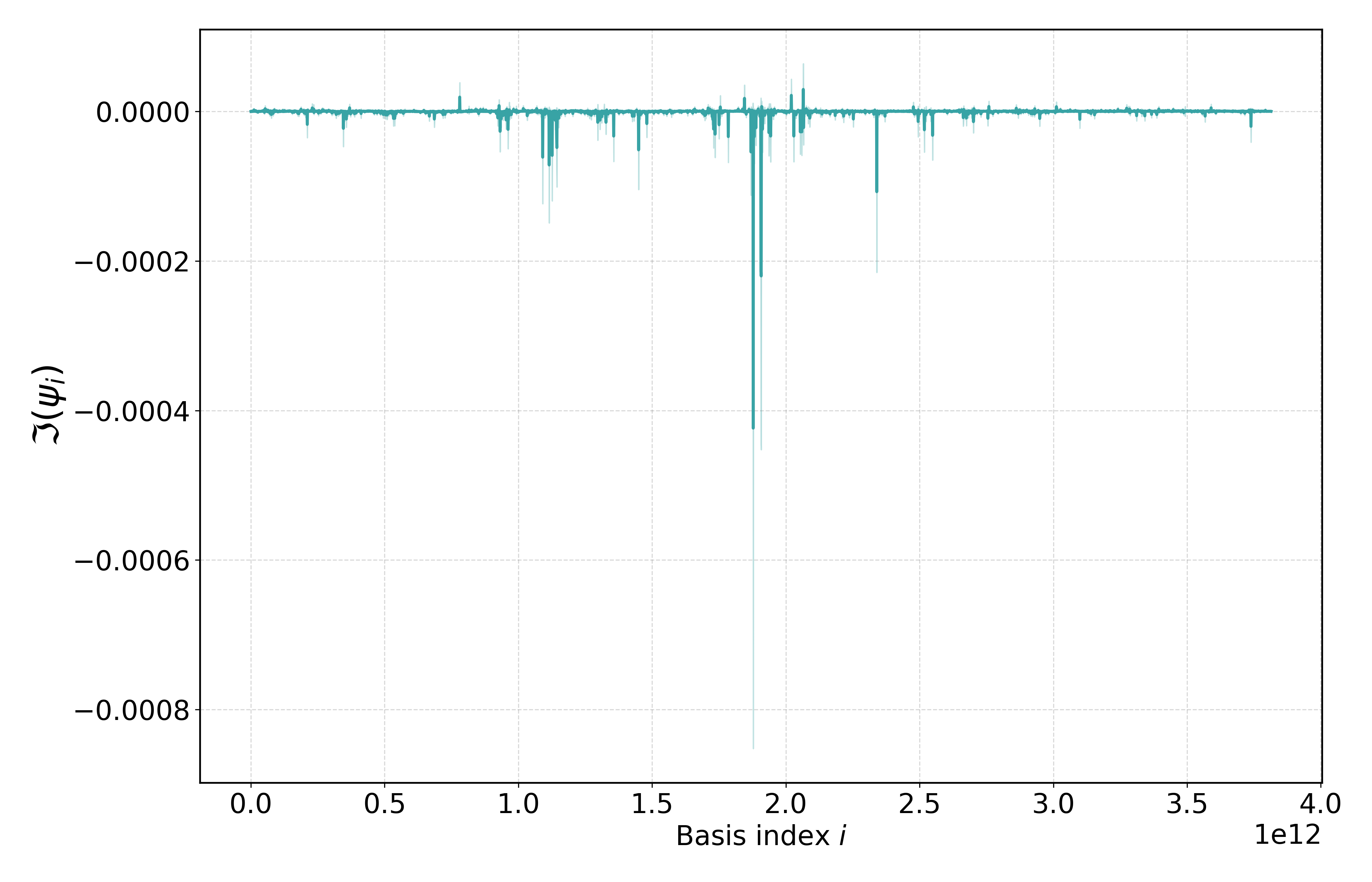}
        \caption{$ m_\mathrm{max} = 2$}
        \label{fig:statevecalt:b}
    \end{subfigure}

    \vspace{0.5cm}

    \begin{subfigure}[t]{0.49\textwidth}
        \centering
        \includegraphics[width=\textwidth]{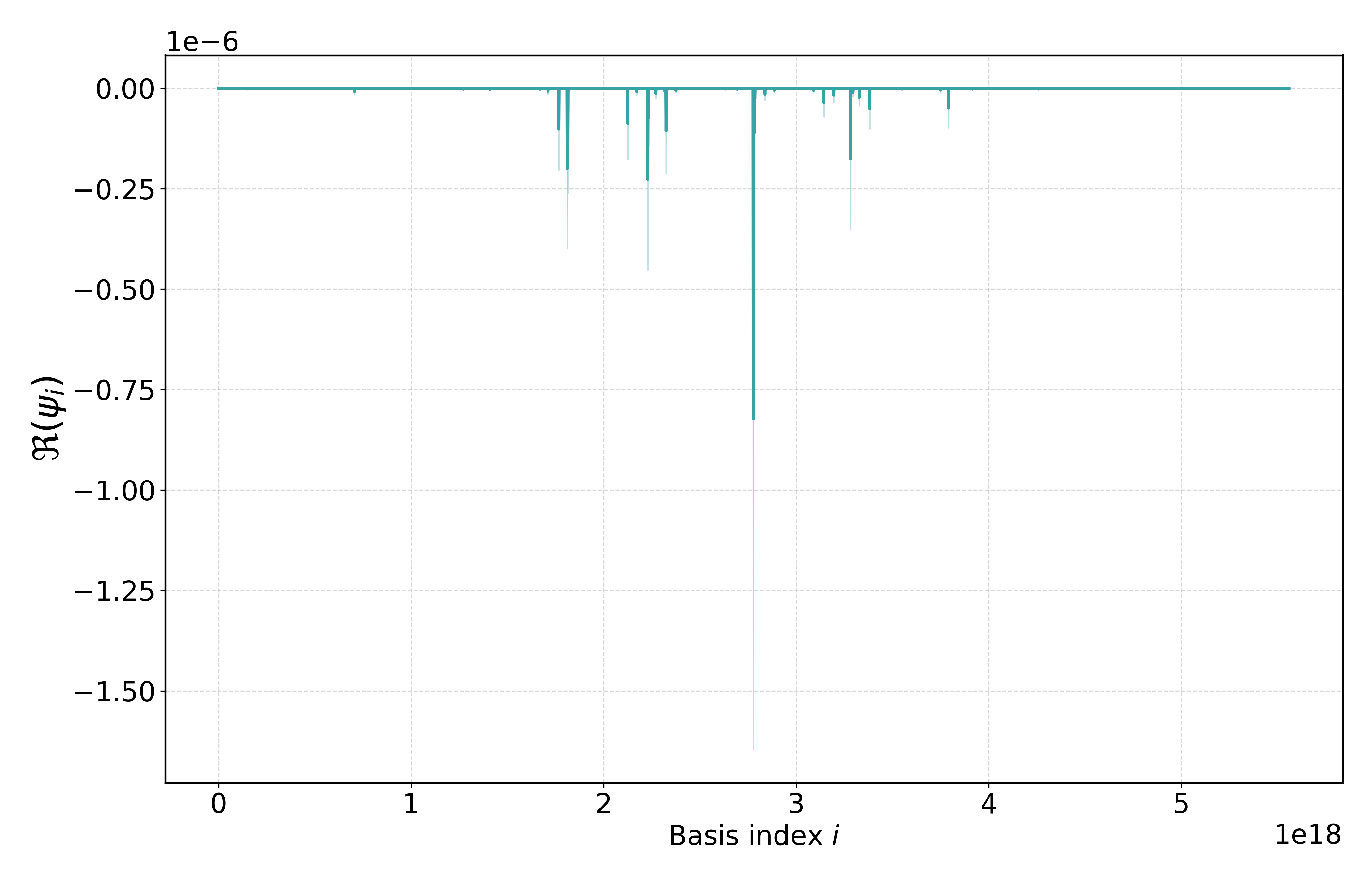}
        \caption{$m_\mathrm{max} = 5$}
        \label{fig:statevecalt:c}
    \end{subfigure}
    \hfill
    \begin{subfigure}[t]{0.49\textwidth}
        \centering
        \includegraphics[width=\textwidth]{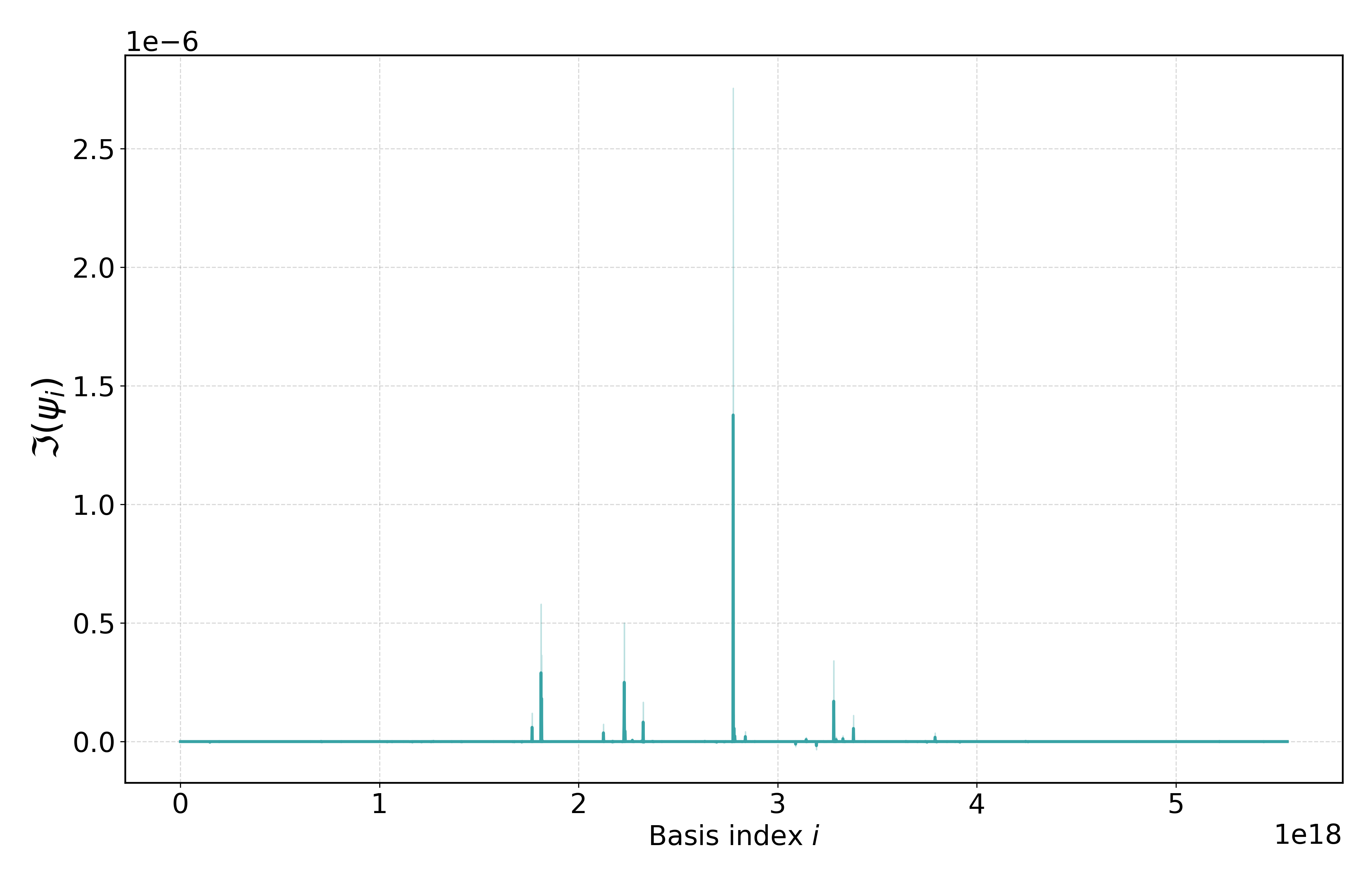}
        \caption{$m_\mathrm{max} = 5$}
        \label{fig:statevecalt:d}
    \end{subfigure}

    \caption{
        Stratified state-vector plots of representative variational near-kernel type-B solutions for the cutoffs $m_\mathrm{max} = 2, 5$
    }
    \label{fig:statevecalt}
\end{figure}
\newparagraph
As shown in Figure \ref{fig:statevecalt}, the type-B solutions exhibit a drastically different stratified state-vector profile when compared to the type-A counterpart. In particular, the envelope bands display clear concentration and a pronounced suppression of near-cutoff sectors under refinement. Taken together, this behaviour is consistent with compatibility with normalisability in the refinement limit. We note that the stratified sampling parameters used to produce these plots are summarised in Table \ref{tab:stratconfigs}.
\begin{table}[h]
\small
    \centering
    \caption{\small Stratified sampling parameters for the stratified state-vector plots for type-A and type-B solutions at $m_\mathrm{max} = 2, 5$. Here, $n_\mathrm{bins}$ denotes the number of bins, $n_\mathrm{spb}$ denotes the number of samples per bin, $n_s$ denotes the total number of samples used, $n_b$ denotes the total number of batches and $W_b$ denotes the width of each batch.}
    \begin{tabular}{c|c|ccccc}
        \toprule[1.2pt]
        Solution class &
        $m_\mathrm{max}$ &
        $n_\mathrm{bins}$ &
        $n_\mathrm{spb}$ &
        $n_s$ &
        $n_b$ &
        $W_b$
        \\
        \midrule

        \multirow{2}{*}{type-A}
           & 2 &  4096 & 4096 & 16,777,216 & 65,536 & 931,322,575 \\
           & 5 & 4096 & 16,384 & 67,108,864 & 65,536 & 1,357,401,687,864,315  \\

        \midrule

        \multirow{2}{*}{type-B}
           & 2 &  4096 & 4096 & 16,777,216 & 65,536 & 931,322,575  \\
           & 5 & 4096 & 16,384 & 67,108,864 & 65,536 & 1,357,401,687,864,315  \\

        \bottomrule[1.2pt]
    \end{tabular}
    \label{tab:stratconfigs}
\end{table}
\\\noindent
As shown in Table \ref{tab:stratconfigs}, the stratified state-vector plots are obtained from a comparatively small random subset of coefficients. In principle, increasing the number of samples per bin $K$ reduces sampling noise and drives the stratified estimator $\widehat{\|\tilde{\Psi}\|^2}$ closer to the exact squared norm, while increasing the number of bins $B$ refines the envelope resolution. In practice, however, we find that the qualitative envelope morphology relevant to our normalisability discussion (noise-like persistence versus concentration and near-cutoff suppression) stabilises already at the parameters reported in Table \ref{tab:stratconfigs}. Further increases in $K$ and/or $B$ predominantly improve the accuracy of the overall normalisation and reduce residual sampling fluctuations, without altering the qualitative features we focus on here. Accordingly, we do not pursue higher sampling budgets in the present analysis.
\newparagraph
The second diagnostic probes the same issue in a more physically transparent manner. Specifically, we analyse the edge-wise charge marginals (edge chromaticity). For each oriented edge $e \in E(\gamma)$ and charge vector $\vec{m}$, introduce the projector
\begin{equation}
    \hat{\mathcal{C}}_{\vec{m}}^{(e)} := \ket{\vec{m}_e} \bra{\vec{m}_e},
\end{equation}
and define the corresponding marginal distribution
\begin{equation}
    P_e(\vec{m}) := \frac{\langle \Psi, \hat{\mathcal{C}}_{\vec{m}}^{(e)} \Psi \rangle}{\|\Psi\|^2} = \sum_{\vec{m}_{\neq e}} \frac{|\psi(\vec{m}_e, \vec{m}_{\neq e})|^2}{\|\Psi\|^2},
\end{equation}
such that $\sum_{\vec{m}_e} P_e(\vec{m}_e) = 1$. The equation above simply computes the sum of the amplitudes of states which, on the given edge $e$, the charge vectors match the prescribed $\vec{m}_e$. If the variational state fails to concentrate as $m_\mathrm{max}$ increases and instead spreads nearly uniformly over newly available charge assignments, these marginals tend to flatten and approach the maximally delocalised benchmark
\begin{equation}
    P_e(\vec{m}_e) \approx \frac{1}{(2m_\mathrm{max} + 1)^3},
    \label{eq:flatchromexp}
\end{equation}
for all charge vectors $\vec{m}_e$ on all edges. This behaviour is directly related to the shell criterion. Namely, under such flattening, newly admitted near cutoff charge vectors do not become suppressed, and the probability for an edge to sit at outer charge values remains of order $1/m_\mathrm{max}$ rather than decaying rapidly and thus mirroring the lack of decay in $p_r$.
\newparagraph
To begin, we examine once again type-A solutions now at the cutoff $m_\mathrm{max} = 1$ where we compute the expectation value of $\hat{\mathcal{C}}_{\vec{m}}^{(e)}$ for each of the ten edges in the $K_5$ graph and for each of the 27 charge vectors at this cutoff. 
\begin{figure}[htbp]
    \centering

    \begin{subfigure}[t]{0.49\textwidth}
        \centering
        \includegraphics[width=\textwidth]{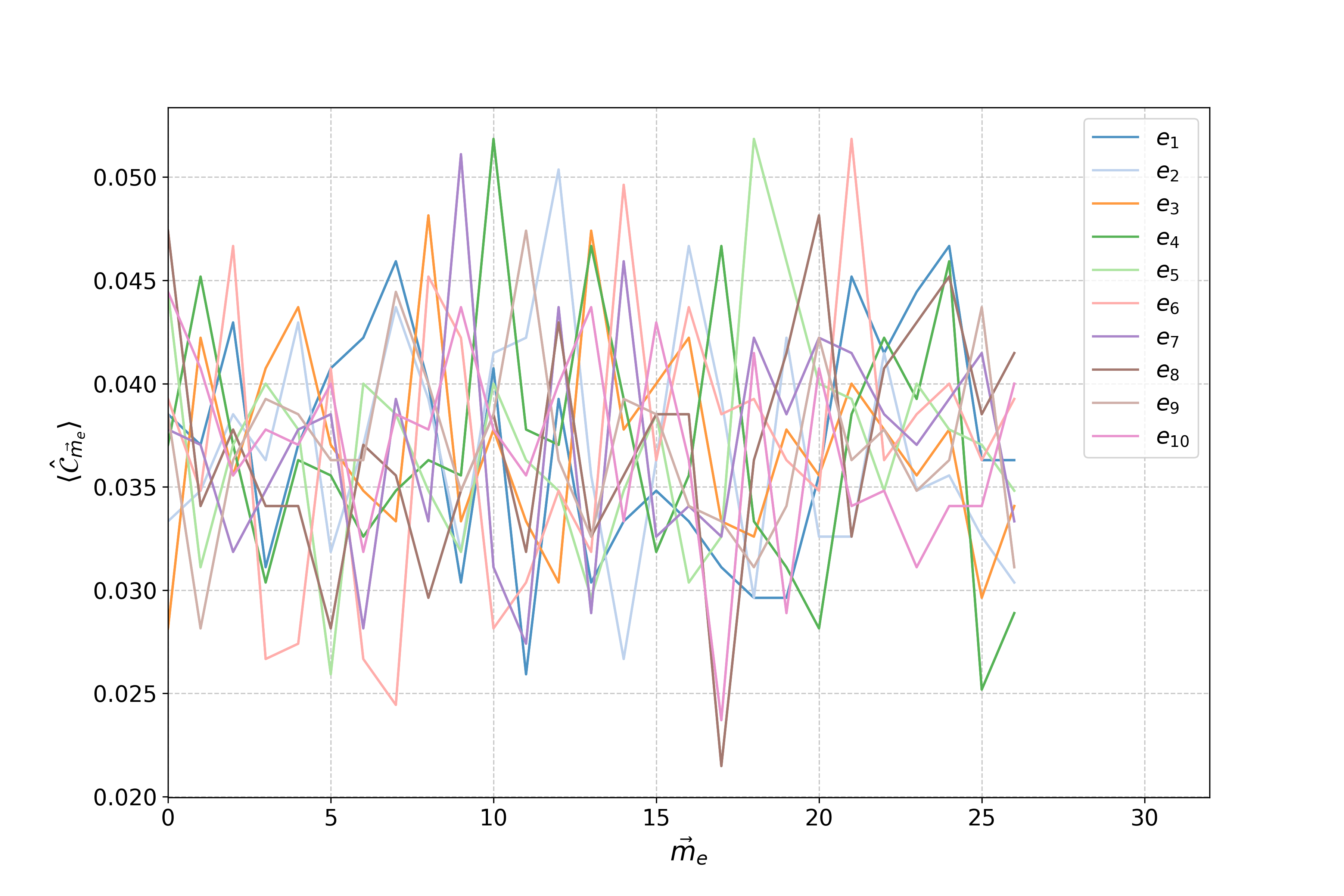}
        \caption{Monte Carlo}
        \label{fig:stdchrom:a}
    \end{subfigure}
    \hfill
    \begin{subfigure}[t]{0.49\textwidth}
        \centering
        \includegraphics[width=\textwidth]{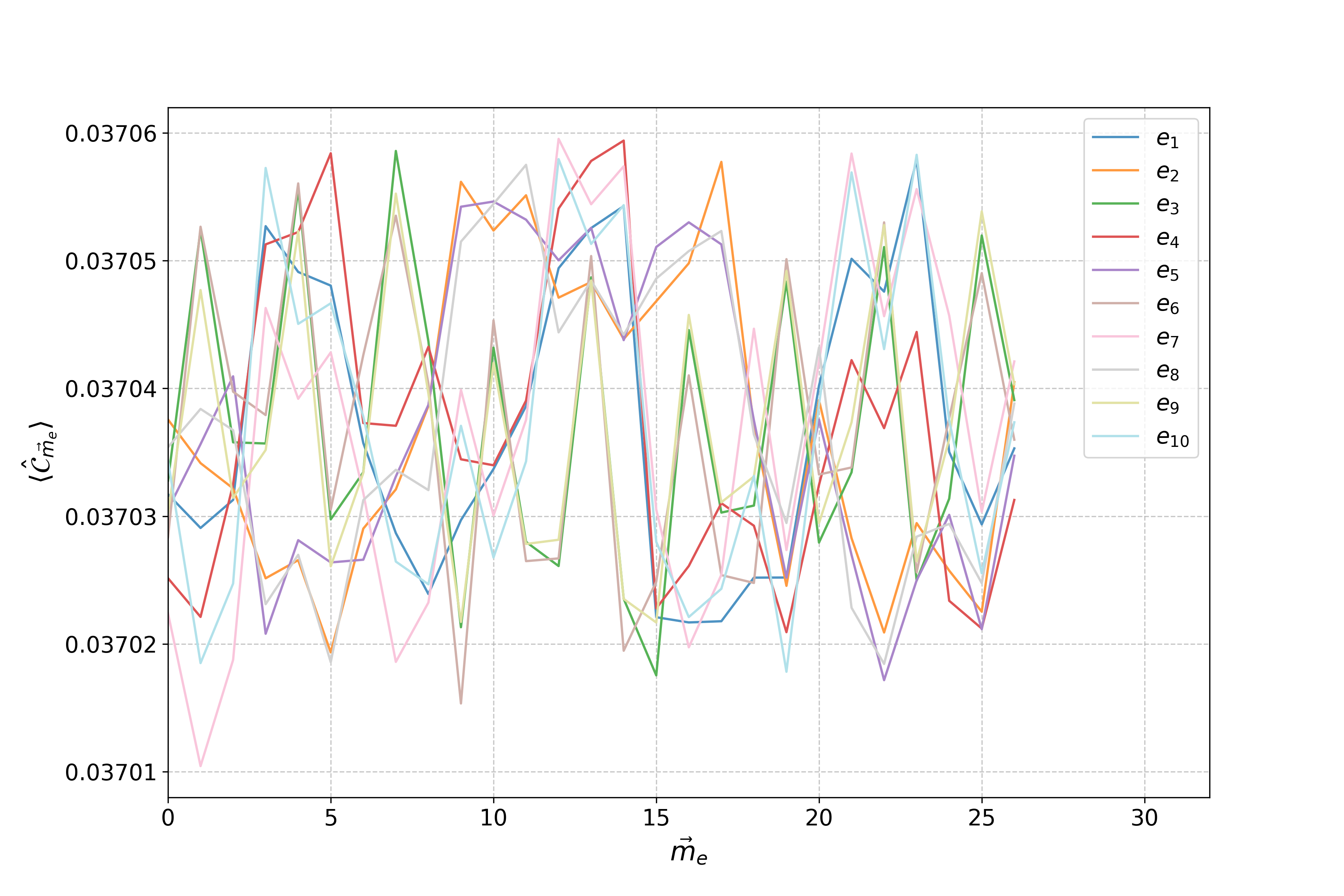}
        \caption{Full-state sum}
        \label{fig:stdchrom:b}
    \end{subfigure}

    \caption{
        Chromaticity plots for a representative type-A solution at $m_\mathrm{max} = 1$ computed using Monte Carlo methods on the left and using a full-state sum on the right.
    }
    \label{fig:stdchrom}
\end{figure}
\newparagraph
Figure \ref{fig:stdchrom} shows the chromaticity plots for a representative type-A solution at $m_\mathrm{max} = 1$. On the left (Figure \ref{fig:stdchrom:a}), the expectation values are plotted using Monte Carlo methods. As evident, one can see that, roughly, there is no characteristic profile for the chromaticity thus indicating that one obtains a state which is, in this context, not normalisable. To further eliminate any source of noise, since at this cutoff the Hilbert space dimension is manageable, we compute a full state sum of the expectation value which is shown in Figure \ref{fig:stdchrom:b}. In this case, one sees clearly that the Monte Carlo based estimate displays contributions inherent simply from not having enough samples. Contrastingly, the full-state sum shows that the chromaticity averages at around 0.0307035, which is almost the value expected for equation \eqref{eq:flatchromexp} for this cutoff (approx. $0,037\overline{037}$). This characteristic behaviour is observed for all simulations across all seeds and across all cutoffs. We note that while the tabulated results in Table \ref{tab:stdaltresults} show the cutoffs of $1, \ldots, 5$, \emph{type-A solutions obtained at cutoffs as high as $m_\mathrm{max} = 50$ and even $m_\mathrm{max} = 100$ exhibited identical behaviour.}
\newparagraph
The case of type-B solutions is drastically different. Namely, one observes a very characteristic chromaticity profile which is reoccurring across cutoffs. Specifically, a concentration on the all-zero charge vector is predominant in any $m_\mathrm{max}$ simulation, with distinct, but small, contributions from small-removed charge vectors\footnote{By small-removed charge vectors we mean charge vectors which differ from the all-zero charge vector by one charge being non-zero}.
\begin{figure}[htbp]
    \centering

    \begin{subfigure}[t]{0.49\textwidth}
        \centering
        \includegraphics[width=\textwidth]{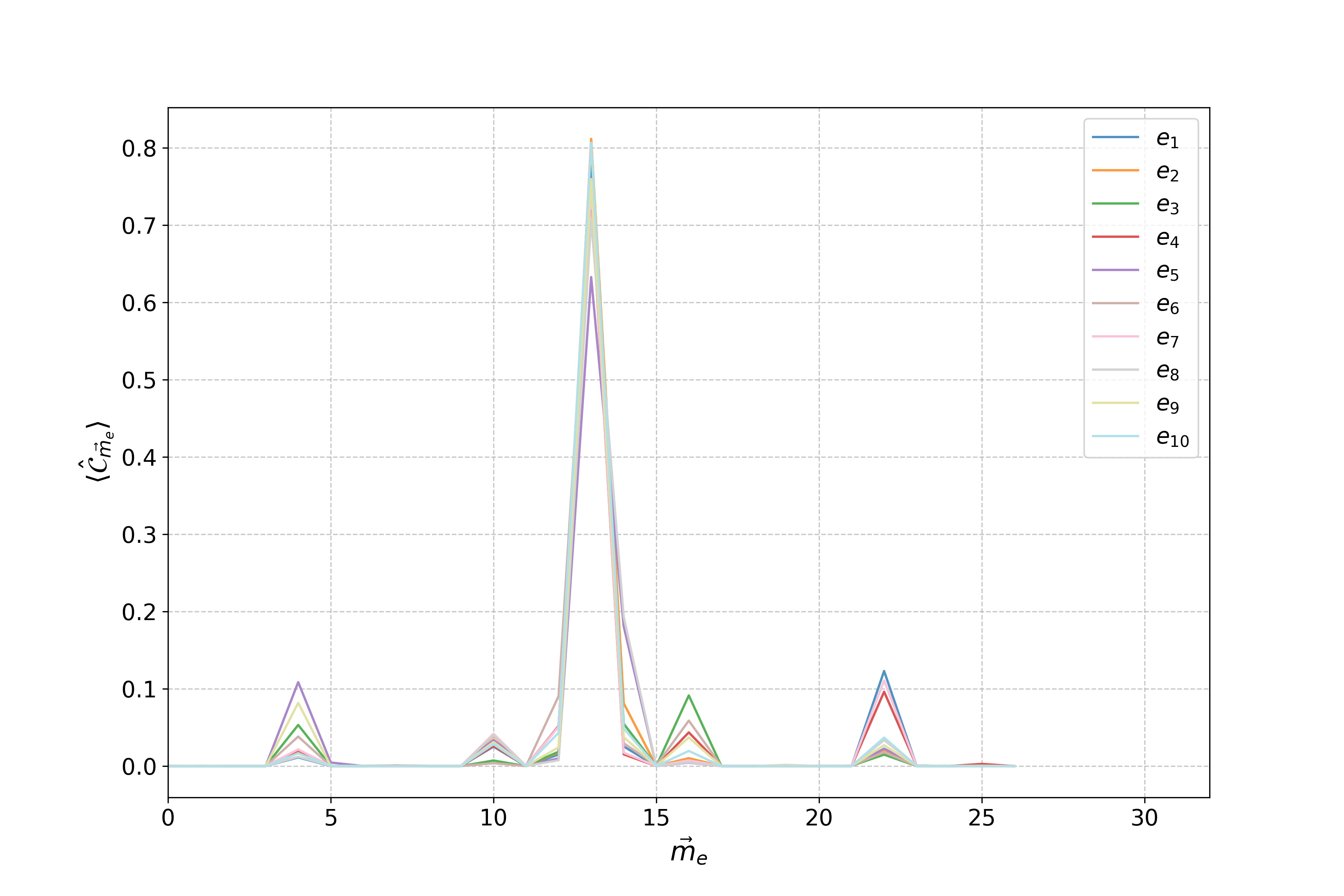}
        \caption{$m_\mathrm{max} = 1$}
        \label{fig:altchrom:a}
    \end{subfigure}
    \hfill
    \begin{subfigure}[t]{0.49\textwidth}
        \centering
        \includegraphics[width=\textwidth]{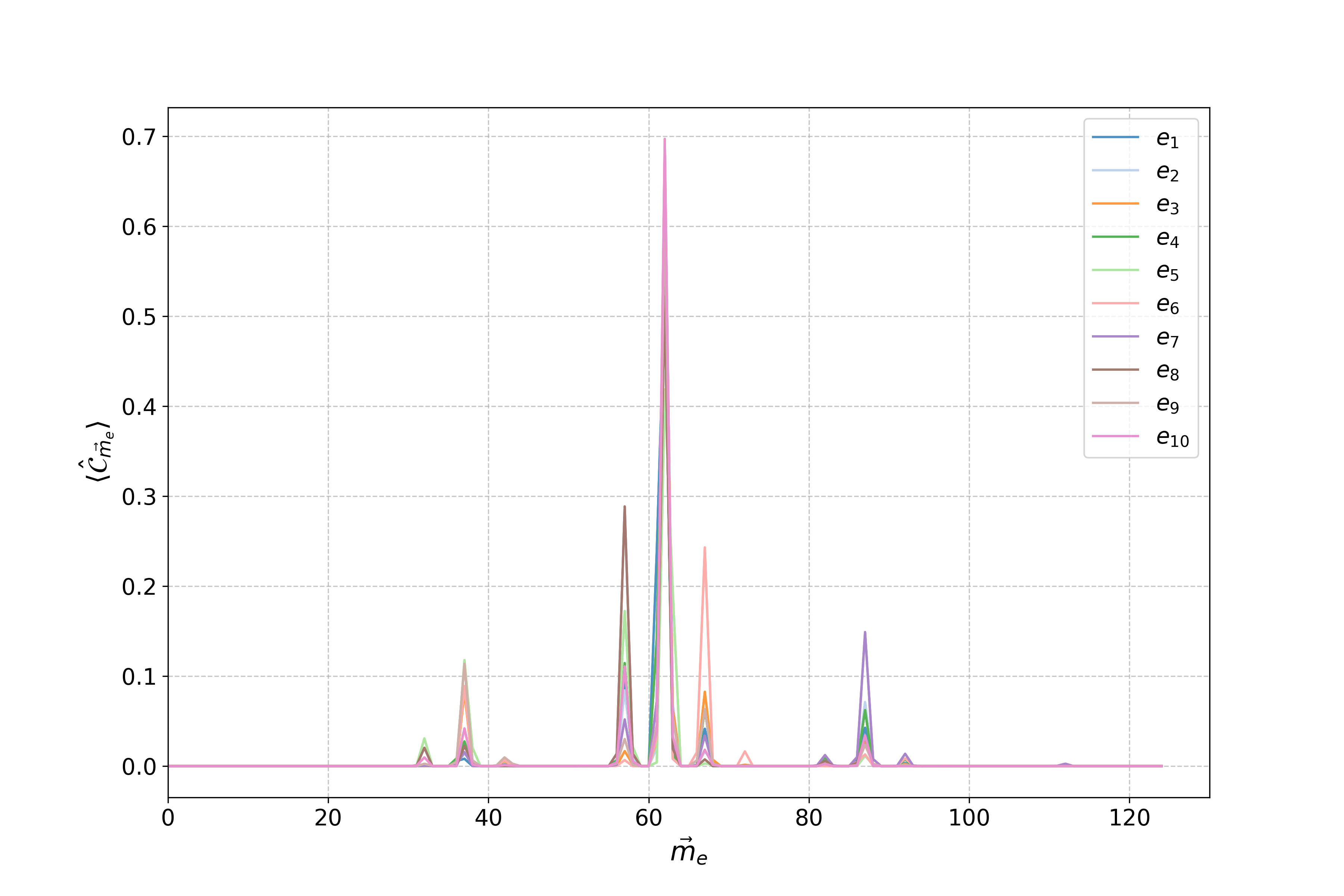}
        \caption{$ m_\mathrm{max} = 2$}
        \label{fig:altchrom:b}
    \end{subfigure}

    \vspace{0.5cm}

    \begin{subfigure}[t]{0.49\textwidth}
        \centering
        \includegraphics[width=\textwidth]{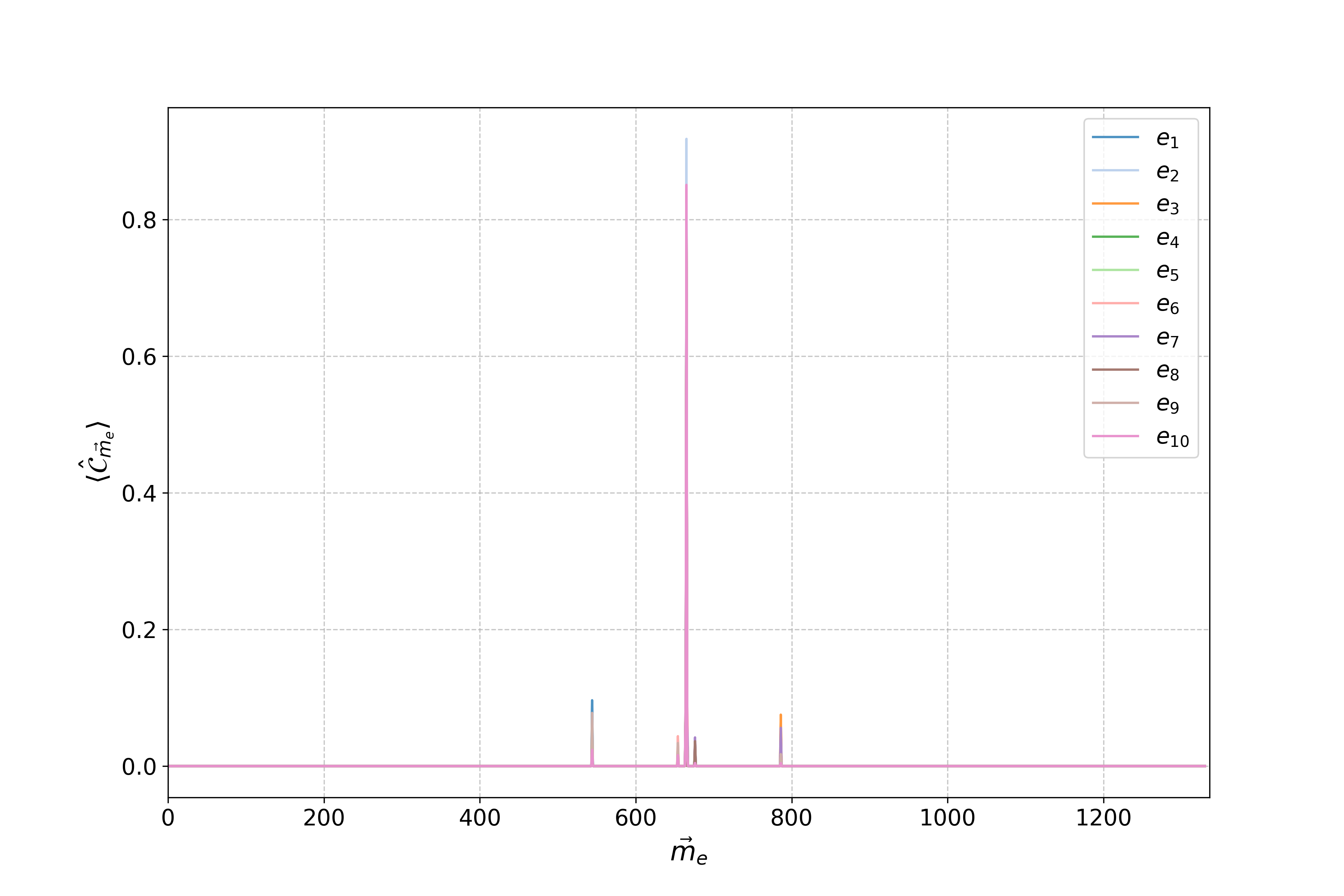}
        \caption{$m_\mathrm{max} = 5$}
        \label{fig:altchrom:c}
    \end{subfigure}
    \hfill
    \begin{subfigure}[t]{0.49\textwidth}
        \centering
        \includegraphics[width=\textwidth]{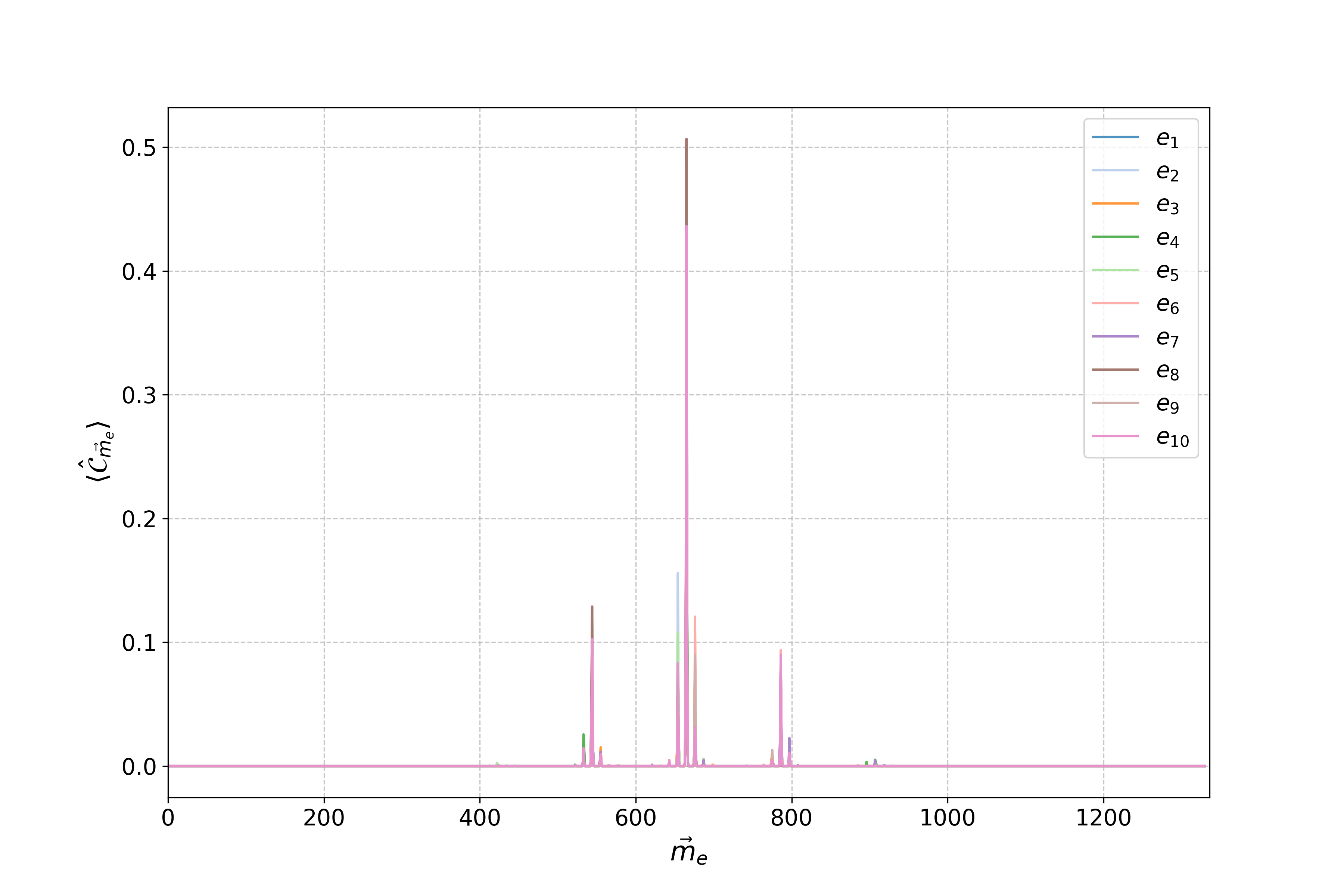}
        \caption{$m_\mathrm{max} = 5$}
        \label{fig:altchrom:d}
    \end{subfigure}

    \caption{
        Monte Carlo based chromaticity plots for representative type-B solutions for different cutoffs.
    }
    \label{fig:altchrom}
\end{figure}
\newparagraph
Figure \ref{fig:altchrom} shows Monte Carlo based chromaticity curves for representative type-B solutions for the cases of $m_\mathrm{max} = 1, 2$ in the top left and right respectively. Additionally, two $m_\mathrm{max} = 5$ chromaticity curves are shown at the bottom for two different solutions. As shown from all plots, the distinctive curve with the majority of the probability mass being concentrated on the $(0, 0, 0)$ charge vector is observed across all observed cutoffs. This is despite the fact that when the state-vectors are plotted using the stratified sampling, two solutions at the same cutoff although exhibiting different state-vector structure and thus indicating that their overlap is not substantial, they nevertheless have the same characteristic chromaticity curves as shown above. As also shown in the figure, contribution from near-cutoff charge vectors (concentrated on the left and right of each plot) are highly suppressed, if not zero, even as the cutoff is increased. This is a further evidence for the normalisability of the states obtained.

\subsubsection{Flatness and geometric observables}
\label{subsubsec:flatnessandobservables}
The variational smallness of the quadratic constraints establishes the proximity to the corresponding near-kernel subspaces but it does not, by itself, determine the geometric nature of the obtained states. This is particularly important in the present Thiemann regularised setting where the kernel of the Hamiltonian constraint is known to contain flat configurations as a distinguished subset of solutions. It is therefore natural to sharpen the characterisation by asking two additional questions. First, are the variational solutions supported on non-degenerate spatial geometries, as witnessed by a non-vanishing volume? Second, do they exhibit suppressed curvature excitations in the sense of (approximately) trivial holonomies around minimal loops? In what follows, we address these questions using two complementary sets of observables.
\newparagraph
To probe the curvature we use the minimal loop holonomy operators (as introduced in Section 3.1 of \cite{Sahlmann:2024kat}. Let $L(\gamma)$ denote the set of minimal loops $\alpha$ of the $K_5$ graph and let $\mathrm{tr} \hat{h}_\alpha$ be the corresponding loop quantised minimal holonomy operator. We monitor, for each $\alpha \in L(\gamma)$ both its expectation value $\langle \mathrm{tr}\hat{h}_\alpha \rangle_\mathrm{NQS}$ and its quantum fluctuation $(\Delta_\mathrm{NQS} \mathrm{tr}\hat{h}_\alpha)^2$.
\newparagraph
In the $\mathrm{U}(1)^3$ setting, flatness on a loop corresponds to the holonomy being component-wise trivial. Consequently, $\mathrm{tr} \hat{h}_\alpha$ attains its maximal value in the flat sector (equal to 3 in our normalisation) and suppressed fluctuations indicate that the state is sharply peaked on the flat value rather than merely averaging to it. As a second, global curvature diagnostic we additionally evaluate the BF-type flatness operator (cf. Section 3.1 of \cite{Sahlmann:2024kat})
\begin{equation}
    \hat{F} := 6|L(\gamma)|\mathbf{1} + \sum_{\alpha \in L(\gamma)} \mathrm{tr}[\hat{h}_\alpha + \hat{h}_\alpha^\dagger],
\end{equation}
and interpret $\langle \hat{f} \rangle_\mathrm{NQS} \approx 0$ as an aggregate indicator that the variational state satisfies, in expectation, the loop quantised analogue of a curvature (flatness) constraint. We emphasise that $\hat{F}$ is \emph{not} imposed during training but rather evaluated a posteriori and therefore provides and independent geometric probe that goes beyond the mere smallness of $\langle \hat{\mathcal{Q}}_i\rangle_\mathrm{NQS}$.
\newparagraph
Additionally, we compute for each vertex $v \in V(\gamma)$ the expectation value and fluctuation of the vertex volume operator $\hat{V}_v$. In the present truncation these quantities provide a direct proxy for degeneracy, as $\langle \hat{V}_v \rangle_\mathrm{NQS} \approx 0$ together with small fluctuations indicate that the probability mass of the variational state is concentrated on configurations that are predominantly volume-degenerate at $v$. 
\newparagraph
We begin with the type-A solutions whereby for this solution class, across all seeds and across all cutoffs, we find that for every $\alpha \in L(\gamma)$ the minimal loop holonomy expectations satisfy
\begin{equation}
    \langle \mathrm{tr} \hat{h}_\alpha \rangle_\mathrm{NQS} \approx 3 \quad,\quad (\Delta_\mathrm{NQS} \mathrm{tr}\hat{h}_\alpha)^2 \,\text{ strongly suppressed},
\end{equation}
up to Monte Carlo error bars. Consistently, the BF-type operator satisfies $\langle \hat{F} \rangle_\mathrm{NQS} \approx 0$ within the same statistical tolerance. Taken together, these observations provide strong evidence that the type-A solutions are \emph{strictly flat on all minimal loops} of $\gamma$, rather than merely exhibiting accidental cancellations in the quadratic constraint expectation value. We note that the same effect was observed in solutions obtained at $m_\mathrm{max} = 50$ and $m_\mathrm{max} = 100$ as well.
\newparagraph
The situation is distinctly different for the type-B solutions. In this case, for all cutoffs we observe minimal loop holonomy expectations being significantly displaced from the flat value, accompanied by fluctuations of comparable magnitude. Additionally, the curvature operator $\hat{F}$ is never driven close to zero. Thus, although type-B solution solve the alternative ordered quadratic constraint to (sometimes) near machine precision, they do \emph{not} correspond to flat configurations on the minimal cycles of the graph. 
\newparagraph
In combination with the chromaticity analysis of Section \ref{subsubsec:normalisability}, this establishes a sharp geometric distinction between the two ordering-induced solution classes. Type-A solutions are generically flat whereas type-B solutions represent non-flat curvature carrying states (and, as we will see below, typically volume-degenerate). This ordering dependence is important because it shows that solving the constraint in the variational sense is, at fixed graph and cutoff, not yet sufficient to isolate a unique or common geometric sector and different orderings can \emph{generically} admit qualitatively different near-kernel families one of which is driven into the flat sector, while the other supports curvature excitations. 
\newparagraph
Next, we examine the vertex volume operator as discussed prior. A priori, $\langle \hat{\mathcal{Q}}_i \rangle_\mathrm{NQS}$ being small does not fix these quantities as Thiemann-type operators are sensitive to volume-degenerate sectors and different orderings may (variationally) prefer states with very different volume support. Empirically, the two solution classes once again exhibit sharply distinct volume behaviour. For type-A solutions, we \emph{never} observe a collapse to the volume kernel. Instead, for all seeds and all cutoffs studied, the vertex volume expectations satisfy $\langle \hat{V}_v \rangle_\mathrm{NQS} > 0$ at \emph{every} vertex and in most runs, these values are approximately vertex-independent (i.e. $\langle \hat{V}_{v_1} \rangle_\mathrm{NQS} \approx \cdots \approx \langle \hat{V}_{v_5} \rangle_\mathrm{NQS}$ up to Monte Carlo uncertainty). Occasional mild asymmetry, typically in the form of one vertex exhibiting noticeably lower $\langle \hat{V}_v \rangle_\mathrm{NQS}$ is observed and is consistent with the fact that the volume depends explicitly on the embedding dependent orientation signs $\epsilon_v(e_i, e_j, e_k)$ at $v$. While the $K_5$ graph is combinatorially symmetric, the deterministic generic embeddings fixes a generic pattern of sign factors which need not be identical on all vertices and thus, $\hat{V}_v$ need not be related by an exact symmetry across $v$. In addition, we find that type-A volume fluctuations are large, often satisfying
\begin{equation}
    (\Delta_\mathrm{NQS} \hat{V}_v)^2 \gtrsim \langle \hat{V}_v \rangle_\mathrm{NQS},
\end{equation}
indicating that these states are \emph{not} sharply peaked on a semi-classical volume value but rather have broad support across volume eigen-sectors. Finally, it is also observed that volume expectation values systematically increase with the cutoff. This is expected qualitatively since enlarging $m_\mathrm{max}$ enlarges the accessible charge excitations and hence the typical scale of flux-like quantities entering $\hat{V}_v$ so neither the volume matrix elements nor the variationally realised volume support are expected to remain cutoff-independent without an additional renormalisation prescription. 
\newparagraph
By contrast, type-B solutions display an extreme and remarkably stable behaviour. Namely, for \emph{all} seeds and \emph{all} cutoffs, we find that
\begin{equation}
    \langle \hat{V}_v \rangle_\mathrm{NQS} \approx 0 \quad,\quad (\Delta_\mathrm{NQS} \hat{V}_v)^2 \approx 0,
\end{equation}
for all $v \in V(\gamma)$, within Monte Carlo tolerance. This indicates that the obtained type-B near-kernel states always lie in the kernel of $\hat{V}_v$. Moreover, $(\Delta_\mathrm{NQS} \hat{V}_v)^2 \approx 0$ indicates that the states are sharply supported on the \emph{zero} eigenvalue rather than merely having cancellations among different volume sectors. Importantly, this conclusion is independent of whether a particular Markov chain was initialised in an explicitly zero-volume configuration.
\newparagraph
Taken together, the minimal loop holonomy observables and the volume diagnostics reveal that the two ordering-induced near-kernel families occupy \emph{different} geometric sectors. Type-A solutions are consistently flat on all minimal loops while simultaneously exhibiting non-vanishing volume at every vertex, albeit with large relative fluctuations. Type-B solutions, on the other hand, are not flat yet they are exactly volume degenerate in the strong sense that they lie in $\bigcap_v \ker(\hat{V}_v)$. This ordering dependence is physically consequential, as it demonstrates that variational proximity to $\ker(\hat{\mathcal{Q}}_{\hat{H}})$ versus $\ker(\hat{\mathcal{Q}_2})$ does not merely change the numerical value of the quadratic constraint but can qualitatively select between flat, non-zero volume family and a curvature carrying zero volume family of solutions. In particular, it highlights that such additional geometric diagnostics are indispensable for interpreting the physical content of near-kernel NQS solutions and it motivates the correlation analysis of the next section as a further probe of such distinct geometric sectors.

\subsubsection{Long-range correlation}
\label{subsubsec:correlations}
On a finite graph, it is entirely possible to drive $\langle \hat{\mathcal{Q}}_i \rangle$ close to zero with states that are (i) essentially random in the charge-network basis (high entropy, almost maximally mixed on small subsystems), (ii) strongly constrained only locally (e.g. by Gauß invariance at vertices) or (iii) truly correlated across distant edges in a way that cannot be reduced to purely local constraint. Thereofre, the next characterisation test we conduct is constructed to probe whether the obtained near-kernel solutions exhibit \emph{non-local} structure on the graph. In the present setting, a natural probe of such non-locality is provided by edge-charge (chromaticity) correlations. The operators
\begin{equation}
    \hat{\mathcal{C}}_{\vec{m}}^{(e)} = \ket{\vec{m}_e} \bra{\vec{m}_e}
\end{equation}
project onto a fixed $\mathrm{U}(1)^3$ charge vector $\vec{m}$ on a given oriented edge $e \in E(\gamma)$. They therefore encode the most local information available in the charge-network basis, namely whether a particular edge carries a particular charge. If a variational solution class possesses non-trivial graph-scale organisation, one expects it to become apparent already at the level of joint statistics of these local projectors.
\newparagraph
Concretely, for two edges $e_1, e_2 \in E(\gamma)$ and two charge vectors $\vec{m}_1, \vec{m_2}$, we define the connected 2-point function
\begin{equation}
    G_{\vec{m}_1, \vec{m}_2}(e_1, e_2) := \langle \hat{\mathcal{C}}_{\vec{m}_1}^{(e_1)} \hat{\mathcal{C}}_{\vec{m}_2}^{(e_2)}\rangle - \langle \hat{\mathcal{C}}_{\vec{m}_1}^{(e_1)} \rangle \langle \hat{\mathcal{C}}_{\vec{m}_2}^{(e_2)} \rangle.
    \label{eq:chromaticity2pointfluctuations}
\end{equation}
This is precisely the covariance of the two indicator events ``edge $e_1$ carries $\vec{m}_1$'' and ``edge $e_2$ carries $\vec{m}_2$'', and thus isolates correlation content beyond what is already contained in the 1-point marginals.
\newparagraph
A key reason to focus on equation \eqref{eq:chromaticity2pointfluctuations} (rather than the raw $\langle \hat{\mathcal{C}}_{\vec{m}_1}^{(e_1)} \hat{\mathcal{C}}_{\vec{m}_2}^{(e_2)} \rangle$) is that the latter is dominated by the product of 1-point probabilities and can therefore seem structured even in completely uncorrelated states. In contrast, a genuine long-range effect must survive the subtraction in \eqref{eq:chromaticity2pointfluctuations}. Additionally, since $\hat{\mathcal{C}}_{\vec{m}}^{(e)}$ is a projector, one has the exact local fluctuation identity
\begin{equation}
    (\Delta \hat{\mathcal{C}}_{\vec{m}}^{(e)})^2 =  \langle (\hat{\mathcal{C}}_{\vec{m}}^{(e)})^2 \rangle - \langle \hat{\mathcal{C}}_{\vec{m}}^{(e)} \rangle^2 = \langle \hat{\mathcal{C}}_{\vec{m}}^{(e)} \rangle (1 - \langle \hat{\mathcal{C}}_{\vec{m}}^{(e)} \rangle),
    \label{eq:chromaticitydelta}
\end{equation}
which sets the intrinsic scale against which any connected correlations should be compared.
\newparagraph
We begin with the type-A family. Empirically (cf. the chromaticity discussion in Section \ref{subsubsec:normalisability}), these solutions exhibit a pronounced \emph{delocalisation} in charge space, whereby edge-wise marginals are close to the maximally mixed benchmark
\begin{equation}
    P_e(\vec{m}) = \langle \hat{\mathcal{C}}_{\vec{m}}^{(e)} \rangle \approx \frac{1}{(2m_\mathrm{max} + 1)^3},
    \label{eq:empiricaldelocalisation}
\end{equation}
with weak dependence on $e$ and on $\vec{m}$ (within Monte Carlo tolerance). For example, at $m_\mathrm{max} = 2$, one has $P_e(\vec{m}) \approx 8 \times 10^{-3}$, and consequently products of 1-point probabilities are naturally of order $P_{e_1}(\vec{m}_1) P_{e_2}(\vec{m}_2) \sim 6 \times 10^{-5}$. In what follows, we present a precise theoretical expectation for the connected correlator \eqref{eq:chromaticity2pointfluctuations} in the type-A class.
\begin{itemize}
    \item[(i)] \emph{Exact same-edge identities.} For $e_1 = e_2 = e$, the projector algebra fixes $G_{\vec{m}_1, \vec{m}_2}(e, e)$ essentially kinematically. If $\vec{m}_1 \neq \vec{m}_2$, then $\hat{\mathcal{C}}_{\vec{m}_1}^{(e)} \hat{\mathcal{C}}_{\vec{m}_2}^{(e)} = 0$ and hence,
    \begin{equation}
        G_{\vec{m}_1, \vec{m}_2}(e, e) = -P_e(\vec{m}_1) P_e(\vec{m}_2) \qquad (\vec{m}_1 \neq \vec{m}_2).
    \end{equation}
    If $\vec{m}_1 = \vec{m}_2 = \vec{m}$, then $(\hat{\mathcal{C}}_{\vec{m}}^{(e)})^2 = \hat{\mathcal{C}}_{\vec{m}}^{(e)}$ and therefore,
    \begin{equation}
        G_{\vec{m}_1, \vec{m}_2}(e, e) = P_e(\vec{m})(1 - P_e(\vec{m})),
    \end{equation}
    which is simply equation \eqref{eq:chromaticitydelta}. These relations must hold for \emph{any} state (type-A or type-B), and thus provide a stringent implementation-level check in Monte Carlo settings.

    \item[(ii)] \emph{Off-diagonal factorisation in the maximally mixed hypothesis.} The type-A empirical delocalisation \eqref{eq:empiricaldelocalisation} motivates the working hypothesis that the type-A solutions are, at least on small subsystems, close to a \emph{microcanonical} (high entropy) distribution over the gauge invariant charge configurations compatible with the truncation. In that regime, reduced density matrices on \emph{few} edges are expected to be close to maximally mixed (up to the local Gauß constraints at the vertices they touch), implying that for distinct edges $e_1 \neq e_2$, 
    \begin{equation}
        \langle \hat{\mathcal{C}}_{\vec{m}_1}^{(e_1)} \hat{\mathcal{C}}_{\vec{m}_2}^{(e_2)}\rangle \approx P_{e_1}(\vec{m}_1) P_{e_2}(\vec{m}_2) + \varepsilon_{\vec{m}_1, \vec{m}_2}(e_1, e_2),
        \label{eq:maxhypcorr}
    \end{equation}
    where $\varepsilon$ captures any constraint-induced correlations and any genuine dynamical correlations. For a purely factorised (product) state one would have $\varepsilon \equiv 0$ identically while for gauge invariant ensembles $\varepsilon$ is expected to be non-zero primarily when $e_1$ and $e_2$ share a vertex (since the local Gauß constraints couple incident edges) and to be strongly suppressed for disjoint edges. In either case, \eqref{eq:maxhypcorr} implies that the connected correlator should satisfy
    \begin{equation}
        G_{\vec{m}_1, \vec{m}_2}(e_1, e_2) \approx \varepsilon_{\vec{m}_1, \vec{m}_2}(e_1, e_2), \qquad e_1 \neq e_2.
        \label{eq:doestypeacorrelate}
    \end{equation}
    That is, it should be \emph{parametrically small} compared to the raw 2-point function and, crucially, it should \emph{not} display a stable large-scale pattern that persists under increasing Monte Carlo statistics.
\end{itemize}
We now confront the expectation \eqref{eq:doestypeacorrelate} with explicit Monte Carlo estimates of \eqref{eq:chromaticity2pointfluctuations} on the $K_5$ graph at $m_\mathrm{max} = 2$ as an example. We focus on some representative charge vectors
\begin{equation}
    \vec{m}_1 = (-2, -2, 0) \quad,\quad \vec{m}_2 = (1, -2, -1),
\end{equation}
and evaluate $G_{\vec{m}_1, \vec{m}_2}(e_1, e_2)$ for all ordered edge pairs $(e_1, e_2)$. Figure \ref{fig:typeA_connected_progressive} shows the resulting $10 \times 10$ correlation matrices for progressively increasing number of Monte Carlo samples, ranging from $N_S = 1350$ and up to $N_S = 225,000$.
\begin{figure}[ht]
  \centering
  \includegraphics[width=0.32\textwidth]{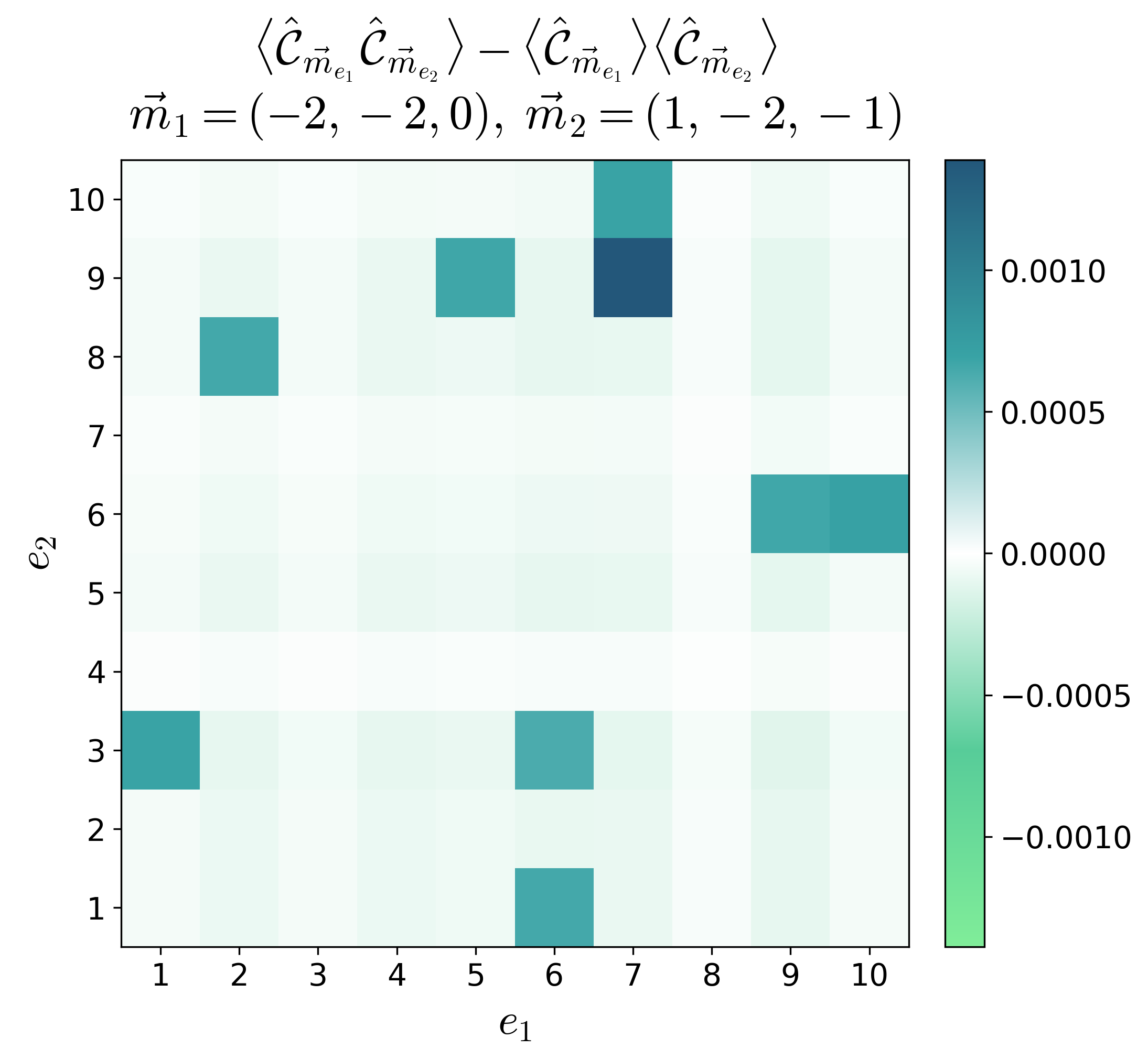}
  \hfill
  \includegraphics[width=0.32\textwidth]{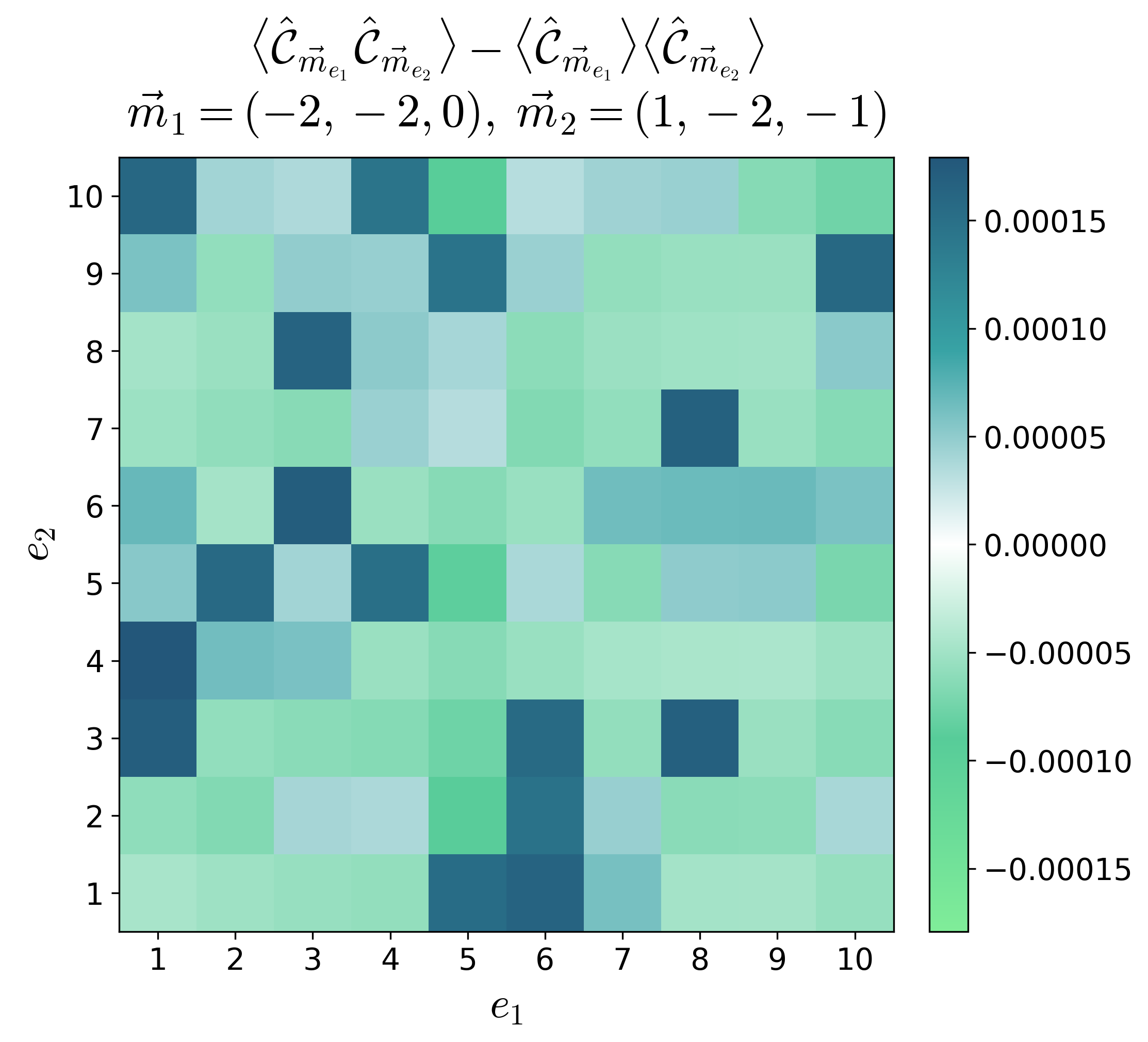}
  \hfill
  \includegraphics[width=0.32\textwidth]{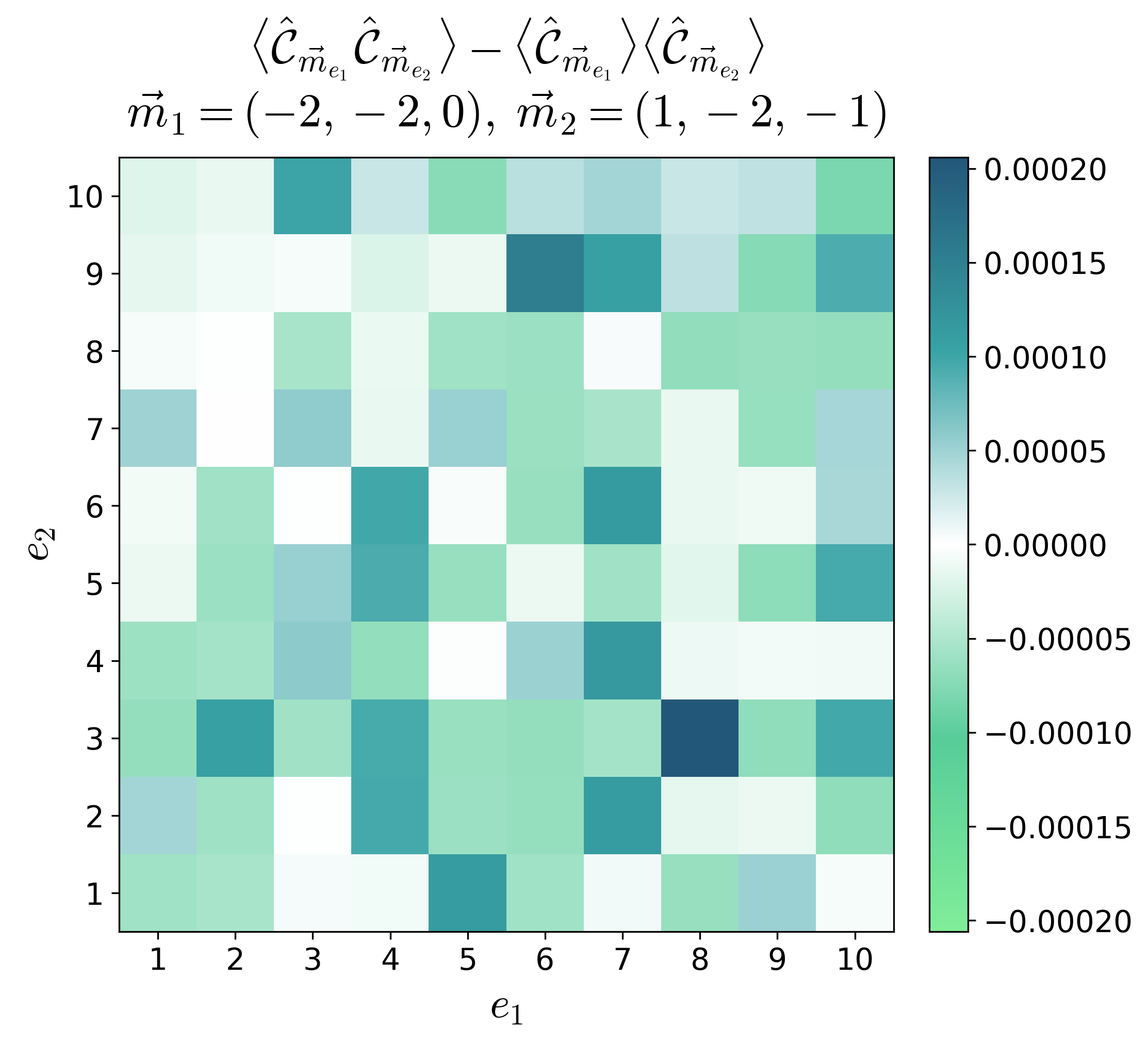}
  \\[2mm]
  \hspace{45pt}
  \includegraphics[width=0.36\textwidth]{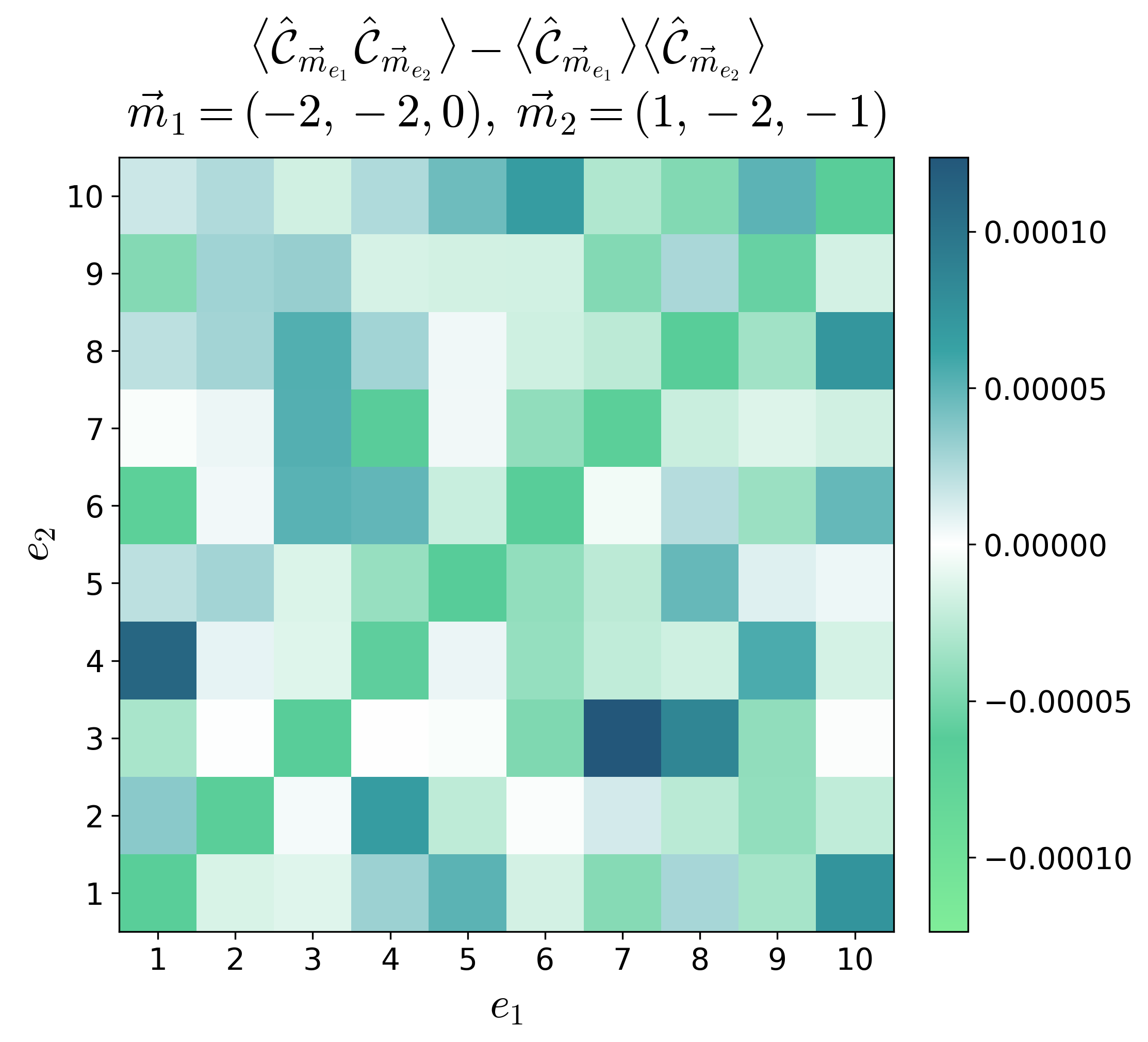}
  \hfill
  \includegraphics[width=0.32\textwidth]{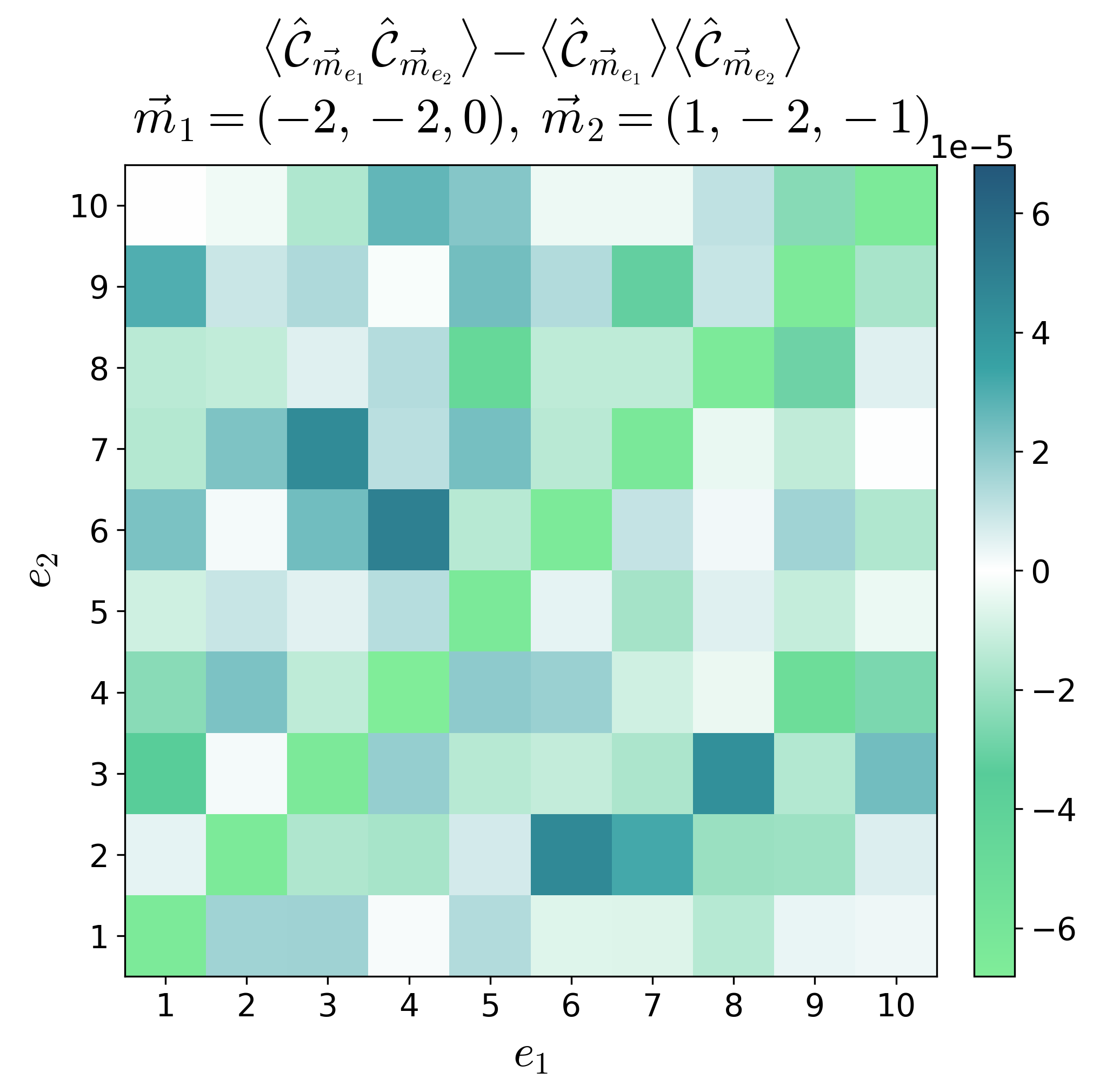}
  \hspace{55pt}
  \hfill
  \caption{
  Type-A (standard ordering) connected chromaticity 2-point function
  $G_{\vec m_1,\vec m_2}(e_1,e_2)$ at cutoff $m_{\max}=2$ for $\vec m_1=(-2,-2,0)$ and $\vec m_2=(1,-2,-1)$ is shown for increasing Monte Carlo sample sizes $N_S$.
  The colour scale is symmetric about zero in each panel. Reading from the top left, the $N_S$ is 1350, 9000 and 18,000 for the top row and 45,000 and 225,000 for the bottom left and right figures respectively.
  }
  \label{fig:typeA_connected_progressive}
\end{figure}
\newparagraph
Qualitatively from the Figure \ref{fig:typeA_connected_progressive}, two features are immediate. First, the matrices do not exhibit a robust spatial pattern that stabilises with increasing $N_S$. Rather, the dominant visible structure changes as statistics are increased. Second, the overall amplitude decreases substantially with $N_S$, with typical entries moving from the $10^{-4}$ - $10^{-5}$ range toward the few $10^{-5}$ range at $N_S = 225,000$. This is precisely the behaviour expected if the residual signal is dominated by Monte Carlo noise about a small underlying mean rather than by a persistent physical correlation.
\newparagraph
At the same time, purely visual inspection is not a reliable discriminator as even noise can produce apparently coherent patches at finite resolution in a small $10 \times 10$ matrix. To make the comparison with \eqref{eq:doestypeacorrelate} quantitative, we therefore perform the following two targeted diagnostics.
\begin{itemize}
    \item[(i)] \emph{Exact diagonal identity check.} For $\vec{m}_1 \neq \vec{m}_2$, the diagonal entries are fixed by 
    \begin{equation}
        G_{\vec{m}_1, \vec{m}_2} (e, e) \overset{!}{=} - P_e(\vec{m}_1) P_e(\vec{m}_2).
    \end{equation}
    Numerically, this identity is observed to be satisfied to machine precision (max, mean and root-mean squared (RMS) absolute deviations are all zero within floating-point representation), demonstrating that the prominent diagonal structure visible in the plots is \emph{not} a long-range effect but a kinematic projector-algebra consequence. In particular, at $m_\mathrm{max} = 2$, the diagonal magnitude $\sim 6.5 \times 10^{-5}$ matches the natural scale $P_e(\vec{m}_1) P_e(\vec{m}_2) \sim (1/125)^2$.

    \item[(ii)] \emph{Adjacency versus disjoint edge pairs.} On $K_5$, each edge is adjacent (shares a vertex) to exactly 6 edges and disjoint from exactly 3 edges. This yields a clean notion of graph distance for 2-edge observables. If type-A solutions had genuine long-range correlations in $\hat{\mathcal{C}}$, one would expect a clear difference between the typical magnitude of $G$ on adjacent pairs and on disjoint pairs (or, at least, a stable trend in that direction as $N_S$ increases). We therefore partition, for each fixed $e_1$, the set of $e_2 \neq e_1$ into adjacent and disjoint subsets and compare summary statistics of the corresponding connected correlator entries.  For the largest-statistics run with $N_S = 225,000$, we obtain
    \begin{align}
        \mathrm{meanabs}_\mathrm{adj} & = 1.650429 \times 10^{-5}, \quad &\mathrm{meanabs}_\mathrm{dis} & = 1.412663 \times 10^{-5}, \\
        \mathrm{rms}_\mathrm{adj} & = 2.005307 \times 10^{-5}, \quad &\mathrm{rms}_\mathrm{dis} & = 1.836863 \times 10^{-5},
    \end{align}
    corresponding to
    \begin{equation}
        \Delta_\mathrm{meanabs} := \mathrm{meanabs}_\mathrm{adj} - \mathrm{meanabs}_\mathrm{dis} = 2.377666 \times 10^{-6},
    \end{equation}
    and 
    \begin{equation}
        \frac{\mathrm{meanabs}_\mathrm{adj}}{\mathrm{meanabs}_\mathrm{dis}} = 1.168311.
    \end{equation}
    While the ratio is slightly above unity, its magnitude is small compared to the overall stochastic spread of entries and, by itself, does not establish a genuine effect. We note that here, we write $\mathrm{meanabs}_{\mathrm{dis}}$ (respectively $\mathrm{meanabs}_{\mathrm{adj}}$) for the empirical average of the absolute connected correlator entries over all off-diagonal disjoint (respectively adjacent) edge pairs. These sign-insensitive summaries quantify the typical fluctuation scale of the off-diagonal correlator and are therefore the natural diagnostics for distinguishing genuine adjacency-enhanced correlations from cancellations or noise.

    \item[(iii)] \emph{Scaling with Monte Carlo statistics.} Finally, we examine how the typical fluctuation scale decays with the sample size $N_S$. Table \ref{tab:typeA_scaling} summarises the RMS and adjacency/disjoint mean-absolute statistics across the progressively larger runs. The overall RMS decreases markedly with $N_S$ which is consistent with the expected $\sim N_S^{-1/2}$ Monte Carlo convergence of covariance estimators. In contrast, the adjacency/disjoint ratio does not stabilise to a clear non-trivial limit. Rather, it fluctuates around $O(1)$.
    \begin{table}[ht]
        \small
        \centering
        \caption{
        Scaling of adjacency vs.\ disjoint statistics for the type-A connected chromaticity correlator $G_{\vec m_1,\vec m_2}(e_1,e_2)$ at cutoff $m_{\max}=2$ (same $(\vec m_1,\vec m_2)$ as in Fig.~\ref{fig:typeA_connected_progressive}). Here ``adj'' means $e_1$ and $e_2$ share a vertex, ``dis'' means they are disjoint.
        }
        \begin{tabular}{c|cccc}
            \toprule[1.2pt]
            $N_S$ & $\mathrm{rms}_{\rm dis}$ & $\mathrm{rms}_{\rm adj}$ & $\Delta_{\rm meanabs}$ & $\mathrm{meanabs}_{\rm adj}/\mathrm{meanabs}_{\rm dis}$\\
            \midrule
            $22,500$  & $5.9574\times 10^{-5}$ & $5.3608\times 10^{-5}$ & $-8.0046\times 10^{-6}$ & $0.8416$\\
            $45,000$  & $4.2947\times 10^{-5}$ & $4.1169\times 10^{-5}$ & $\phantom{-}1.8823\times 10^{-6}$ & $1.0555$\\
            $90,000$  & $2.4933\times 10^{-5}$ & $3.1517\times 10^{-5}$ & $\phantom{-}3.7775\times 10^{-6}$ & $1.1749$\\
            $225,000$ & $1.3779\times 10^{-5}$ & $1.7241\times 10^{-5}$ & $\phantom{-}2.4600\times 10^{-6}$ & $1.2147$ \\
            \bottomrule[1.2pt]
        \end{tabular}
        \label{tab:typeA_scaling}
    \end{table}
\end{itemize}
Taken together, these diagnostics align closely with the theoretical expectation for the type-A family. The only robust non-zero structure in $G_{\vec{m}_1, \vec{m}_2}(e_1, e_2)$ is the diagonal contribution fixed by the projector identities, which is entirely kinematic and not indicative of inter-edge correlations. Off the diagonal, the connected correlator is small in magnitude and decreases with increasing Monte Carlo statistics, consistent with a near-factorised joint distribution and with residual fluctuations dominated by sampling noise. The type-A solutions are consistent with exhibiting \emph{no measurable long-range correlations} in chromaticity observables beyond those enforced locally by the projector algebra. 
\newparagraph
We now repeat the same correlation analysis for the type-B near-kernel family at the same cutoff $m_\mathrm{max} = 2$ where we consider the charge vectors
\begin{equation}
    \vec{m}_1 = (-1, 0, 0) \quad,\quad \vec{m}_2 = (1, 0, 0),
\end{equation}
i.e. opposite fundamental excitations in a single $\mathrm{U}(1)$ component. 
\begin{figure}[t]
\centering
\begin{subfigure}{0.32\textwidth}
    \centering
    \includegraphics[width=\textwidth]{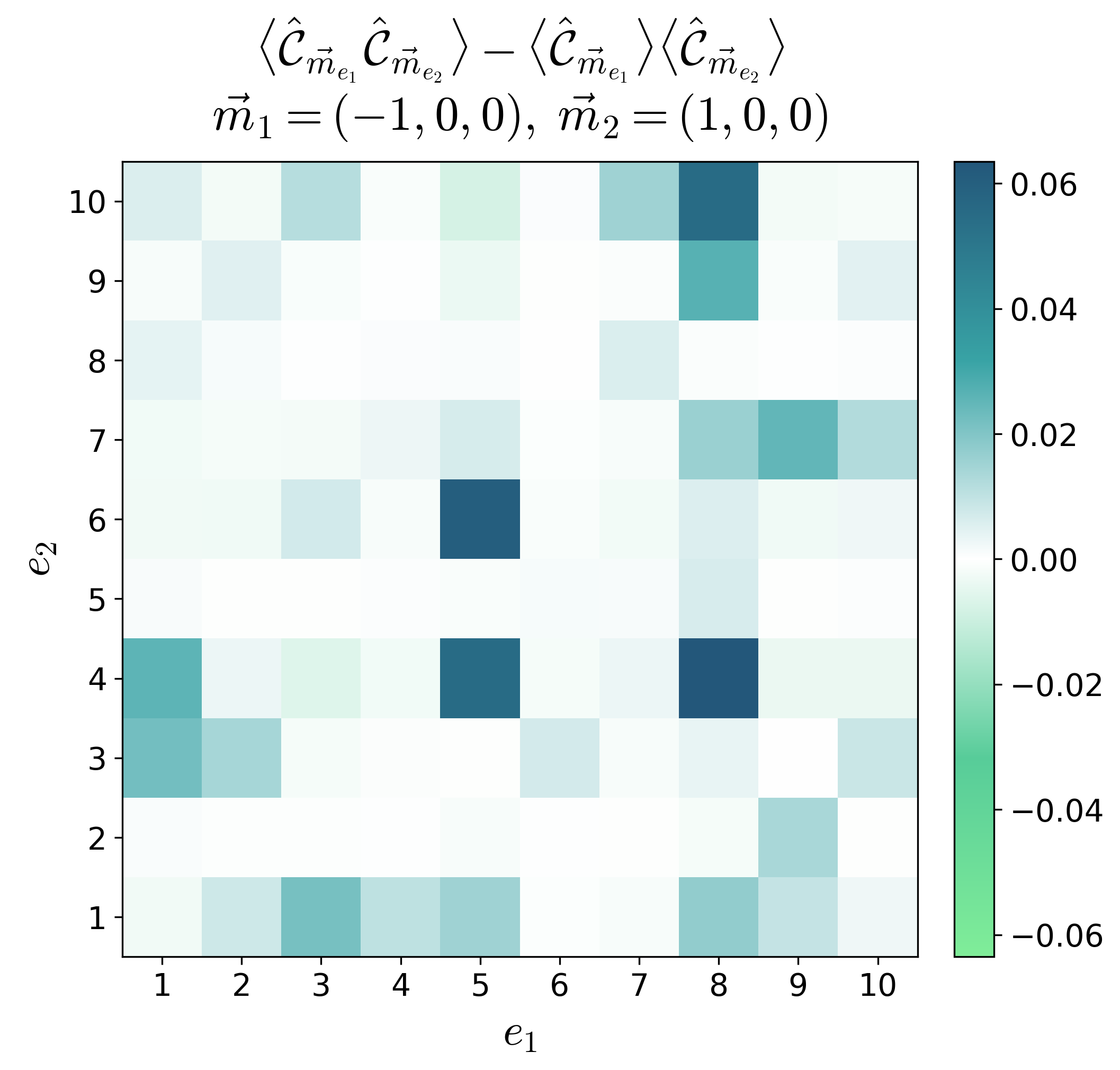}
\end{subfigure}\hfill
\begin{subfigure}{0.32\textwidth}
    \centering
    \includegraphics[width=\textwidth]{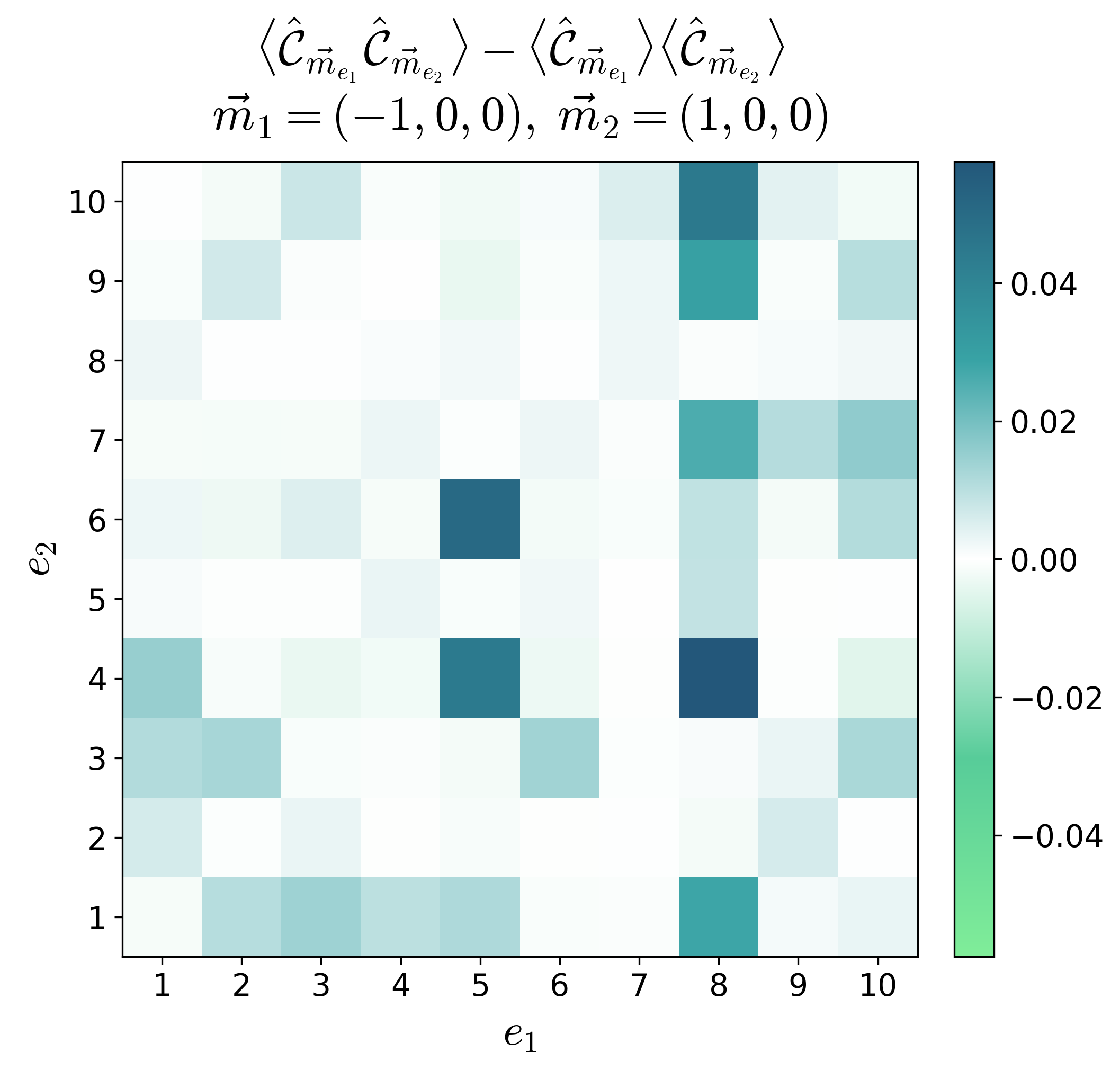}
\end{subfigure}\hfill
\begin{subfigure}{0.32\textwidth}
    \centering
    \includegraphics[width=\textwidth]{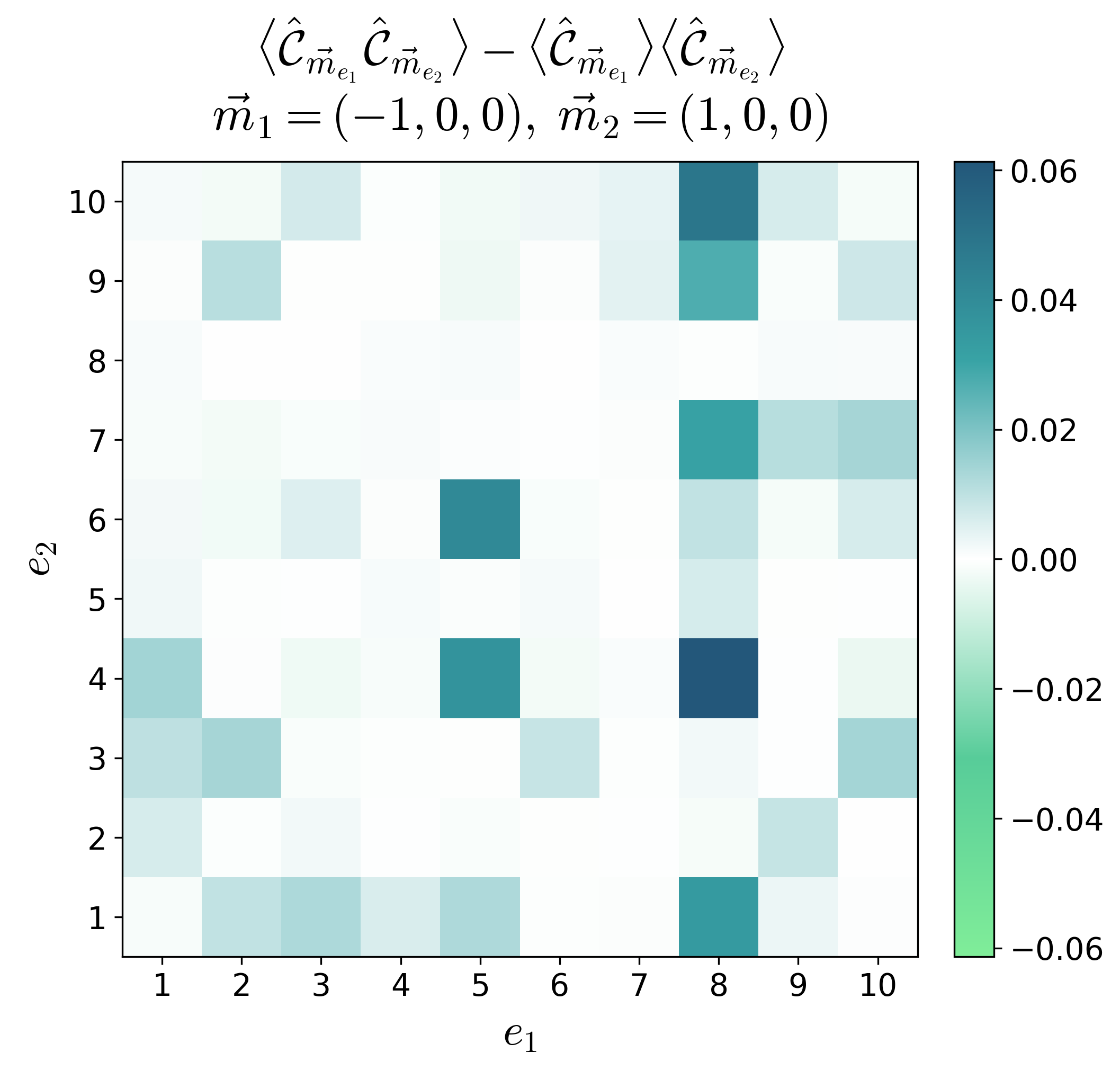}
\end{subfigure}

\medskip

\hspace{50pt}
\begin{subfigure}{0.32\textwidth}
    \centering
    \includegraphics[width=\textwidth]{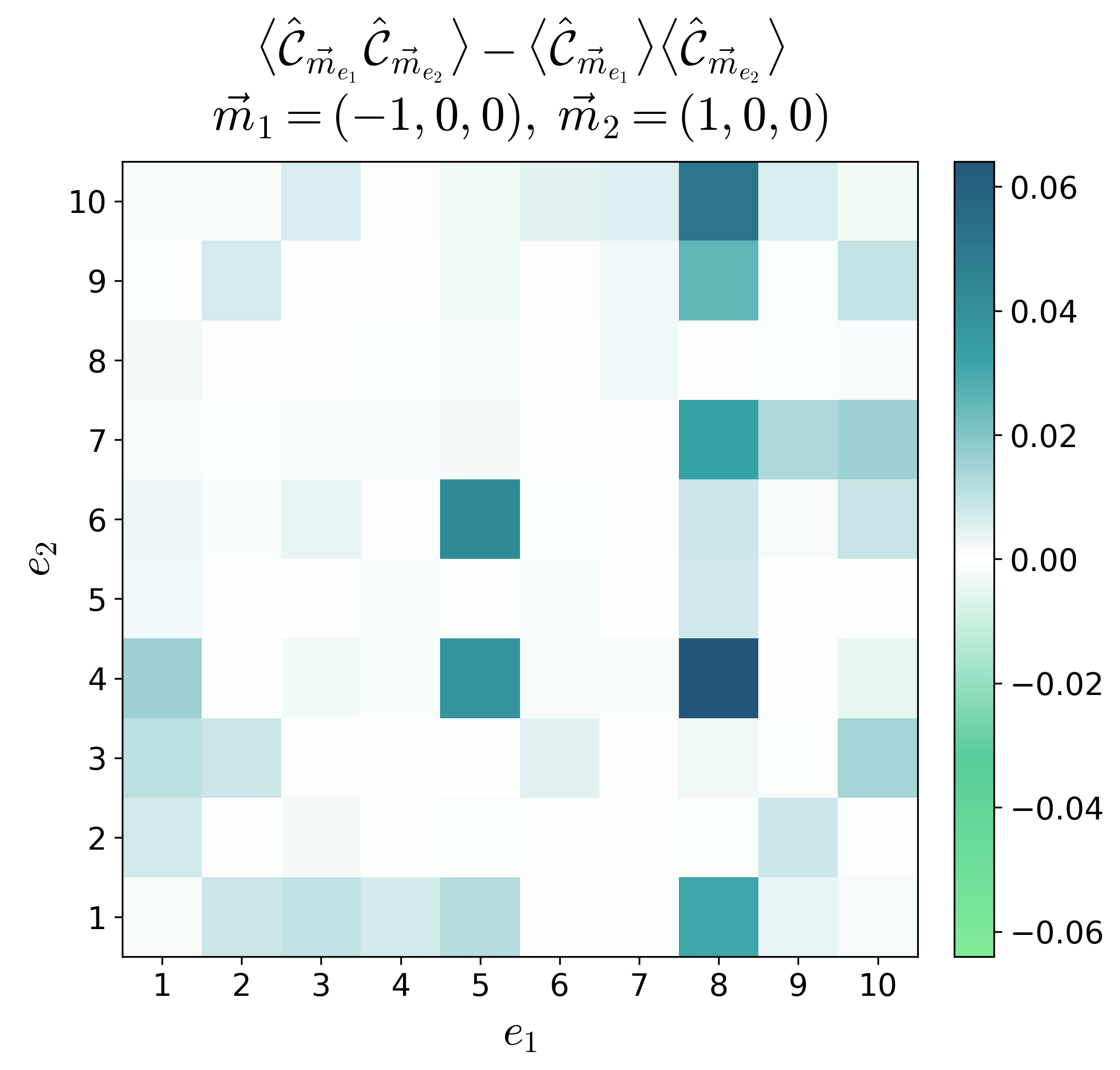}
\end{subfigure}\hfill
\begin{subfigure}{0.32\textwidth}
    \centering
    \includegraphics[width=\textwidth]{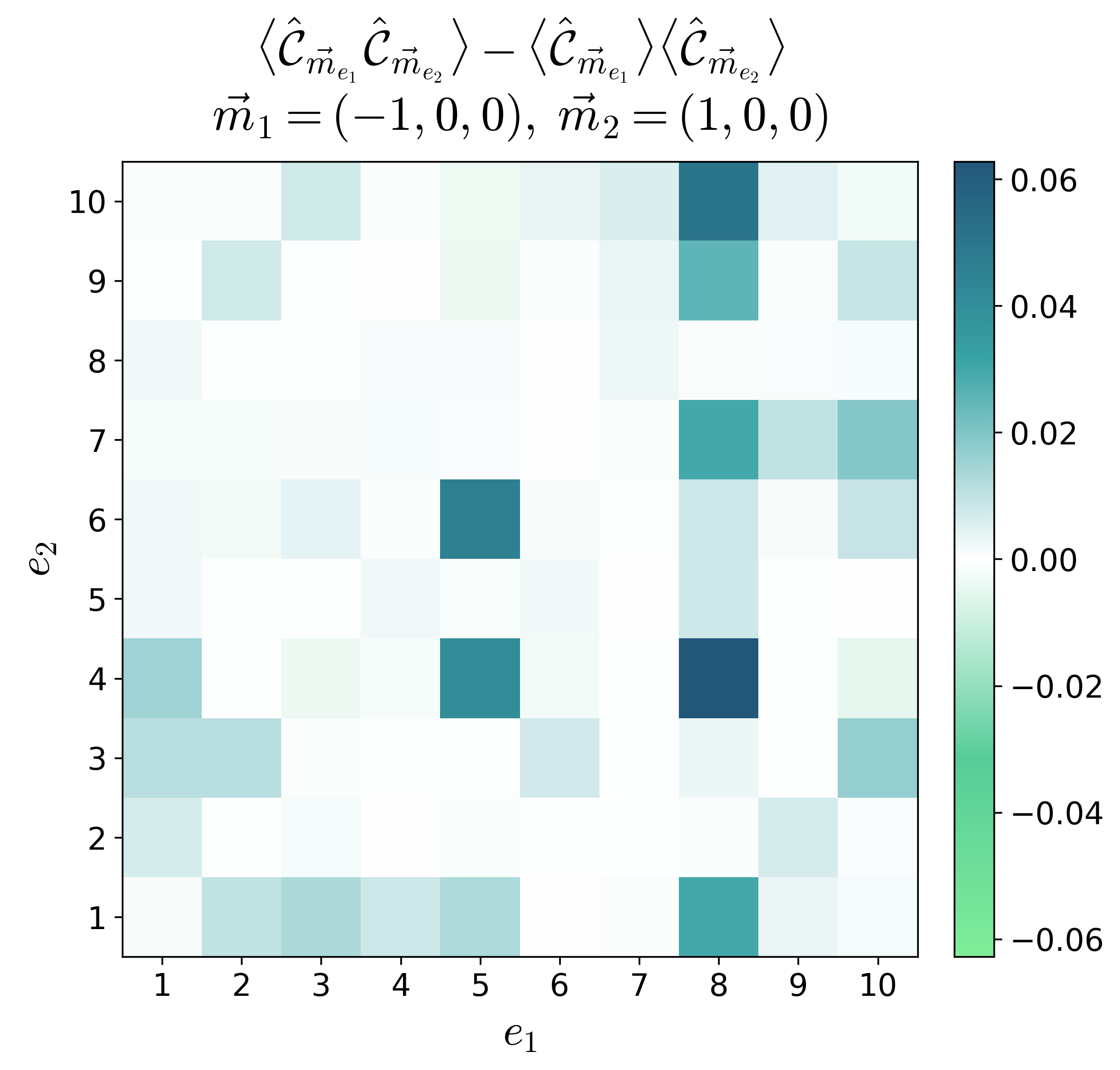}
\end{subfigure}
\hspace{50pt}
\caption{Type-B connected chromaticity 2-point function $G_{\vec m_1,\vec m_2}(e_1,e_2)$ at cutoff $m_{\max}=2$ for $\vec m_1=(-1,0,0)$ and $\vec m_2=(1,0,0)$, shown for increasing Monte Carlo statistics $N_S$ (top row left-to-right: $1350$, $9000$ and $18,000$, bottom row: $45,000$ and $225,000$). The colour scale is symmetric about zero in each panel. Unlike the type-A case, the off-diagonal pattern persists and stabilises rapidly with increasing $N_S$, indicating a robust correlation structure.}
\label{fig:typeB_corr}
\end{figure}
\newparagraph
Figure \ref{fig:typeB_corr} shows $G_{\vec{m}_1, \vec{m}_2}(e_1, e_2)$ for a representative type-B solution at $m_\mathrm{max} = 2$ as a $10 \times 10$ matrix over ordered edge pairs $(e_1, e_2)$ for increasing Monte Carlo sample sizes. In stark contrast to the type-A case (Figure \ref{fig:typeA_connected_progressive}), two qualitative features are immediate. First, the off-diagonal structure is already pronounced at low statistics and remains visually stable as $N_S$ increases. Second the overall amplitude scale does \emph{not} decay toward zero with increasing $N_S$, which already suggests that the observed pattern is not a sampling artefact but reflects a genuine covariance structure of the learned states.
\newparagraph
To once again convert the above qualitative statements into a precise statement about adjacency-driven versus longer-range structure, we apply the same two diagnostics as for type-A.
\begin{itemize}
    \item[(i)] \emph{Exact diagonal identity check.} As emphasised above, the diagonal entries for $\vec{m}_1 \neq \vec{m}_2$ are fixed kinematically by the projector algebra. Numerically, this identity is again satisfied to machine precision as in the case for type-A solutions. Thus, as in type-A, any long-range information must reside in the off-diagonal sector.

    \item[(ii)] \emph{Adjacency versus disjoint edge pairs.} We again partition the off-diagonal entries row-wise into adjacent and disjoint sets and compare summary statistics of their magnitudes. For the largest statistics run ($N_S = 225,000$), we obtain
    \begin{align}
        \mathrm{meanabs}_\mathrm{adj} & = 8.906847 \times 10^{-3}, \quad &\mathrm{meanabs}_\mathrm{dis} & = 1.022913 \times 10^{-3}, \\
        \mathrm{rms}_\mathrm{adj} & = 1.571974 \times 10^{-2}, \quad &\mathrm{rms}_\mathrm{dis} & = 1.875296 \times 10^{-3},
    \end{align}
    corresponding to
    \begin{equation}
        \Delta_\mathrm{meanabs} := \mathrm{meanabs}_\mathrm{adj} - \mathrm{meanabs}_\mathrm{dis} = 7.883934 \times 10^{-3},
    \end{equation}
    and 
    \begin{equation}
        \frac{\mathrm{meanabs}_\mathrm{adj}}{\mathrm{meanabs}_\mathrm{dis}} = 8.707339.
    \end{equation}
    This is qualitatively different from type-A solutions. Namely, adjacency-pair connected correlations are nearly an order of magnitude larger in typical absolute magnitude than disjoint-pair correlations.

    \item[(iii)] \emph{Scaling with Monte Carlo statistics.} Finally, Table \ref{tab:typeB_scaling} reports the same summary statistics across progressively increasing $N_S$. The key observation is that, unlike type-A, neither $\mathrm{rms}_\mathrm{adj}$ nor the adjacency/disjoint ratio collapses toward unity with increasing statistics. Rather, $\mathrm{rms}_\mathrm{adj}$ remains stable at $\approx 1.6 \times 10^{-2}$ across all runs and the ratio stabilises at a value around $7 \sim 8$. This stability is precisely what one expects for a \emph{physical} correlation signal that has converged in Monte Carlo error, rather than for noise-dominated covariance estimates).
        \begin{table}[ht]
        \small
        \centering
        \caption{
        Scaling of adjacency vs.\ disjoint statistics for the type-B connected chromaticity correlator $G_{\vec m_1,\vec m_2}(e_1,e_2)$ at cutoff $m_{\max}=2$ for $\vec{m}_1 = (-1, 0, 0)$ and $\vec{m}_2 = (1, 0, 0)$. Here ``adj'' means $e_1$ and $e_2$ share a vertex while ``dis'' means they are disjoint. In contrast to type-A (Table \ref{tab:typeA_scaling}), the adjacency enhancement is large and stable in $N_S$ indicating a robust correlation structure rather than noise.
        }
        \begin{tabular}{c|cccc}
            \toprule[1.2pt]
            $N_S$ & $\mathrm{rms}_{\rm dis}$ & $\mathrm{rms}_{\rm adj}$ & $\Delta_{\rm meanabs}$ & $\mathrm{meanabs}_{\rm adj}/\mathrm{meanabs}_{\rm dis}$\\
            \midrule
            $22,500$  & $2.3598\times 10^{-3}$ & $1.6420\times 10^{-2}$ & $7.5167\times 10^{-3}$ & $6.7575$\\
            $45,000$  & $1.8444\times 10^{-3}$ & $1.5578\times 10^{-2}$ & $7.8853\times 10^{-3}$ & $8.4462$\\
            $90,000$  & $2.1096\times 10^{-3}$ & $1.5962\times 10^{-2}$ & $8.1019\times 10^{-3}$ & $8.0343$\\
            $225,000$ & $2.1467\times 10^{-3}$ & $1.5788\times 10^{-2}$ & $7.8982\times 10^{-3}$ & $7.9406$ \\
            \bottomrule[1.2pt]
        \end{tabular}
        \label{tab:typeB_scaling}
    \end{table}
\end{itemize}
The type-B results are not only different from type-A, they are, in fact, structurally consistent with what one should expect from a sharply concentrated gauge invariant state whose dominant non-trivial support consists of sparse flux excitations over a mostly-$\vec{0}$ background. 
\newparagraph
First, the strong adjacency enhancement is precisely the signature of local Gauß driven organisation. If a $(-1, 0, 0)$ event occurs on an edge, then (in a regime where most other incident edges carry 0 charge) the constraint must be satisfied by producing compensating $(+1, 0, 0)$ charge on one of the neighbouring edges at one of the endpoints, yielding large positive covariances on adjacent pairs. The observed near order of magnitude separation in $\mathrm{meanabs}_\mathrm{adj} / \mathrm{meanabs}_\mathrm{dis}$ therefore aligns with the picture that type-B near-kernel solutions are \emph{not} high-entropy, near factorised states but rather they exhibit strong constraint saturated local structure.
\newparagraph
Second, while the disjoint sector is substantially suppressed relative to adjacency, \emph{it remains non-negligible} in absolute terms and, crucially, does not decay away with increasing $N_S$ as would be expected for pure Monte Carlo noise. On $K_5$, the notion of ``disjoint'' already corresponds to the maximal available separation between two edges (graph distance 2). Thus, a non-vanishing disjoint component is consistent with correlations propagating through intermediate vertices and/or with extended divergence-free flux patterns (closed-loop excitations) that necessarily couple multiple edges simultaneously. Given the small size of the graph, it would be premature to over-interpret the disjoint signal as a clean ``correlation length'' phenomenon. 

\subsubsection{Relation to kinematical vacua}
\label{subsubsec:relationtokinvacua}
The correlation diagnostics above can be naturally interpreted in the previous section can be naturally interpreted in terms of the two standard vacuum notions that coexists already at the kinematical level of LQG, namely the Ashtekar-Lewandowski (AL) vacuum \cite{Ashtekar:1994wa} and the Dittrich-Geiller (DG) vacuum \cite{Dittrich:2014wpa}. In our current representation, the AL vacuum is the cylindrically consistent constant function in the holonomy picture, which in the  $\mathrm{U}(1)^3$ charge basis is sharply supported on the trivial label
\begin{equation}
    \ket{\Omega_\mathrm{AL}} = \ket{\vec{0} \cdots \vec{0}} \implies P_e^\mathrm{AL} := \langle \Omega_\mathrm{AL}, \hat{\mathcal{C}}_{\vec{m}}^{(e)} \Omega_\mathrm{AL} \rangle = \delta_{\vec{m}, \vec{0}},
\end{equation}
for every edge $e$. Consequently, for any pair of edges $e_1, e_2$ and any $\vec{m}_1, \vec{m}_2$, the connected chromaticity correlator vanishes identically in the strict AL vacuum (i.e. the AL vacuum is ultra-local in these projector observables).
\newparagraph
By contrast, the DG vacuum is defined (heuristically on $\gamma$) as a flatness projector state. Namely, in a holonomy representation it is supported on configurations with trivial curvature around a generating set of loops. This object is distirbutional with respect to the AL measure, as in it is not expected to define a normalisable vector in the kinematical Hilbert space and in the charge basis, its Fourier coefficients are correspondingly delocalised (``magnetic'' dual to the AL vacuum). Operationally, a DG solution is therefore expected to (i) strongly suppress curvature diagnostics, (ii) fail to concentrate in charge space under cutoff refinement, and (iii) not display robust long-range organisation when probed by very local charge projectors unless additional nonlocal constraints are present.
\newparagraph
With all the characterisation tests done so far in mind, the ordering-induced near-kernel families can be read as follows. The type-A family exhibits flatness on all minimal loops while simultaneously having non-vanishing volume. Together with the observation that the connected chromaticity correlator for type-A shows no robust off-diagonal signal beyond kinematic projector identities and Monte Carlo noise, this is precisely the pattern one expects when probing a DG sector using charge-projector observables. Specifically, the physically distinguishing information is encoded in flatness and holonomy constraints, while the local charge events themselves appear close to factorised. 
\newparagraph
In contrast, the type-B family is exactly volume-degenerate in the strong sense that it lies sharply in $\bigcap_v \ker(\hat{V}_v)$ while being not flat (i.e. it belongs to a curvature-carrying family of solutions with degenerate spatial geometry). At the same time, its single edge chromaticity is dominated by mostly-$\vec{0}$ background and its connected correlator shows a large, stable enhancement on adjacent edge pairs together with a suppressed but persistent disjoint component. Lastly, as also shown in the stratified state-vector plots in Section \ref{subsubsec:normalisability}, the all zero basis state appears to hold significantly more weight compared to all other basis states. This combination is naturally interpreted as an AL-like vacuum sector dressed by sparse divergence-free flux excitations. The $\vec{0}$ dominance is the direct AL signature in the charge basis, while the strong adjacency enhancement is the expected imprint of Gauß saturation when rare non-trivial charges occur in an otherwise trivial background. The residual disjoint signal (at graph distance 2 on $K_5$) is then consistent with the fact that even local Gauß organisation can propagate through intermediate vertices and/or arise from closed-loop flux patterns that necessarily couple several edges at once, without yet implying a clean correlation-length interpretation on such a small graph.
\newparagraph
To close, the long-range correlation analysis therefore closes the conceptual loop with the geometric diagnostics. The type-A solutions are most naturally associated with a DG vacuum sector, whereas the type-B solutions are most naturally associated with an AL-like vacuum sector. In this sense, the ordering choice in the quadratic constraint does not merely alter convergence properties, it selects between two qualitatively distinct vacuum sectors that are already familiar from the kinematical structure of LQG.

\subsubsection{Geometric structure}
\label{subsubsec:geometricstructure}
Having established that the two ordering-induced near-kernel families occupy sharply distinct geometric sectors which can be related to both an AL-like and DG vacua, we now turn to a more direct characterisation of the \emph{spatial geometric structure} encoded by these states. In what follows, we isolate two complementary questions. First, whether the induced discrete geometry is approximately isotropic or instead anisotropic and second, whether the geometry admits a consistent \emph{effective radius} when inferred independently from area and from volume.
\newparagraph
The conceptual link between the $K_5$ graph and a spherical spatial geometry is not ad hoc. Namely, the boundary of a 4-simplex is a closed 4-dimensional simplicial complex homeomorphic to $S^3$, consisting of five tetrahedra. The dual 1-skeleton of this boundary triangulation has one node per boundary tetrahedron and one link per shared triangular face. Since any two distinct boundary tetrahedra in the 4-simplex share a triangle, the dual 1-skeleton is precisely the complete graph on five vertices, i.e. the $K_5$ graph.
\newparagraph
In this sense, a charge-network state on $K_5$ can be read as a discrete quantum geometry living on a minimal triangulation of a closed spatial slice (topologically $S^3$) in the same way that spin-network data on dual graphs encode discrete geometries in simplicial or twisted-geometry interpretations. On such a graph, the edge labels provide the most direct discrete flux data, while gauge invariance at vertices enforces the corresponding discrete closure conditions. Geometric operators such as area and volume therefore probe \emph{metric} content of the learned state beyond the characterisation tests already discussed. In what follows, we outline the two geometric characterisation diagnostics we conduct.
\begin{itemize}
    \item[(i)] \emph{Anisotropy analysis via area cuts.} The first diagnostic is designed to detect anisotropy on this minimal discretisation of a closed geometry. We fix two embedded transverse 2-surfaces $S_\mathrm{hor}$ and $S_\mathrm{ver}$ through the graph (in practice, one horizontal and one vertical cut in the chosen embedding used throughout the work), illustrated in Figure \ref{fig:cutk5graph}.
    \begin{figure}[ht]
        \centering
        \includegraphics[width=0.5\linewidth]{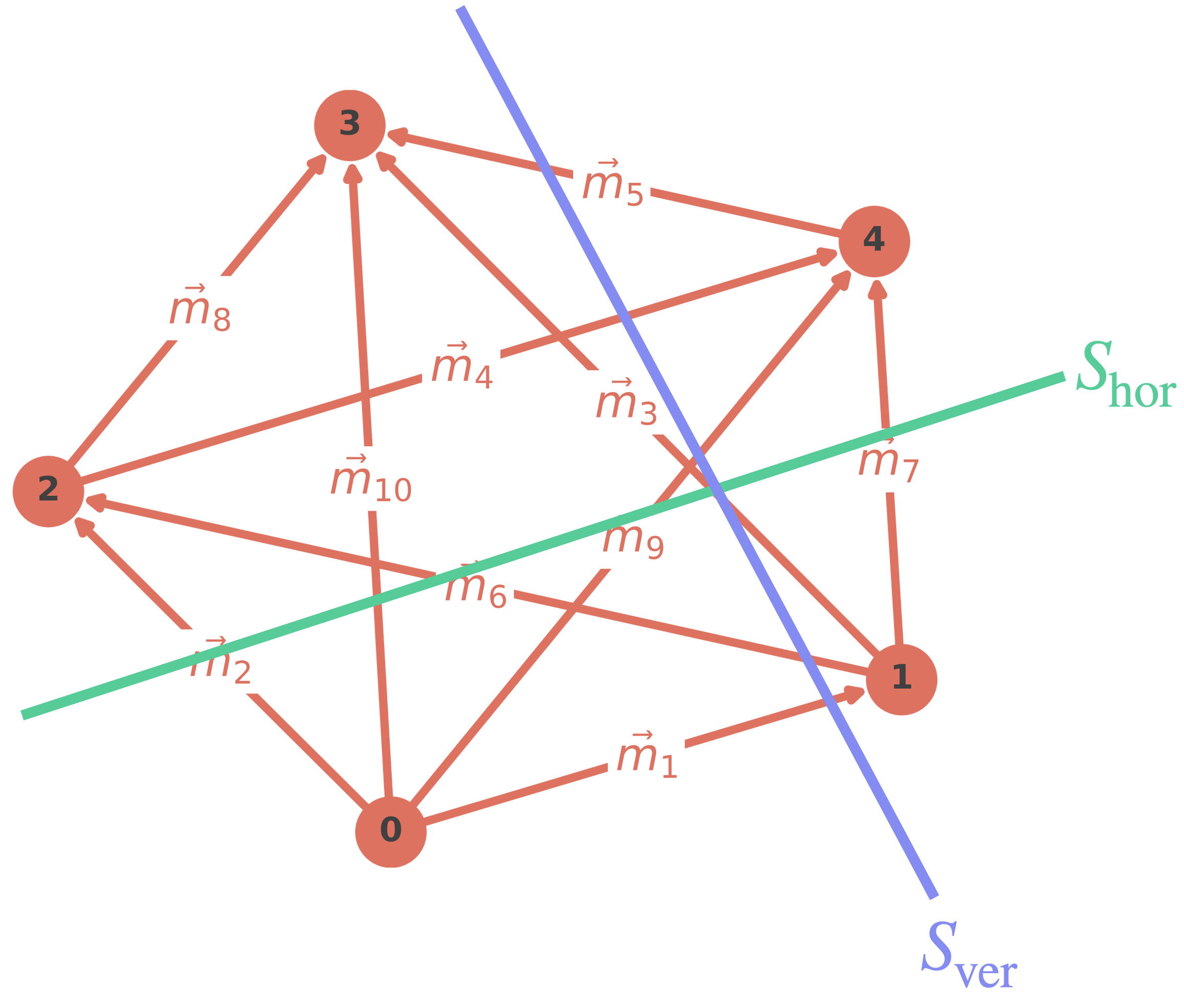}
        \caption{The two area surfaces $S_\mathrm{hor}$ and $S_\mathrm{ver}$ cutting through the $K_5$ graph in a horizontal and vertical manner respectively are shown.}
        \label{fig:cutk5graph}
    \end{figure}
    \\\noindent
    Each surface intersects a subset of edges and we denote the corresponding puncture set by
    \begin{equation}
        E(S) := \{e \in E(\gamma)\,|\, e \cap S \neq \emptyset\}.
    \end{equation}
    The associated area operator takes the standard sum-over-punctures form. In the present $\mathrm{U}(1)^3$ setting, its action is diagonal in the charge-network basis and reduces to
    \begin{equation}
        \hat{A}(S)\ket{\{\vec{m}_e\}} = \Big(\sum_{e \in E(S)} a(\vec{m}_e)\Big)\ket{\{\vec{m}_e\}}, 
        \qquad 
        a(\vec{m}_e) \propto \sqrt{\sum_{i=1}^3 (m_e^{(i)})^2},
    \end{equation}
    up to the usual overall LQG prefactor (set to 1 here since it is irrelevant for the relative measure below). We evaluate the expectation values $\langle \hat{A}(S_\mathrm{hor}) \rangle$ and $\langle \hat{A}(S_\mathrm{ver}) \rangle$ on every accepted near-kernel run and introduce a dimensionless \emph{anisotropy magnitude} comparable across cutoffs,
    \begin{equation}
        \mathcal{A} := \frac{\big|\langle \hat{A}(S_\mathrm{hor}) \rangle - \langle \hat{A}(S_\mathrm{ver}) \rangle\big|}{\langle \hat{A}(S_\mathrm{hor}) \rangle}.
        \label{eq:anisotropy_magnitude}
    \end{equation}
    This observable measures the \emph{relative size} of a horizontal-vertical area asymmetry in units of the horizontal area scale. The absolute value ensures that $\mathcal{A}$ is insensitive to the arbitrary orientation choice of the embedding (or to discrete graph automorphisms that exchange the two cuts) as in an ensemble with no preferred axis, individual runs may yield either sign of $\langle \hat{A}(S_\mathrm{hor}) \rangle - \langle \hat{A}(S_\mathrm{ver}) \rangle$, but the physically relevant question is whether the \emph{magnitude} of this asymmetry is consistently small (approximately isotropic family) or systematically large (strongly anisotropic family). Importantly, $\mathcal{A}$ is not a curvature diagnostic. It isolates directional anisotropy in area data irrespective of whether the state lies in a flat or curved sector.

    \item[(ii)] \emph{Radius consistency.} The second diagnostic asks whether the discrete geometry admits a consistent effective radius when inferred from (i) the total 3-volume and (ii) an appropriate 2-area cross section. The expectation value of the total volume operator,
    \begin{equation}
        \hat{V}_\mathrm{tot} = \sum_{v \in V(\gamma)} \hat{V}_v,
    \end{equation}
    yields the volume-inferred radius via the continuum $S^3$ relation $\mathrm{Vol}(S^3_R)=2\pi^2R^3$,
    \begin{equation}
        R_V = \left(\frac{\langle \hat{V}_\mathrm{tot} \rangle}{2 \pi^2}\right)^{1/3}.
    \end{equation}
    Independently, we infer an area-based radius from a discrete equatorial $S^2$ cross section where within the fixed embedding we select a set of six edges that form a maximally representative ``equatorial belt'' on the $K_5$ graph (i.e. edges lying on the equatorial cut used throughout), compute their individual area contributions, and use their mean area $\langle \hat{A}(S_\mathrm{eq})\rangle$ as a coarse estimator of the equatorial cross-sectional area scale. For a round 3-sphere, $\mathrm{Area}(S^2_R)=4\pi R^2$, motivating the area-inferred radius
    \begin{equation}
        R_A = \left(\frac{\langle \hat{A}(S_\mathrm{eq})\rangle}{4\pi}\right)^{1/2}.
    \end{equation}
    If the learned state encodes a geometry well-approximated (at this coarse level) by a round $S^3$-like spatial slice, one expects $R_A$ and $R_V$ to agree up to fluctuations and discretisation artefacts. Conversely, a systematic mismatch indicates either strong anisotropy, strong inhomogeneity, or (in extreme cases) a degenerate family of solutions in which volume support collapses while non-trivial area excitations persist.
\end{itemize}
For each cutoff $m_\mathrm{max}\in\{1,\ldots,5\}$ we collect all converged training simulations whose constraint expectation satisfies the admissibility criterion
\begin{equation}
    \langle \hat{\mathcal{Q}}_i \rangle_\mathrm{NQS} \leq \varepsilon,
    \qquad
    \varepsilon = 1.0,
\end{equation}
and compute the anisotropy magnitude $\mathcal{A}$ from \eqref{eq:anisotropy_magnitude} as well as the radii diagnostics $R_A$, $R_V$, and the symmetric relative mismatch
\begin{equation}
    \Delta R[\%] := 100\times\frac{|R_A - R_V|}{(R_A + R_V)/2}.
    \label{eq:deltaR_def}
\end{equation}
Unless stated otherwise, ensemble summaries are reported across the accepted runs at fixed cutoff.
\newparagraph
To visualise the full distribution of anisotropy magnitudes across the accepted ensemble at fixed cutoff, Figure \ref{fig:anisotropy_magnitude_boxplots} shows box-and-whisker plots of $\{\mathcal{A}_{m_\mathrm{max},i}\}_{i=1}^{N_{m_\mathrm{max}}}$ for each solution family. For each cutoff, the box spans the interquartile range (25th - 75th percentile), the horizontal line inside the box marks the median, and the whiskers extend to the standard non-outlier range (points beyond are shown explicitly as outliers). The filled marker indicates the sample mean $\bar{\mathcal{A}}_{m_\mathrm{max}}$. The vertical error bar attached to this marker shows the \emph{propagated Monte Carlo standard error} on the mean arising from finite sampling within each accepted run. Namely, if run $i$ provides an estimate $\mathcal{A}_{m_\mathrm{max},i}$ with Monte Carlo standard error $\sigma_{\mathcal{A},i}$, then the displayed uncertainty is $\sigma_{\bar{\mathcal{A}}} = \frac{1}{N}\sqrt{\sum_i \sigma_{\mathcal{A},i}^2}$. The annotation $N$ records the number of accepted runs at each cutoff.
\begin{figure}[ht]
    \centering
    \includegraphics[width=0.49\linewidth]{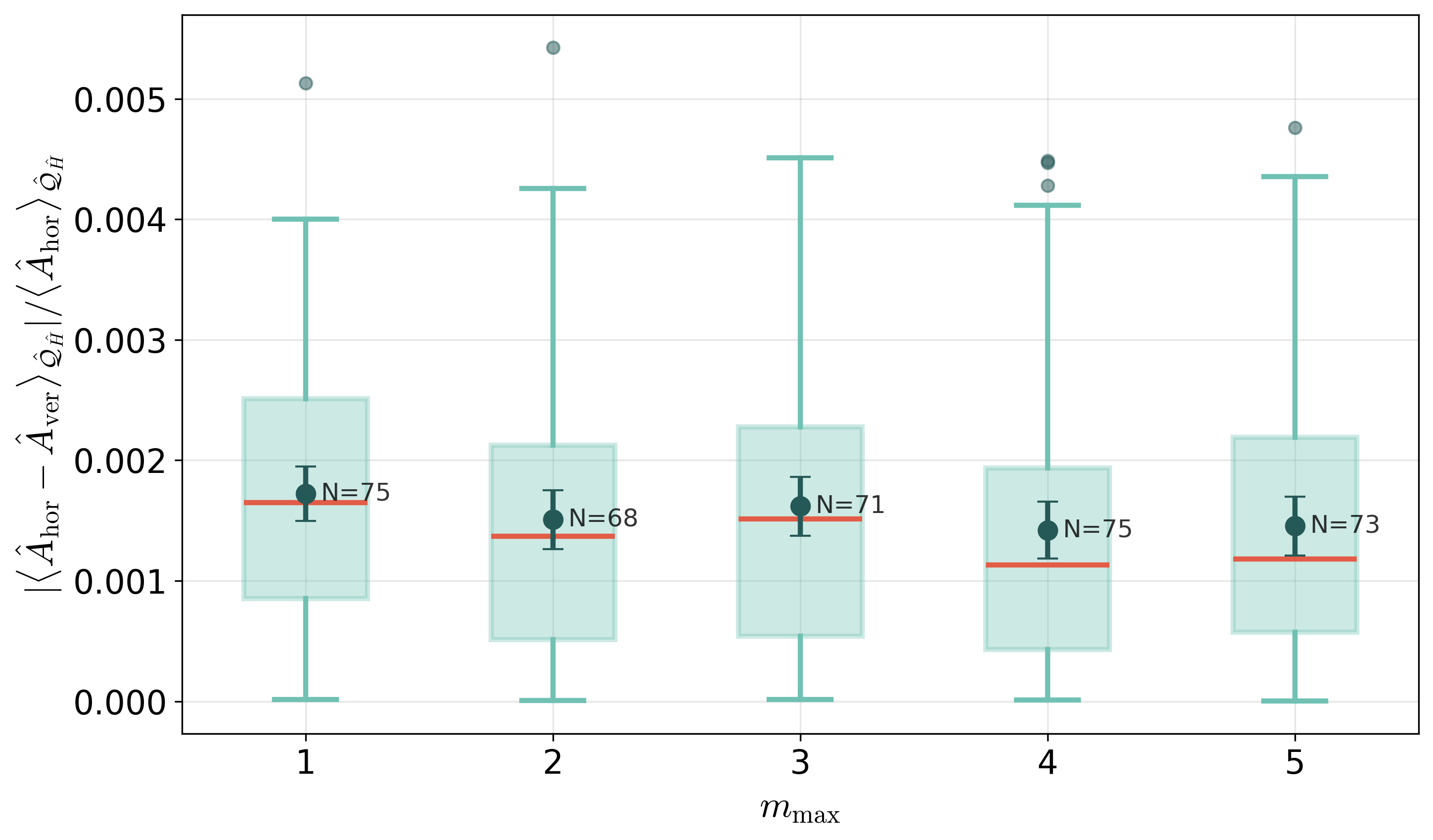}\hfill
    \includegraphics[width=0.49\linewidth]{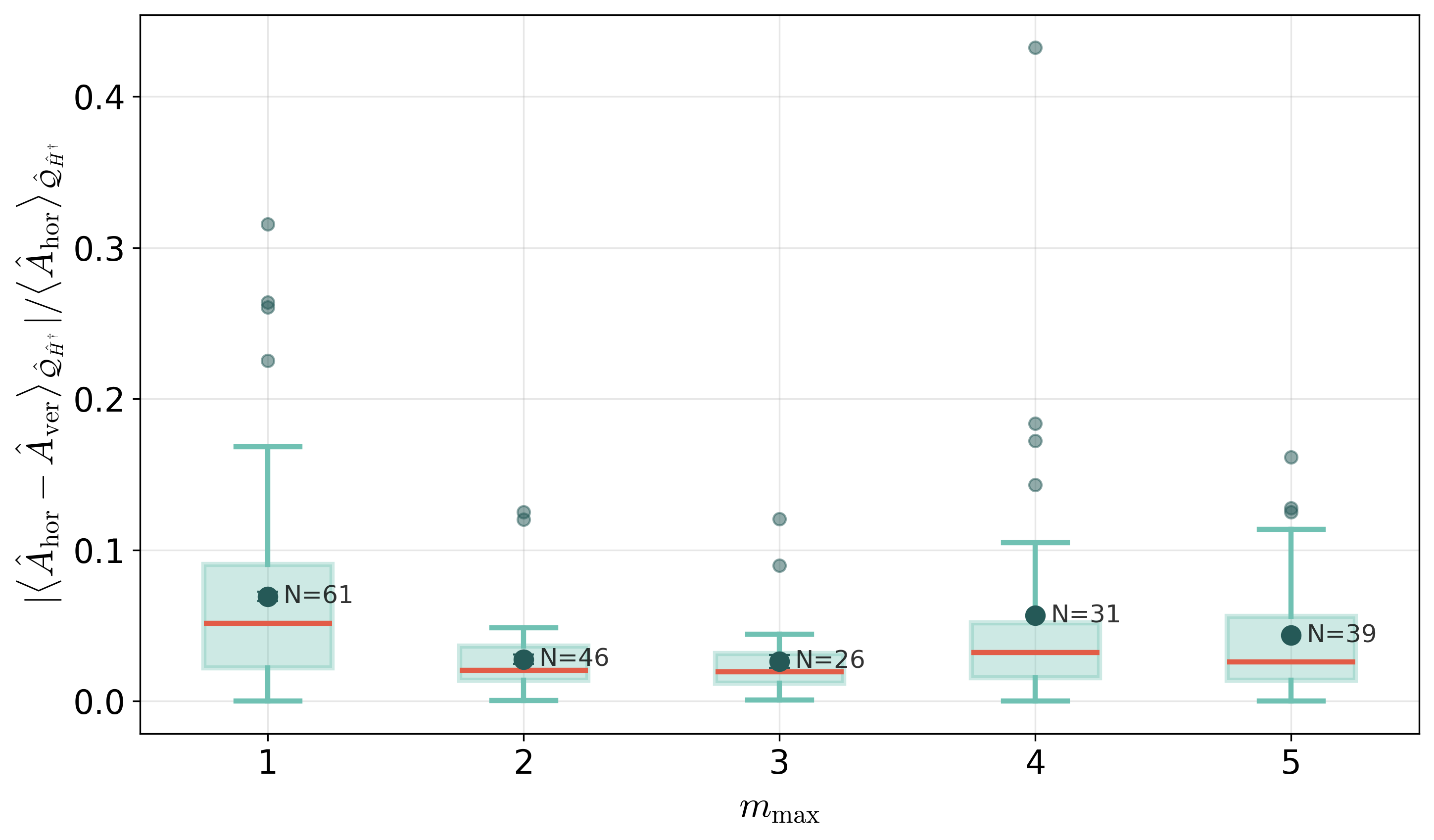}
    \caption{
    Distribution of the anisotropy magnitude $\mathcal{A}$ across the accepted near-kernel ensemble as a function of cutoff $m_\mathrm{max}$, shown separately for the two solution families (type-A on the left, type-B on the right). Boxes indicate the interquartile range with median (horizontal line), whiskers indicate the non-outlier range, and points denote outliers. The filled marker gives the sample mean $\bar{\mathcal{A}}_{m_\mathrm{max}}$ and the vertical error bar shows the propagated Monte Carlo standard error on the mean. The number of accepted runs $N$ at each cutoff is annotated next to the mean marker.
    }
    \label{fig:anisotropy_magnitude_boxplots}
\end{figure}
\newparagraph
Figure \ref{fig:anisotropy_magnitude_boxplots} shows that the standard (type-A) family yields anisotropy magnitudes $\mathcal{A}$ that remain tightly concentrated at the $\mathcal{O}(10^{-3})$ level across cutoffs, with narrow interquartile ranges and only occasional outliers. This indicates that, for most accepted runs, the two transverse area probes $S_\mathrm{hor}$ and $S_\mathrm{ver}$ return very similar expectation values, i.e. the learned discrete metric data do not single out a preferred axis at the level of this coarse probe. By contrast, the alternative (type-B) family exhibits substantially larger typical $\mathcal{A}$ and markedly broader run-to-run variability, with a heavier tail of outliers. Operationally, this means that a significant fraction of accepted type-B runs display a non-negligible hierarchy between the two transverse area cuts, consistent with a more heterogeneous and often strongly anisotropic geometry on the minimal $K_5$ discretisation.
\newparagraph
Turning to radius consistency, Figure \ref{fig:deltaR_standard_boxplot} summarises the distribution of the radius mismatch $\Delta R$ (defined in \eqref{eq:deltaR_def}) for the \emph{standard} (type-A) solution family as a function of cutoff. We again use box-and-whisker plots to display the empirical distribution across accepted runs at fixed $m_\mathrm{max}$, with the filled marker indicating the sample mean and the attached vertical error bar giving the propagated Monte Carlo standard error on the mean (computed from within-run sampling uncertainties and combined across runs). Across cutoffs, the $\Delta R$ distributions are narrow with only mild outliers, and the central tendency increases from a few percent at $m_\mathrm{max}=1$ to a stable $\mathcal{O}(10\%)$ mismatch for $m_\mathrm{max}\geq 2$.
\newparagraph
For the \emph{alternative} (type-B) family, the learned near-kernel states are volume-degenerate in the sense that $\langle \hat{V}_\mathrm{tot}\rangle$ is driven to (numerically) vanishing values across the accepted ensemble. In this regime $R_V$ becomes ill-defined as an effective bulk-radius proxy and the comparison encoded by $\Delta R$ ceases to be meaningful. We therefore do not report a radius-consistency plot for the alternative family.
\begin{figure}[ht]
    \centering
    \includegraphics[width=0.65\linewidth]{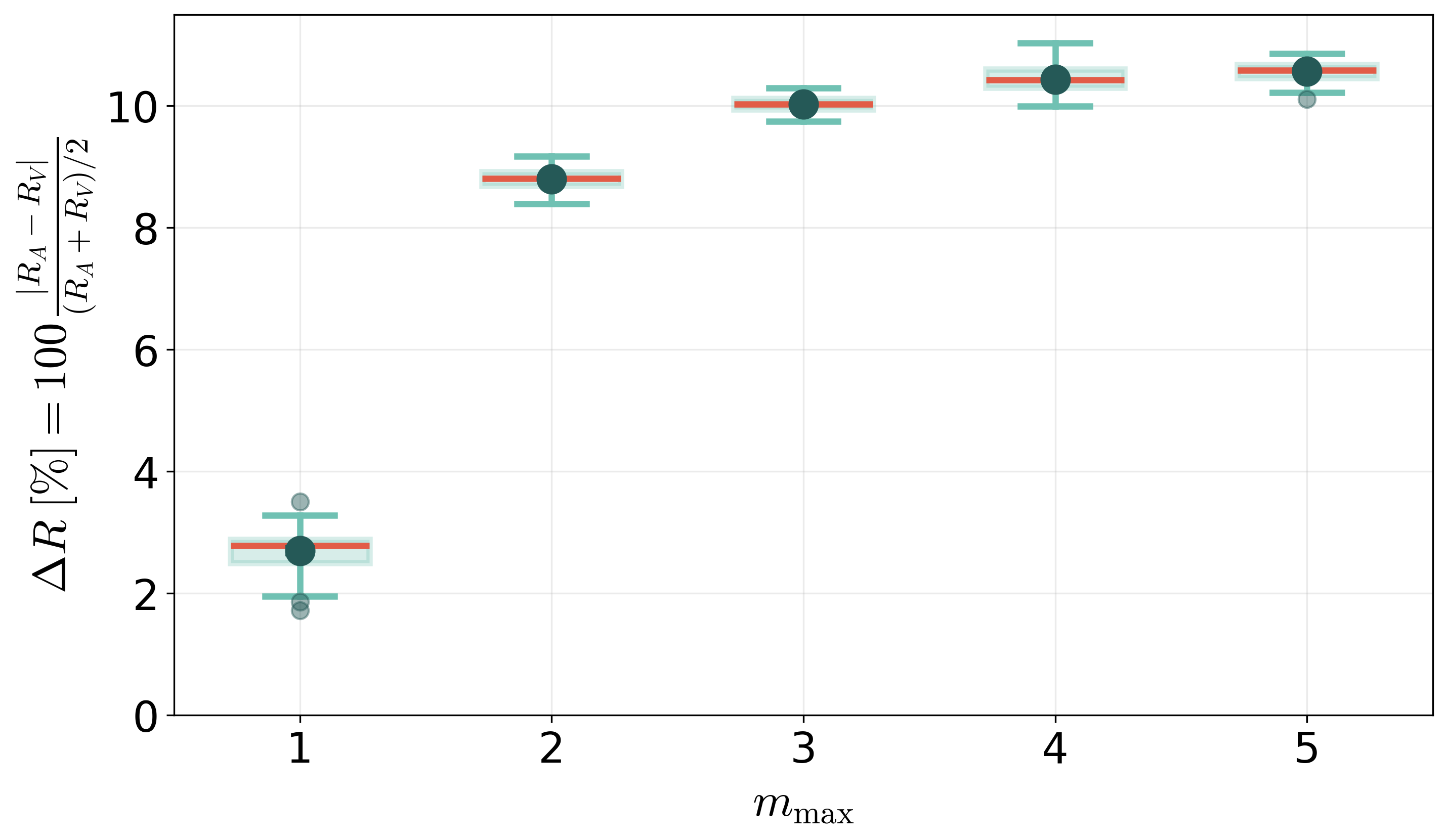}
    \caption{
    Distribution of the radius mismatch $\Delta R$ for the standard (type-A) solution family as a function of cutoff $m_\mathrm{max}$. Boxes indicate the interquartile range with median (horizontal line), whiskers indicate the non-outlier range, and points denote outliers. The filled marker gives the sample mean and the vertical error bar shows the propagated Monte Carlo standard error on the mean.}
    \label{fig:deltaR_standard_boxplot}
\end{figure}
\newparagraph
Taken together, these diagnostics support a clear geometric interpretation of the two near-kernel families on the coarse $K_5$ discretisation. For the standard (type-A) solutions, the anisotropy magnitudes in Figure \ref{fig:anisotropy_magnitude_boxplots} are uniformly small across cutoffs, indicating that the two transverse area probes $S_\mathrm{hor}$ and $S_\mathrm{ver}$ agree to high relative accuracy on most accepted runs. At the same time, the radius-consistency test in Figure \ref{fig:deltaR_standard_boxplot} shows that the area-inferred radius $R_A$ (from the discrete equatorial cross section) and the volume-inferred radius $R_V$ (from $\langle \hat{V}_\mathrm{tot}\rangle$) remain of the same order with a stable $\mathcal{O}(10\%)$ mismatch for $m_\mathrm{max}\geq 2$, as expected for an extended but not perfectly round geometry reconstructed from a minimal triangulation and approximate near-kernel states. In this operational sense, the type-A family behaves as a set of ``geometric sphere'' solutions, they encode an approximately isotropic, three-dimensionally extended spatial geometry that is consistent with a coarse $S^3$-like interpretation. By contrast, the alternative (type-B) family fails the radius-consistency criterion already at the level of definitions. The learned states are volume-degenerate, driving $\langle \hat{V}_\mathrm{tot}\rangle$ (and hence $R_V$) to vanishing values, so that an $S^3$-like effective radius cannot be meaningfully assigned even when transverse area excitations persist. This is precisely the sense in which the results distinguish an extended, approximately spherical family (type-A) from a non-spherical, effectively volume-collapsed family (type-B) on the $K_5$ discretisation.

\subsection{Quasi-solutions for a symmetric ordered quadratic constraint}
\label{subsec:symmordsols}

The results of the previous sections establish two robust families of variational near-kernel states associated with the two quadratic constraints
\begin{equation}
    \hat{\mathcal{Q}}_{\hat{H}} = \sum_{v \in V(\gamma)} \hat{H}_v \hat{H}_v^\dagger
    \quad,\quad
    \hat{\mathcal{Q}}_{\hat{H}^\dagger} = \sum_{v \in V(\gamma)} \hat{H}_v^\dagger \hat{H}_v,
\end{equation}
which we have referred to as type-A (for $\hat{\mathcal{Q}}_{\hat{H}}$) and type-B (for $\hat{\mathcal{Q}}_{\hat{H}^\dagger}$). Each family contains states that solve its \emph{respective} quadratic constraint to near-kernel accuracy. A notable asymmetry is observed when the \emph{other} ordering is evaluated a posteriori. Namely, type-B solutions (obtained from $\hat{\mathcal{Q}}_{\hat{H}^\dagger}$) are often simultaneously solutions under the standard ordering $\hat{\mathcal{Q}}_{\hat{H}}$,
whereas type-A solutions (obtained from $\hat{\mathcal{Q}}_{\hat{H}}$) typically do \emph{not} solve the alternative ordering $\hat{\mathcal{Q}}_{\hat{H}^\dagger}$ to the same degree. This motivates the question whether one can construct interpolating \emph{quasi-solutions} that retain desirable features of both classes, without being variationally driven into either extreme.
\newparagraph
This problem can be approached through several facets. First, we attempted to obtain such interpolating states by augmenting the primary optimisation objective (i.e. either one of the quadratic constraints) with explicit penalty terms that disfavour the known pathologies of the two families. For example, to push type-B solutions away from the volume-degenerate sector, one can add an inverse-volume penalty on a chosen vertex $v_\star$,
\begin{equation}
    \hat{\mathcal{Q}}_{\hat{H}^\dagger} \longrightarrow \hat{\mathcal{Q}}_{\hat{H}^\dagger} + \lambda \frac{1}{(\langle \hat{V}_{v_\star}\rangle+ a)^2},
\end{equation}
with fixed $a>0$ and tunable $\lambda$. Conversely, to push type-A solutions away from strict flatness, one can add a minimal loop holonomy penalty for a chosen loop $\alpha_\star$,
\begin{equation}
    \hat{\mathcal{Q}}_{\hat{H}} \longrightarrow \hat{\mathcal{Q}}_{\hat{H}} + \lambda \left(1 - \dfrac{1}{2} \Re \langle \mathrm{tr}\hat{h}_{\alpha_\star} \rangle \right).
\end{equation}
In practice, both strategies exhibit pronounced sensitivity. For small $\lambda$ the optimiser remains in the original type-A/type-B basin, while for slightly larger $\lambda$ the penalty dominates and the quadratic constraint is no longer reduced satisfactorily. Already at $m_\mathrm{max}=1$ this produces unstable trade-offs rather than controlled interpolation, and we therefore do not pursue purely penalty-based approaches for the bare quadratic constraints further.
\newparagraph
Instead, we construct quasi-solutions by modifying the quadratic constraint itself, and then stabilising the optimisation against the two dominant collapse mechanisms. The starting point is a symmetrised linear combination of the vertex constraints, which we denote as
\begin{equation}
\label{eq:symm_quad_const}
    \hat{\mathcal{Q}}_{\hat{H}_\mathrm{S}} = \sum_{v\in V(\gamma)} (\hat{H}_{\mathrm{S}, v})^2 \quad,\quad \hat{H}_{\mathrm{S}, v} := \frac{1}{2}\bigl(\hat{H}_v + \hat{H}^\dagger_v\bigr),
\end{equation}
and we minimise this corresponding symmetric quadratic objective. This choice forces the variational state to suppress the constraint in a way that couples the two orderings directly, rather than privileging either $\hat{\mathcal{Q}}_{\hat{H}}$ or $\hat{\mathcal{Q}}_{\hat{H}^\dagger}$.
\newparagraph
We note that in practice, since the type-B solutions are simultaneously solutions for the standard ordered quadratic constraint, we implement several mechanisms which further push the optimiser from the solution manifold associated to strictly type-B solutions. First, to prevent the optimisation from collapsing into the AL-like vacuum sector (a behaviour associated with type-B-like solutions), we act with a so called \emph{no-vacuum projector} that suppresses the all-zero basis configuration $\ket{\{\vec{m}\}_{\mathrm{vac}}}$. Specifically, one can write this as
\begin{equation}
    \hat{\Pi} = \mathbb{I} - \ket{\{\vec{m}\}_\mathrm{vac}} \bra{\{\vec{m}\}_\mathrm{vac}}.
\end{equation}
In practice, we suppress the output of the neural network for the given vacuum configuration such that, for a given configuration $\sigma$, then
\begin{equation}
    \Psi^{(\neq 0)}_\mathrm{NQS}(\ket{\{\vec{m}\}}) = \begin{cases}
        0, \quad & \ket{\{\vec{m}\}} = \ket{\{\vec{m}\}_\mathrm{vac}}, \\
        \Psi_\mathrm{NQS}(\ket{\{\vec{m}\}}), \quad & \text{otherwise,}
    \end{cases}
\end{equation}
thereby removing the variational incentive to place essentially all amplitude weight on that single configuration\footnote{An equivalent effect can also be obtained by explicitly penalising mode collapse at the level of the network output, cf. Section \ref{sec:computationalframework}.}. Second, to further prevent collapse into the volume-degenerate sector (a behaviour associated with type-B-like solutions), we include the inverse-volume regulariser on a chosen vertex $v_\star$ as above. In practice, the symmetric constraint, the no-vacuum projector, and the inverse-volume term are used jointly.
\newparagraph
In this combined setup, the symmetric quadratic constraint can be solved variationally, although not to the level of either $\hat{\mathcal{Q}}_{\hat{H}}$ or $\hat{\mathcal{Q}}_{\hat{H}^\dagger}$ independently in terms of final absolute expectation value. Concretely, at $m_\mathrm{max}=1$ a typically obtained variational state is one with $\langle \hat{\mathcal{Q}}_{\hat{H}_{\mathrm{S}}}\rangle \sim 5$, indicating a non-trivial suppression of the constraint but not an almost vanishing expectation value. To place the achieved expectation value on an intrinsic scale, as discussed in Section \ref{subsubsec:genericvmagconst12}, one can define a proxy for the generic maximum variational magnitude for a constraint by solving an auxiliary optimisation problem that \emph{maximises} its expectation value within the same variational family, implemented via the regularised inverse-square objective.
\newparagraph
Figure \ref{fig:Qsym_scale} shows representative values for such a regularised inverse-square objective for both the standard and alternative ordering constraints as well as the symmetric constraint shown in \eqref{eq:symm_quad_const}. As shown in the figure, for $m_\mathrm{max} = 1$, this proxy has a value of 20 - 30 for the considered constraints. By comparison, the typical expectation value of the symmetric constraint $\hat{\mathcal{Q}}_{\hat{H}_\mathrm{S}}$ is of the order $\sim 5$ at the same cutoff. This corresponds to a reduction by roughly a factor of 4 to 6 relative to the scale expected for a generic state, corresponding to about 15 - 25 \% of that generic expectation value scale. This is a clear and non-trivial suppression, although it is not yet close to zero when measured against this intrinsic reference scale. Reaching substantially smaller expectation values for $\hat{\mathcal{Q}}_{\hat{H}_{\mathrm{S}}}$ appears to require a separate and rather sensitive optimisation problem, which we do not attempt to solve here. For the present purposes, the $m_\mathrm{max}=1$ results already provide a proof of principle that structurally stable solutions to the symmetric quadratic constraint can be constructed within the combined objective.
\begin{figure}[ht]
    \centering
    \includegraphics[width=0.65\textwidth]{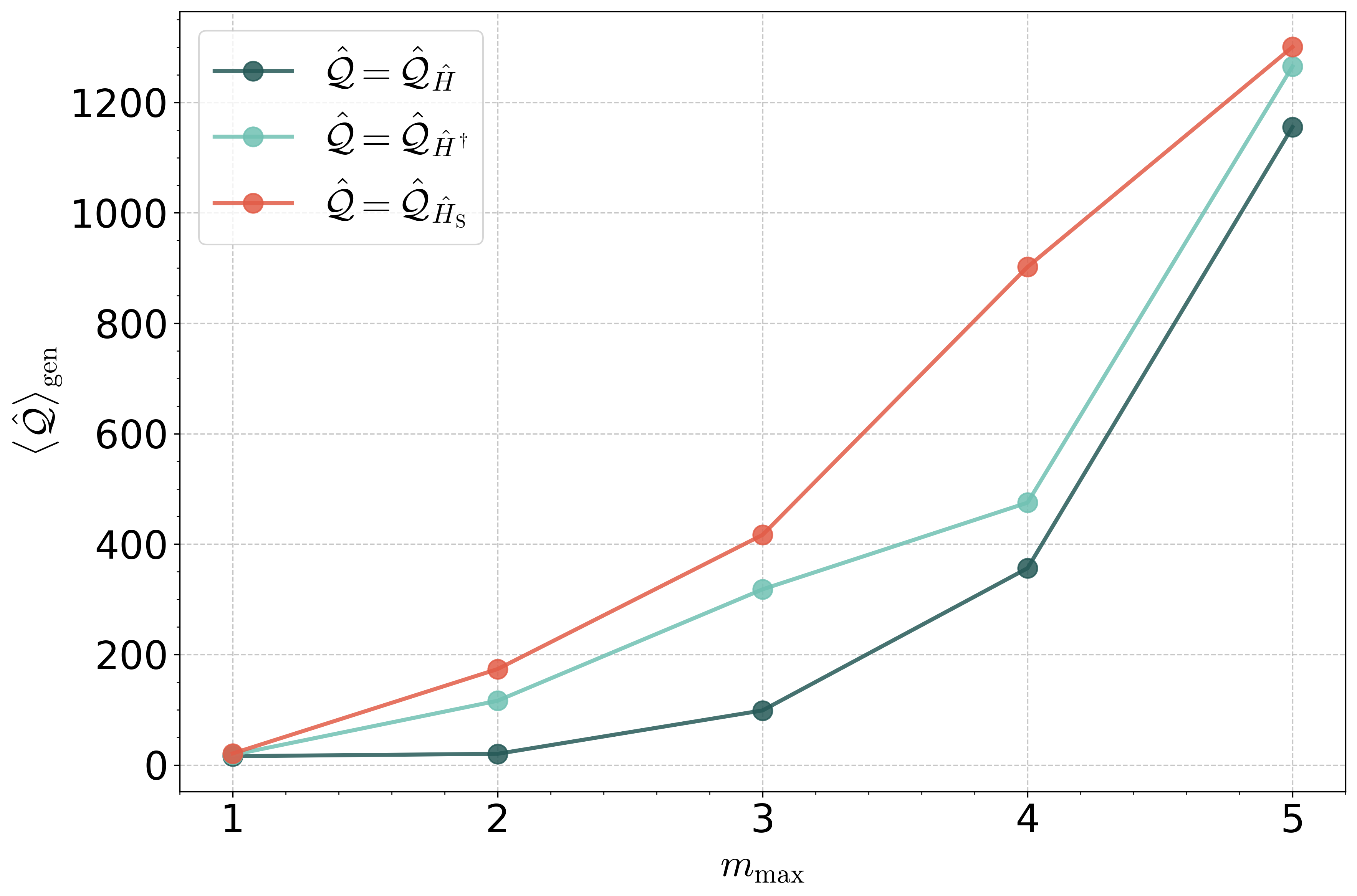}
    \caption{Typical variational magnitude of the constraints across cutoffs, obtained via the inverse-square proxy of Section \ref{subsubsec:genericvmagconst12}. At $m_\mathrm{max}=1$ this scale is of order $20$ to $30$ for the constraints shown, providing a reference against which $\expect{\hat{\mathcal{Q}}_{\hat{H}_\mathrm{S}}}$ of order $\sim 5$ can be interpreted as a non-trivial but not yet near-kernel suppression.}
    \label{fig:Qsym_scale}
\end{figure}
\newparagraph
A key question is then \emph{how close} these obtained variational states come to solving the underlying vertex constraints $\hat{H}_v$ and $\hat{H}_v^\dagger$ themselves, and whether they do so in a balanced manner. Figure \ref{fig:HvHdagger_box} summarises this directly by plotting, for each vertex, the ensemble distribution (across 75 different runs) of the Monte Carlo estimates of $\langle \hat{H}_v\rangle$ and $\langle \hat{H}^\dagger_v\rangle$ evaluated on the solution ensemble. In each panel, the box-and-whisker elements display run-to-run variability, while the filled marker indicates the ensemble mean at that vertex.
\newparagraph
Note that the filled-marker error bars in Figure \ref{fig:HvHdagger_box} are 95\% confidence intervals for the ensemble mean across runs computed from the empirical spread across runs, (not to be confused with the earlier area box-plots (e.g.\ the anisotropy and radius-consistency summaries) where the mean-marker error bars represented a propagated Monte Carlo standard error from within-run sampling uncertainty). The lower sub-panels in Figure \ref{fig:HvHdagger_box} show the distribution of MC errors reported by each run, which quantify sampling noise and should not be confused with physical spread.
\newparagraph
As seen in Figure \ref{fig:HvHdagger_box}, both $\langle \hat{H}_v\rangle$ and $\langle \hat{H}^\dagger_v\rangle$ are centred close to zero on all vertices in the solution ensemble of the symmetric constraint (small medians and small ensemble means). However, the run-to-run spread remains visibly broader than would be expected for an exact kernel state. This is further corroborated by the corresponding quantum variances, where even when the means are close to zero, $\mathrm{Var}(\hat{H}_v)$ and $\mathrm{Var}(\hat{H}_v^\dagger)$ in such solutions are frequently non-negligible (often larger than $\langle \hat{H}_v\rangle^2$ and $\langle \hat{H}^\dagger_v\rangle^2$ themselves), indicating that the state is not sharply concentrated in the strict kernel of either operator.
\begin{figure}[ht]
    \centering
    \includegraphics[width=0.48\textwidth]{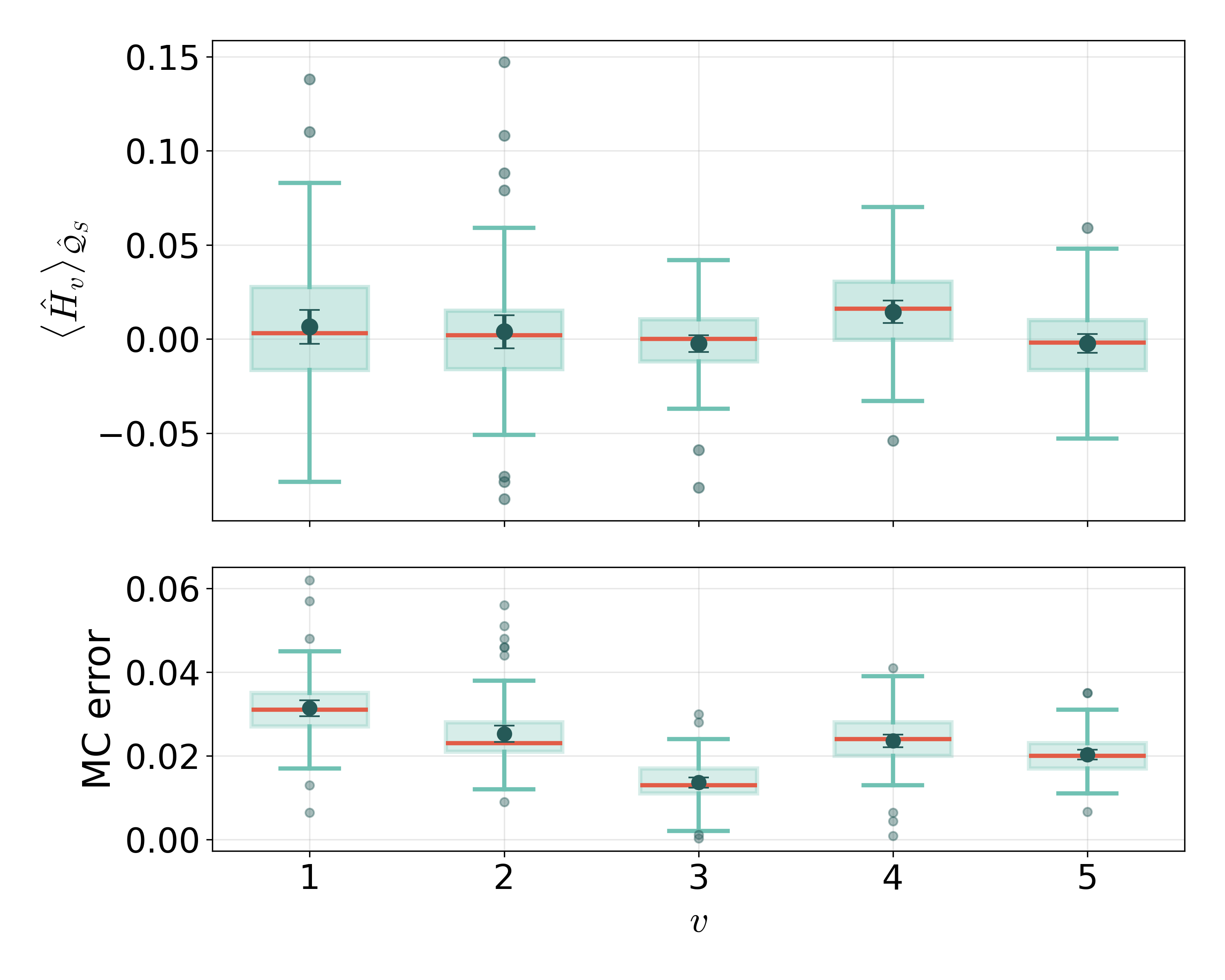}
    \includegraphics[width=0.48\textwidth]{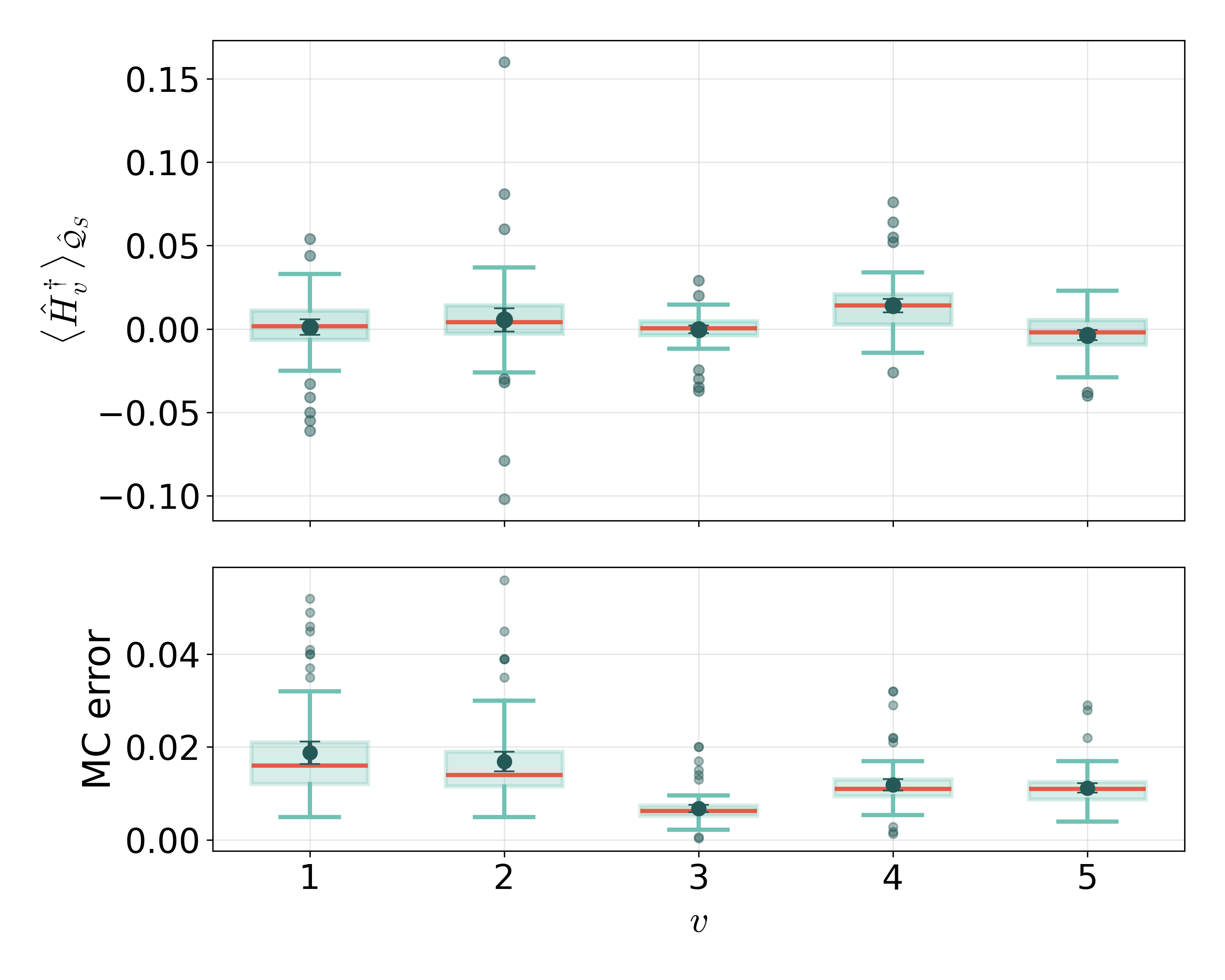}
    \caption{Constraint diagnostics on the $m_\mathrm{max}=1$ quasi-solution ensemble. On the left, the distribution across accepted runs of the per-vertex expectation values $\langle \hat{H}_v\rangle$. On the right, the analogous distribution for $\langle \hat{H}^\dagger_v\rangle$. In the upper panels, boxes show the interquartile range across runs with the median marked by the horizontal line while whiskers indicate the standard non-outlier range and points denote outliers. The filled marker is the sample mean across runs and its vertical error bar is a 95\% confidence interval for that mean (across-run uncertainty). The lower panels show the corresponding distributions of the reported Monte Carlo errors (sampling uncertainties) for each observable.}
    \label{fig:HvHdagger_box}
\end{figure}
\\\noindent
In this sense, these solutions are \emph{quasi}-solutions, as they balance suppression of both orderings in expectation, without being exact annihilation states. Consequently, they are also \emph{interpolant} solutions between the type-A and type-B families rather than as a third distinct family. Concretely, we observe the following mixed signatures:
\begin{enumerate}
    \item In the chromaticity diagnostics, the connected 2-point functions retain the characteristic peaked structure of the type-B charge sector, indicating non-trivial correlation patterns (Figure~\ref{fig:chromsymm}, left).
    
    \item In contrast to genuine type-B solutions, the resulting states are \emph{not} volume-degenerate, as vertex volume expectations are consistently non-zero, although typically smaller than in type-A solutions.
    
    \item The geometric reconstruction, however, remains type-A-like. The anisotropy probes show no systematic directional hierarchy, consistent with an approximately isotropic (``almost spherical'') interpretation at this coarse level.
    
    \item Additionally, minimal loop holonomies remain close to the flat sector, and correspondingly $\langle\hat{F}\rangle$ does not indicate dominant curvature excitations.
\end{enumerate}
Taken together, the modified objective yields quasi-solutions that combine type-B-like charge-sector correlations with type-A-like geometric flatness while avoiding strict volume degeneracy.
\newparagraph
Naturally, one might ask whether these interpolant quasi-solutions can be driven to either solution family (i.e. type-A or type-B). This can indeed be done by continuously biasing the symmetric constraint by introducing a two-parameter family
\begin{equation}
    \hat{\mathcal{Q}}_{\hat{H}_\mathrm{S}}(\alpha,\beta) := \sum_{v\in V(\gamma)}\bigl(\alpha\,\hat{H}_v + \beta\,\hat{H}^\dagger_v\bigr)^2,
\end{equation}
with the same projector and volume regulariser applied in the optimisation objective. Varying $(\alpha,\beta)$ allows one to steer the quasi-solution along a controlled path between more type-A-like and more type-B-like regimes while retaining the key desiderata (non-degenerate volume and near-flat holonomy behaviour).
\begin{figure}[ht]
    \centering
    \includegraphics[width=0.48\textwidth]{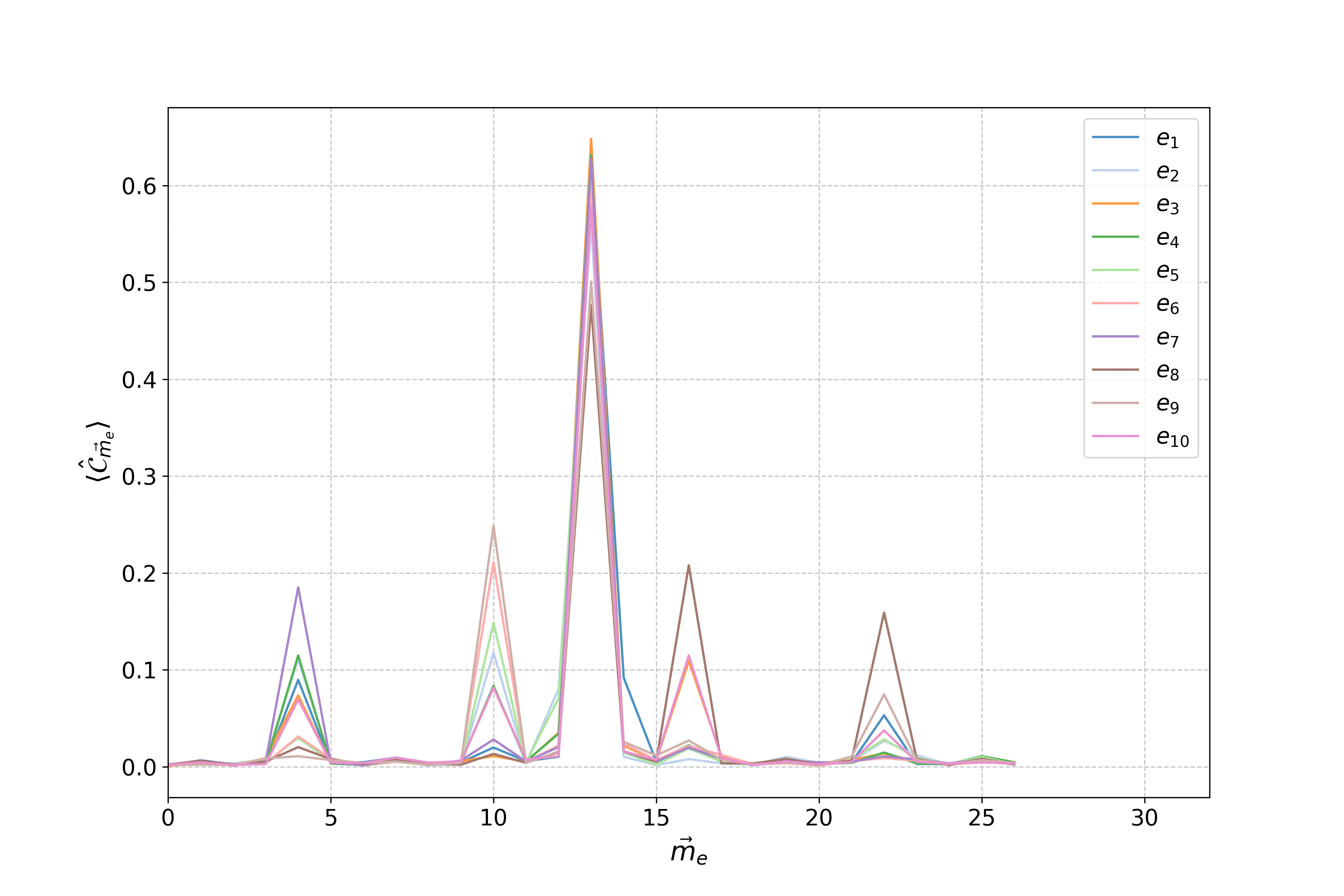}
    \includegraphics[width=0.48\textwidth]{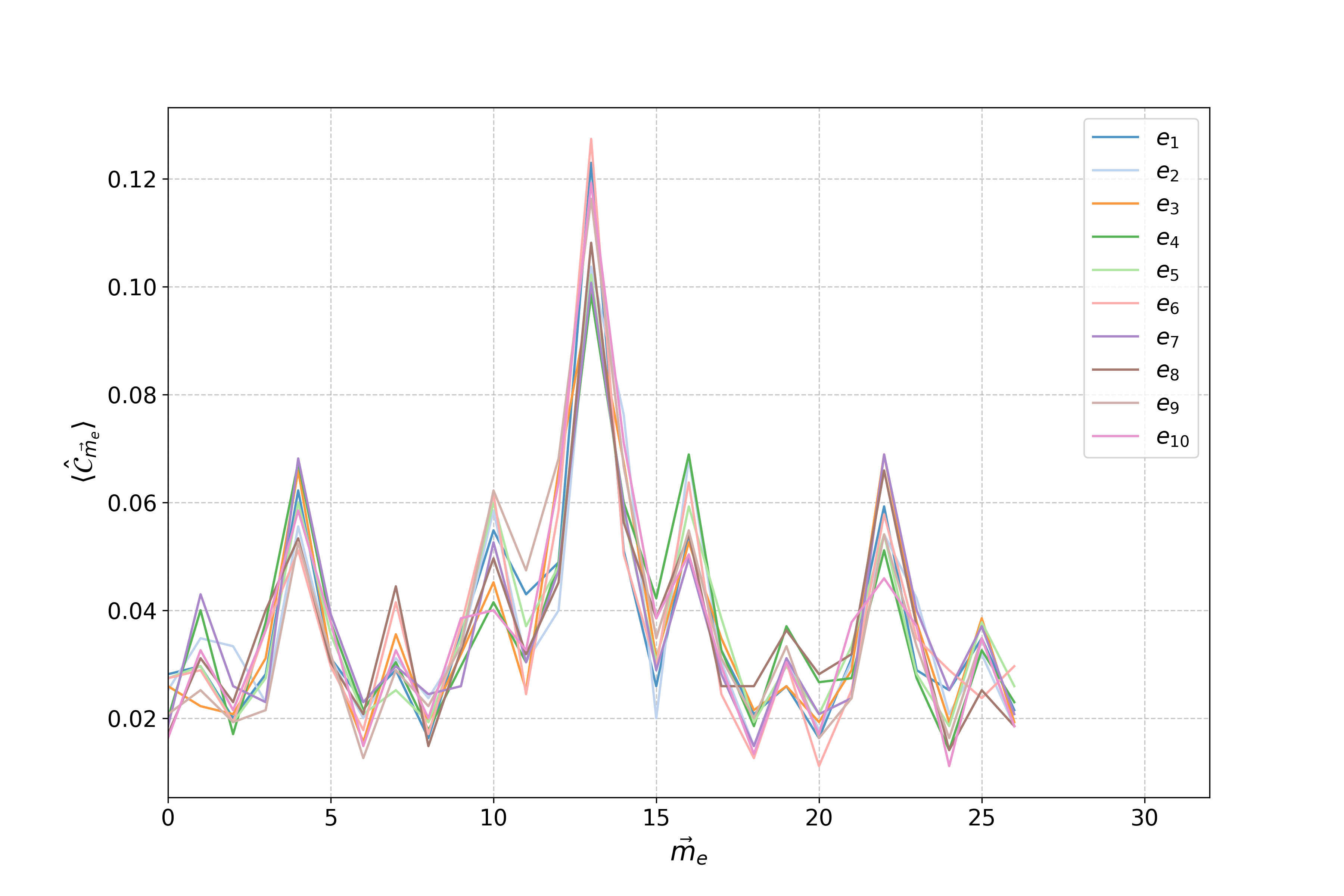}
    \caption{Chromaticity curves for the modified symmetric constraint objective at $m_\mathrm{max}=1$. \emph{Left:} quasi-solution obtained by minimising the symmetric constraint with a no-vacuum projector and an inverse-volume regulariser. \emph{Right:} corresponding result after $(\alpha,\beta)$-tuning using the biased family $\hat{\mathcal{Q}}_{\hat{H}_\mathrm{S}}(\alpha,\beta)$.}
    \label{fig:chromsymm}
\end{figure}
\\\noindent
An example of such tuning is shown in Figure \ref{fig:chromsymm} (right), where the chromaticity scale is pushed closer to the type-A characteristic magnitude while preserving remnants of the type-B peak structure. A systematic exploration of this $(\alpha,\beta)$ family across cutoffs and seeds is left for future work. Here we restrict to the $m_\mathrm{max}=1$ proof-of-principle demonstration.

\section{Discussion and outlook}
\label{sec:discussionoutlook}
This work advances a \emph{variational} approach to LQG. Namely, instead of restricting attention to symmetry-reduced sectors, ansätze, analytically tractable sectors or numerical methods restricted inherently by design to small systems, we employ NQS together with Monte Carlo methods. In doing so, we therefore not merely empirically obtain solution but also are able to probe their structure, at scale. This aspect is rather important since many physical questions one would ask in LQG are intrinsically questions about families of states rather than isolated examples.
\newparagraph
On the physics side, the dominant qualitative finding is that the quadratic, Master-constraint-like implementation of the vertex constraints on the $K_5$ graph supports distinct, robust solution families whose qualitative properties are dictated merely by operator ordering. In other words, at fixed graph and without symmetry reduction, ordering acts as a genuine sector selection mechanism whereby it does not simply affect optimisation speed or variance but can move the variational search into different parts of the physical state space.
\newparagraph
The sector assignment considered in this work is not inferred from a single observable, rather it is supported by a coherent pattern across holonomy and flatness diagnostics, volume (non-)degeneracy and chromaticity 1- and 2-point functions, together with geometric probes discussed throughout Section \ref{sec:results}. At the same time, despite the $K_5$ graph being significantly larger than graphs considered previously in similar settings \cite{Sahlmann:2024pba,Sahlmann:2024kat}, its nevertheless small size imposes a clear cautionary remark. Specifically, on this graph, long-range should be read operationally and not as a statement about an emergent correlation length in a large system. A second cautionary remark concerns what it means to be a solution in a quadratic constraint setting. Small values of a quadratic objective can hide imbalances between the constituent constraints and their adjoints. One may, for example, observe small expectation values while variances remain comparatively large for some of the constituent operators which would in turn indicate that the state is not sharply concentrated near the strict kernel of each constraint. This is precisely where a variational framework is valuable, as it allows one to diagnose such issues systematically and to design controlled modification (i.e. the symmetric constraint or multi-objective losses discussed in Section \ref{subsec:symmordsols}) to steer the variational search. 
\newparagraph
Looking forward, two directions appear both natural and technically well-motivated, the first of which variational methodology and scalability. The main difficulty is increasingly not necessarily the implementation of estimators but rather the design of ansätze and training protocols that reliably (a) locate the relevant solution manifolds, (b) avoid branch bias and mode collapse, and (c) provide principled uncertainty quantification for correlation and geometry diagnostics. This motivates systematic architecture, sampler and estimator studies and explicitly constraint-resolved objectives. As large-scale simulations are now, for the first time, generally accessible, such technical issues take more of a centre stage. When addressed, however, as done in this work one sees that variational LQG is not only a computational tool but a framework for model analysis which allows us to not only obtain but also physically characterise the solutions of constraints.
\newparagraph
Second, the present results strongly suggests that ordering issues will remain first-order in the non-Abelian theory, but the technical and conceptual stakes will increase. On the technical side, the local Hilbert spaces become more complicated due to the now intertwiners on vertices of the graph introducing a non-trivial recoupling structures. Evaluating constraint actions is expected to therefore require substantially more expensive algebra than the Abelian model. This will stress both estimator variance and optimisation stability and will likely require improved operator kernels, variance-reduction techniques and more expressive inductive biases in the chosen ansatz (e.g. explicit gauge-equivariance or invariance, graph symmetries and attention to intertwiner degrees of freedom). 
\newparagraph
On the conceptual side, the interplay between ordering, representation and normalisability becomes even sharper. Sector-selection by ordering may, as in the case of this work, map onto physically different notions vacuum and semiclassicality. Deciding which sector is appropriate must ultimately be tied to the intended physical inner product and continuum interpretation. A practical lesson from the present work is therefore that ordering should not be treated as an afterthought when moving to the SU(2) case, but rather as a knob that can control which sector the variational search sees and hence, which physics is extracted.
\newparagraph
Further, it would be interesting to connect the variational numerical picture developed here with analytical work on  the $\mathrm{U}(1)^3$-limit of LQG. In particular, \cite{Long:2022wcLQGArea} provides a concrete route by which $\mathrm{U}(1)^3$ effective dynamics can be related to the $\mathrm{SU}(2)$ theory through a parametrisation of $\mathrm{SU}(2)$ holonomy-flux variables by $\mathrm{U}(1)^3$ data and a coherent-state path integral description. Could one extract this effective dynamics also from the numerical solutions considered in the present work? Likewise, recent Dirac quantised treatment of the $\mathrm{U}(1)^3$ model in \cite{Bakhoda:2025U1cubedVolume} shows how one may extract a Schrödinger-like evolution, a physical Hilbert space as well as a notion of geometric quantum time directly from the Hamiltonian constraint. It may be possible to implement this notion of time also numerically and thus obtain the corresponding evolution. These developments suggest that the variational programme pursued here could also serve as a concrete framework in which ideas such as effective dynamics and quantum time may be implemented and explored.
\newparagraph
With this work, the pure gravity $\mathrm{U}(1)^3$ model of Euclidean LQG is shown to be amenable to variational methods, which they themselves offer several approaches to not only obtain solutions but also characterise them. While one can now consider different $\mathrm{U}(1)^N$ based models (e.g. spherically symmetric models which are compatible with such degrees of freedom or even quantum reduced loop gravity models), the next natural step in this programme, aside from the mentioned relation to analytical results in this model, is perhaps to consider the full SU(2) theory, where both such technical and conceptual aspects discussed above will be considered in future work.

\section{Conclusion}
\label{sec:conclusion}
This work set out to make the space of physical (near-kernel) states of 4-d Euclidean loop quantum gravity in Smolin's weak-coupling limit empirically accessible on a non-trivial graph, without symmetry assumptions. Neural network quantum states with variational Monte Carlo methods were used to minimise quadratic, Master-like constraints on the $\mathrm{U}(1)^3$ gauge invariant subspace of the kinematical Hilbert space of the oriented $K_5$ graph. The kinematical degrees of freedom of the representation labels associated to the edges were truncated by imposing a cutoff $m_\mathrm{max}$ such that the $\mathrm{U}(1)^3$ group is replaced by the quantum $\mathcal{U}_q(1)^3$. This work has mainly focused on cutoffs of $m_\mathrm{max} = 1, \ldots, 5$ for the main analysis and for scaling purposes demonstrated convergence for cutoffs of $m_\mathrm{max} = 50$ and $100$. We then analysed the resulting variational states using correlation probes, geometric operators and normalisability diagnostics. 
\newparagraph
The central result is that the factor ordering in the quadratic constraint is not a benign implementation choice. The two orderings $\hat{\mathcal{Q}}_{\hat{H}} = \sum_{v \in V(\gamma)} \hat H_v \hat H_v^\dagger$ and $\hat{\mathcal{Q}}_{\hat{H}^\dagger} = \sum_{v \in V(\gamma)} \hat H_v^\dagger \hat H_v$ generically drive the variational optimisation process to select two distinct and robust solution families with sharply different physical signatures. The type-A family (from $\hat{\mathcal{Q}}_{\hat{H}}$ is characterised by non-degenerate volume, near flatness on minimal loops and weak connected chromaticity correlators consistent with an Dittrich-Geiller-like vacuum sector on the fixed graph. The type-B family (from $\hat{\mathcal{Q}}_{\hat{H}^\dagger}$) exhibits strong chromaticity signatures together with volume degeneracy and altered normalisability behaviour, consistent with a Ashtekar-Lewandowski-like vacuum sector. A symmetrised modification of the quadratic constraint, together with a projected ansatz and an inverse-volume penalty term intended to maximise volume on a selected vertex, produces quasi-solutions interpolating between these behaviours, thus making the ordering dependence itself a controllable handle on the explored sector.
\newparagraph
In all, the results show that variational LQG is not only a viable route to \emph{obtaining} candidate physical states on non-trivial graphs, but also as a practical route to \emph{characterising} them at scale via correlators and geometric observables. The observed ordering-induced sector selection provides a concrete constraint on how one should interpret and compare numerical solutions of Master-like dynamics and it motivates carrying the same variational and diagnostic programme to the $\mathrm{SU}(2)$ theory.

\ack
The authors are grateful for comments and discussions with Thomas Thiemann, especially regarding the connection between solutions of the Hamilton constraint and minimum energy solutions, and the master constraint program. H.S. thanks B. Dittrich for hospitality at PI, and for interesting comments and discussions, and he acknowledges the contribution of the COST Action CA23130. This research was supported in part by Perimeter Institute for Theoretical Physics. Research at Perimeter Institute is supported by the Government of Canada through the Department of Innovation, Science and Economic Development and by the Province of Ontario through the Ministry of Research, Innovation and Science. The authors gratefully acknowledge the scientific support and HPC resources provided by the Erlangen National High Performance Computing Center (NHR@FAU) of the Friedrich-Alexander-Universität Erlangen-Nürnberg (FAU). The hardware is funded by the German Research Foundation (DFG). Simulations for this work were performed using \neuralqx \cite{Sherif:2026neuralqx}. \neuralqx is a high-performance simulations toolkit for loop quantum gravity built on top of \netket \cite{Carleo:2019netket,Vicentini:2022netket3}. The implementation relies on \jax \cite{Bradbury:2018jax} and \flax \cite{Heek:2024flax}. MPI-distributed execution used \mpijax \cite{Haefner:2021mpi4jax}.

\clearpage

\section*{Appendices}
\renewcommand{\thesubsection}{\Alph{subsection}} 
\setcounter{footnote}{0}

\subsection{State visualisation in large Hilbert spaces}
\label{app:sequential_indexing_and_plotting}
Typical Hilbert space dimensions, even in the gauge invariant subspace, considered in this work make it prohibitive to conduct any visualisation of state vectors (i.e. a plot of all complex amplitudes of the obtained variational near-kernel states). In what follows, we detail an efficient approximate algorithm which enables one to structurally visually inspect all amplitudes in a controlled but efficient manner which requires a fractional amount of computing resources compared to standard methods.

\subsubsection{Sequential labelling of charge-network basis states}
\label{app:sequential_labelling}
Throughout this work, we consider the truncated kinematical charge-network basis
\begin{equation}
    \ket{\vec m_1 \,\vec m_2 \cdots \vec m_{n_e}} =\ket{\vec m_1}\otimes\ket{\vec m_2}\otimes\cdots\otimes\ket{\vec m_{n_e}},
\end{equation}
where $\vec m_e=\bigl(m^{(1)}_e,m^{(2)}_e,m^{(3)}_e\bigr)\in M^3$, $M=\{-m_{\max},\ldots,m_{\max}\}\subset\mathbb{Z}$ and $n_e:=|E(\gamma)|$ (in particular, $n_e = 10$ for the present $K_5$ graph considered in this work). The corresponding truncated kinematical Hilbert space has dimension
\begin{equation}
    D := \dim\mathscr{H}_\mathrm{kin}^\mathrm{GI} = (2m_\mathrm{max} + 1)^{3n_e}.
\end{equation}
To visualise solutions over such spaces, we require a deterministic identification of computational basis states with integers (i.e. a bijective sequential labelling). Concretely, we introduce a bijection
\begin{equation}
    \iota : \mathscr{B}^{(m_\mathrm{max})} \rightarrow \{0, 1, \ldots, D-1\},
\end{equation}
where $\mathscr{B}^{(m_\mathrm{max})}$ denotes the ordered basis set at cutoff $m_\mathrm{max}$. In the present implementation of neuraLQX \cite{Sherif:2026neuralqx}, basis elements are stored in a gauge-strided layout done by concatenating the three $\mathrm{U}(1)$ components across all edges
\begin{equation}
    \sigma \equiv \sigma(\{\vec m_e\}_{e\in E(\gamma)}) =\Bigl(m^{(1)}_1,\ldots,m^{(1)}_{n_e}\ \Big|\ m^{(2)}_1,\ldots,m^{(2)}_{n_e}\ \Big|\ m^{(3)}_1,\ldots,m^{(3)}_{n_e}\Bigr)\in M^{3n_e}.
\label{eq:gauge_strided_sigma}
\end{equation}
Thus, one may equivalently regard states in $\mathscr{H}_\mathrm{kin}^{m_\mathrm{max}}$  as the tensor product of $N := 3n_e$ discrete local degrees of freedom each taking values in $M$. Let $b := |M| = 2m_\mathrm{max} + 1$ and define the local index map
\begin{equation}
    \ell : M \rightarrow \{0, 1, \ldots, b-1\} \quad,\quad \ell(m) := m + m_\mathrm{max}.
\end{equation}
Writing $\sigma := (\sigma_1, \ldots, \sigma_M)$ with $\sigma_j \in M$, the sequential label of $\sigma$ is then given by the mixed-radix expansion
\begin{equation}
    \iota(\sigma) = \sum_{j = 1}^N \ell(\sigma_j)b^{N-j},
    \label{eq:big_endian_indexing}
\end{equation}
Equivalently, $\sigma_1$ is the most significant digit (it changes slowest as $\iota$ increases) while $\sigma_N$ is the least significant digit. The inverse map $\iota^{-1}$ is obtained by repeated Euclidean division. Given $i \in \{0, 1, \ldots, D-1\}$, define integers $q_0 := i$ and $q_{j-1} = bq_j + r_j$ for $j = 1, \ldots, N$, with $r_j \in \{0, \ldots, b-1\}$. Then, $\ell(\sigma_j) = r_j$ and therefore $\sigma_j = r_j - m_\mathrm{max}$. Finally the reconstructed configuration $\sigma$ may be regrouped into edge charge vectors according to equation \eqref{eq:gauge_strided_sigma}. 
\newparagraph
When restricting to a gauge-invariant subspace, the sequential labelling remains a bijection between $\{0, \ldots, \dim\mathscr{H}_\mathrm{kin}^{\mathrm{GI}, (m_\mathrm{max})} - 1\}$ and the \emph{allowed} configurations, but the map is no longer given by the closed form of equation \ref{eq:big_endian_indexing}. Instead, $\iota^{-1}(i)$ is given as the $i$-th admissible configuration in the inherited lexicographic ordering from the parent space (i.e. the ordering induced by equation \eqref{eq:big_endian_indexing} prior to imposing the Gauß constraint).

\subsubsection{Stratified-envelope representation of large wavefunctions}
\label{subsubsec:fast_envelope_sampling}
Let $\Psi \in \mathscr{H}_\mathrm{kin}^{\mathrm{GI}, m_\mathrm{max}}$ be a complex variational state with expansion coefficients $\psi_i$ with the index label $i$ is ordered according to the lexicographic ordering discussed in the prior section. In our setting, $D$ grows beyond any feasible explicit enumeration already at modest $m_\mathrm{max}$ and hence it is impossible to directly plot $\Re(\psi_i)$ or $\Im(\psi_i)$ for all $i$. Instead, we represent the wavefunction by an \emph{envelope} over a coarse partition of index space, constructed by stratified sampling with unbiased global normalisation.
\newparagraph
To begin, fix an integer $B \ll D$ and partition $\{0, 1, \ldots, D-1\}$ into $B$ contiguous bins
\begin{equation}
    \mathcal{I}_b := \{i \in \mathbb{Z} \,:\, K_b \leq i < U_b\}\quad,\quad b = 0, \ldots, B-1,
\end{equation}
where $L_b$ and $U_b$ satisfy 
\begin{equation}
    0 = L_0 < U_0 = L_1 < \cdots < U_{B-1} = D
\end{equation}
and $W_B := |\mathcal{I}_b| = U_b - L_b$. For each bin, we define the bin-centre index
\begin{equation}
    x_b := \left\lfloor \frac{L_b + U_b}{2} \right\rfloor.
\end{equation}
Now, choose an integer $K$ representing the number of samples per bin and draw i.i.d samples within each stratum
\begin{equation}
    i_{b,1}, \ldots, i_{b, K} \sim \text{Uniform}(\mathcal{I}_b).
\end{equation}
Using the sequential decoding map $\iota^{-1}$, each sampled index $i_{b, t}$ defines a basis configuration $\sigma_{b, t} := \iota^{-1}(i_{b,t})$ and hence a model amplitude. Since the NQS is naturally evaluated in terms of log-amplitudes, we write the generally unnormalised complex coefficient as
\begin{equation}
    \tilde{\psi}_{b,t} := \exp[\log \psi_\theta(\sigma_{b,t})] \in \mathbb{C},
\end{equation}
where here $\psi_\theta( \cdot)$ denotes the network output for the given configuration. The exact norm of the unnormalised amplitudes is
\begin{equation}
    \tilde{\Psi}^2 = \sum_{i = 0}^D |\tilde{\psi}_i|^2 = \sum_{b = 0}^{B-1} \sum_{i \in \mathcal{I}_b} |\tilde{\psi}_i|^2.
\end{equation}
Define the per-bin weight $w_b := W_B / K$. Then, the stratified estimator
\begin{equation}
    \widehat{\|\tilde{\Psi}\|^2} := \sum_{b = 0}^{B-1} w_b \sum_{t=1}^K |\tilde{\psi}_{b,t}|^2,
    \label{eq:stratified_norm_estimator}
\end{equation}
satisfies $\mathbb{E}[\widehat{\|\tilde{\Psi}\|^2}] = \|\tilde{\Psi}\|^2$ (i.e. it is unbiased). We then set $\widehat{\|\tilde{\Psi}\|} = \sqrt{\widehat{\|\tilde{\Psi}\|^2}}$, and define the approximately normalised sampled amplitudes
\begin{equation}
    \hat{\psi}_{b, t} := \frac{\tilde{\psi}_{b,t}}{\widehat{\|\tilde{\Psi}\|}}.
\end{equation}
For numerical stability (given the exponentially enormous dynamic range of $|\tilde{\psi}_{b,t}|$), the estimator \eqref{eq:stratified_norm_estimator} is evaluated in log-space via the identity
\begin{equation}
    w_b |\tilde{\psi}_{b,t}|^2 = \exp[2\Re(\log \psi_\theta(\sigma_{b,t})) + \log w_b],
\end{equation}
followed by a stable $\log\sum\exp$ reduction.
\newparagraph
Now, for each bin $b$ we estimate the extrema of the real and imaginary parts by the corresponding sample extrema
\begin{align}
    \widehat{m}_b^\Re := \min_{t = 1, \ldots, K} \Re(\hat{\psi}_{b,t}) \quad &,\quad \widehat{M}_b^\Re := \max_{t = 1, \ldots, K} \Re(\hat{\psi}_{b,t}), \\
    \widehat{m}_b^\Im := \min_{t = 1, \ldots, K} \Im(\hat{\psi}_{b,t}) \quad &,\quad \widehat{M}_b^\Im := \max_{t = 1, \ldots, K} \Im(\hat{\psi}_{b,t}).
\end{align}
The envelope plots shown in the main text are then the filled bands between $(x_b, \widehat{m}_b^\Re)$ and $(x_b, \widehat{M}_b^\Re)$ (and similarly for the imaginary part), with linear interpolation between neighbouring bin-centres (See Section \ref{subsubsec:streaming_envelopes} below for more details). For orientation, we also display the midrange curve
\begin{equation}
    \widehat{\mu}_b^\Re := \frac{1}{2}(\widehat{m}_b^\Re + \widehat{M}_b^\Re) \quad, \quad \widehat{\mu}_b^\Im := \frac{1}{2}(\widehat{m}_b^\Im + \widehat{M}_b^\Im),
\end{equation}
which provides a compact visual proxy for the bin-local spread.
\newparagraph
While the estimator \eqref{eq:stratified_norm_estimator} is unbiased, at finite $K$ it is a random quantity whose fluctuations controls the uncertainty of the global vertical scale in the state-vector plots. One can write indeed compute variance
\begin{equation}
    \mathrm{Var}(\widehat{\|\tilde{\Psi}\|^2}) = \sum_{b = 0}^{B-1} \frac{W_b^2}{K} \mathrm{Var}(|\tilde{\psi}_{b,t}|^2),
\end{equation}
as well as the standard error
\begin{equation}
    \mathrm{SE}(\widehat{\|\tilde{\Psi}\|^2}) = \sqrt{\mathrm{Var}(\widehat{\|\tilde{\Psi}\|^2})},
\end{equation}
and as such the relative error decreases as $K^{-1/2}$ (for fixed $B$) provided the $\mathrm{Var}(|\tilde{\psi}_{b,t}|^2)$ are finite. In practice, the latter can be estimated from the same stratified samples via plug-in sample variances, yielding an internal estimate of the standard error and thus of relative uncertainty $\mathrm{SE}(\widehat{\|\tilde{\Psi}\|^2}) / \widehat{\|\tilde{\Psi}\|^2}$. Since the plotted amplitudes are normalised by $\widehat{\|\tilde{\Psi}\|}$, one can show that moderate uncertainty i $\widehat{\|\tilde{\Psi}\|^2}$ only induces a \emph{global} multiplicative rescaling of the entire envelope by a comparable relative factor. Thus, unless one is using the envelope to make claims about absolute amplitude scales across different states, the dominant qualitative features of interest are largely insensitive to this normalisation error. 
\newparagraph
The global vertical scale thus converges (in probability) to the true one as $BK \rightarrow \infty$. By contrast, the min/max envelope is \emph{not} an unbiased estimator of the true per-bin extrema. Namely, if rare spikes occupy a tiny fraction of $\mathcal{I}_b$, a finite $K$ may miss them. Hence, the envelope should be interpreted as a controlled diagnostic representation whose fidelity increases monotonically with $K$ and whose purpose is to detect \emph{large-scale structure} (i.e. decay, concentration, etc.) across index space without requiring any enumeration of the full space.

\subsubsection{Streaming segment envelopes for 1-dimensional signals}
\label{subsubsec:streaming_envelopes}
In addition to the stratified regime above, it is often useful to compress an arbitrary discrete signal
\begin{equation}
    y : \{0, 1, \ldots, N - 1\} \rightarrow \mathbb{R},
\end{equation}
into an envelope representation with at most $K \ll N$ points \emph{without approximation of the segment statistics}. This is achieved by partitioning the index domain into $K$ contiguous segments
\begin{equation}
    S_k := \{i \in \mathbb{Z} \,:\, s_k \leq i < e_k\}
\end{equation}
with $k = 0, \ldots, K-1$, $0 = s_0 < e_0 = s_1 < \cdots e_{K-1} = N$ and $|S_k| \in \{\lfloor N/K\rfloor, \lceil N/K \rceil\}$. On each segment, define the exact per-segment statistics
\begin{equation}
    m_k := \min_{i \in S_k} y_i \quad,\quad M_K := \max_{i \in S_k} y_i \quad,\quad \mu_k := \frac{1}{|S_k|} \sum_{i \in S_k} y_i,
\end{equation}
and the representative abscissa (segment midpoints)
\begin{equation}
    x_k := \frac{s_k + e_k}{2}.
\end{equation}
The corresponding envelope plot consists of the band between the polylines through $(x_k, m_k)$ and $(x_k, M_k)$, optionally accompanied by the mean polyline through $(x_k, \mu_k)$. Crucially, the above statistics can be computed \emph{streamingly} (i.e. by processing contiguous blocks of the data without ever storing the full length-$N$ array). To see this, suppose the data arrive as chunks supported on disjoint index ranges $\{\tau, \tau + 1, \ldots, \tau + L - 1\}$. Each sample index $i$ is deterministically assigned to a unique segment label $k(i)$ defined by  $k(i) := \min \{k \,:\, e_k > i\}$. One then maintains per-segment accumulators $(m_k, M_k, S_k, C_k)$ initialised by
\begin{equation}
    m_k = + \infty, \quad M_k = -\infty, \quad S_k = 0, \quad C_k = 0,
\end{equation}
and updates them for each processed chunk by the associative reductions
\begin{equation}
    m_k \leftarrow \min (m_k, \min_{i \in \text{chunk}, \, k(i) = k}y_i), \quad M_k \leftarrow \max (M_k, \max_{i \in \text{chunk}, \, k(i) = k}y_i),
\end{equation}
as well as
\begin{equation}
    S_k \leftarrow S_k + \sum_{i \in \text{chunk}, \, k(i) = k} y_i, \quad C_k \leftarrow C_k + \#\{i \in \text{chunk} \,:\, k(i) = k\}.
\end{equation}
After all chunks are processed, one sets $\mu_k := S_k/C_k$. Because $\min$, $\max$ and addition are associative, the final statistics coincide with the monolithic (non-streaming) computation up to floating point round-off. Thus, this streaming approach is a memory-management strategy and does not introduce an algorithmic approximation.

\subsection{Proof of Propositions \ref{pr:main_proposition1}, \ref{pr:main_proposition2}}
\label{app:proof_minimization_problem}
\begin{proof}[Proof of proposition \ref{pr:main_proposition1}]
We assume that there is $|\Psi) \in \mathcal{H}_\text{diff}$ such that for a certain $\gamma_0$
\begin{equation}
\label{eq:necessary_cond}
    \left (\Psi \right |\left .  \hat{H}(N) f_{\gamma_0} \right \rangle = 0 \text{ for all } N: \Sigma \mapsto \mathbb{R}, f_{\gamma_0}\in \mathcal{H}_{\gamma_0}. 
\end{equation}
Note that any diffeomorphism invariant state, and hence in particular $|\Psi)$ can be written as 
\begin{equation}
    |\Psi) = \sum_{\gamma' \in S} \left \lvert \eta(\Psi_{\gamma'}) \right ), \qquad \Psi_{\gamma'} \in  \mathcal{H}'_{\gamma'} 
\end{equation}
where the sum is over some $\gamma'$ collected in some set $S$ and $\Psi_{\gamma'}\in \mathcal{H}'_\gamma$ and we can assume without loss of generality that $\Psi_{\gamma'}$ is invariant under the symmetries of  $\gamma'$. 
\newparagraph
Some of the $\gamma \in S$ may be diffeomorphic to $\gamma_0$ or a proper subgraph of $\gamma_0$. Let $S_{\gamma_0} \subset S$ be the set of such $\gamma$. For a $\gamma \in S_{\gamma_0}$, there may even be diffeomorphism classes $\varphi_{\gamma}, \varphi'_\gamma, \ldots \in \text{Diff}/\text{Diff}_\gamma$ that map $\gamma$ to \emph{different} subgraphs of $\gamma_0$, or to the same subgraph in different ways. For a given subgraph $\alpha$ of $\gamma_0$, let 
\begin{equation}
    \widetilde\Psi_{\alpha} := \sum_{\gamma \in S_{\gamma_0}, \varphi_\gamma:  \varphi_\gamma(\gamma) =\alpha } U_{\varphi_\gamma} \Psi_\gamma.  
\end{equation}
If $\alpha$ is not diffeomorphic to any graph in $S_{\gamma_0}$, we take $\widetilde\Psi_{\alpha}=0$ under this definition.  

By definition of the diffeomorphism invariant inner product \cite{Ashtekar:1995diffeo,Ashtekar:2004eh}, we have
\begin{align}
    (\Psi  |  \hat{H}(N) f_{\gamma_0} \rangle 
    &=\sum_{\gamma\in S} \sum_{\varphi\in \sfrac{\text{Diff}}{\text{Diff}_\gamma}} \scpr{U_\varphi\Psi_\gamma}{\hat{H}(N) f_{\gamma_0}}\\
    &=\sum_{\gamma\in S_{\gamma_0}} \sum_{\varphi=\varphi_{\gamma},\varphi'_{\gamma}, \ldots} \scpr{U_\varphi\Psi_\gamma}{\hat{H}(N) f_{\gamma_0}}\\
    &= \sum_{\alpha \subseteq \gamma_0} \scpr{\widetilde{\Psi}_\alpha}{\hat{H}(N) f_{\gamma_0}}\\
    &= \scpr{\widetilde{\Psi}}{\hat{H}(N) f_{\gamma_0}}
\end{align}
where for the second equality we have made use of the properties of the kinematical scalar product: all other terms vanish. In the second step we used the definitions from above. And in the third step, we introduced 
\begin{equation}
    \widetilde{\Psi}= \sum_{\alpha \subseteq \gamma_0} \widetilde{\Psi}_\alpha. 
\end{equation}
Hence 
\begin{align}
    \text{\eqref{eq:necessary_cond}} \quad &\Leftrightarrow \quad \scpr{\widetilde{\Psi}}{\hat{H}(N) f_{\gamma_0}}
    \text{ for all } N: \Sigma \mapsto \mathbb{R}, f_{\gamma_0}\in \mathcal{H}_{\gamma_0}\\
    &\Leftrightarrow \quad \scpr{\hat{H}(N)^\dagger \widetilde{\Psi}}{ f_{\gamma_0}} \text{ for all } N, f_{\gamma_0}\in \mathcal{H}_{\gamma_0}\\
    &\Leftrightarrow \quad P_{\mathcal{H}_{\gamma_0}} \hat{H}(N)^\dagger\widetilde{\Psi} \text{ for all } N,
\end{align}
where in the last step $P_{\mathcal{H}_{\gamma_0}}$ is the projector onto $\mathcal{H}_{\gamma_0}$ and we have used the fact that the cylindrical functions on $\gamma_0$ are dense in $\mathcal{H}_{\gamma_0}$. 
We decompose $H^\dagger(N)$ according to $\mathcal{H}= \mathcal{H}_{\gamma_0} \oplus \mathcal{H}_{\gamma_0}^\perp$: 
\begin{equation}
   \hat{H}(N)=\begin{pmatrix}\hat{H}(N)\rvert_{\mathcal{H}_{\gamma_0}}& \hat{X}(N)\\ 0 & \hat{Y}(N)\end{pmatrix}, 
\end{equation}
hence
\begin{equation}
     P_{\mathcal{H}_{\gamma_0}}  \hat{H}(N)^\dagger \widetilde{\Psi} = \begin{pmatrix}
         \one & 0\\ 0&0
     \end{pmatrix}
     \begin{pmatrix}
         \hat{H}(N)\rvert^\dagger_{\mathcal{H}_{\gamma_0}}& 0\\
         \hat{X}(N)^\dagger& \hat{Y}(N)^\dagger
     \end{pmatrix}
     \begin{pmatrix}
         \widetilde{\Psi}\\0
     \end{pmatrix} 
     = \begin{pmatrix}
     \hat{H}(N)\rvert^\dagger_{\mathcal{H}_{\gamma_0}}\widetilde{\Psi}\\0
     \end{pmatrix}.  
\end{equation}
Hence we can continue the string of equivalences. 
\begin{align}
    \text{\eqref{eq:necessary_cond}} \quad &\Leftrightarrow \quad \hat{H}(N)\rvert^\dagger_{\mathcal{H}_{\gamma_0}}\widetilde{\Psi} =0  \text{ for all } N \\
    &\Leftrightarrow \quad \hat{H}_v\rvert^\dagger_{\mathcal{H}_{\gamma_0}}\widetilde{\Psi} =0  \text{ for all } v\in \gamma_0 \label{eq:kernel}\\
    &\Rightarrow \quad \hat{H}_v\rvert_{\mathcal{H}_{\gamma_0}} \hat{H}_v\rvert^\dagger_{\mathcal{H}_{\gamma_0}}\widetilde{\Psi} =0  \text{ for all } v\in \gamma_0. \label{eq:quadratic_constraint}
\end{align}
On the other hand
\begin{align}
    \text{\eqref{eq:quadratic_constraint}} \quad &\Rightarrow \quad \scpr{\hat{H}_v\rvert_{\mathcal{H}_{\gamma_0}} \hat{H}_v\rvert^\dagger_{\mathcal{H}_{\gamma_0}}\widetilde{\Psi}}{\widetilde{\Psi}} = 0 \text{ for all } v\in \gamma_0  \label{eq:expect}\\
    &\Leftrightarrow \quad \norm{\hat{H}_v\rvert^\dagger_{\mathcal{H}_{\gamma_0}}\widetilde{\Psi}}^2=0 \text{ for all } v\in \gamma_0 \\
    &\Leftrightarrow \quad \text{\eqref{eq:kernel}}.  
\end{align}
All in all, we have a chain of equivalences between \eqref{eq:necessary_cond} and \eqref{eq:expect}. Finally
\begin{equation}
    \scpr{\hat{H}_v\rvert_{\mathcal{H}_{\gamma_0}} \hat{H}_v\rvert^\dagger_{\mathcal{H}_{\gamma_0}}\widetilde{\Psi}}{\widetilde{\Psi}} = 0 \text{ for all } v\in \gamma_0 
    \quad\Leftrightarrow \quad \langle \hat{\mathcal{Q}}_{\hat{H}\rvert_{\mathcal{H}_{\gamma_0}}} \rangle_{\widetilde{\Psi}} = 0
\end{equation}
for 
\begin{equation}
    \hat{\mathcal{Q}}_{\hat{H}\rvert_{\mathcal{H}_{\gamma_0}}} = \sum_{v \in \gamma_0} \hat{H}_v\rvert_{\mathcal{H}_{\gamma_0}} \hat{H}_v\rvert_{\mathcal{H}_{\gamma_0}}^\dagger 
\end{equation}
since each term in the sum for the expectation value is non-negative. \end{proof}

\begin{proof}[Proof of proposition \ref{pr:main_proposition2}]
In preparation for the main part of the proof, we note the following simple fact: For the projector $P'_\gamma$ on $\mathcal{H}'_\gamma$ and a diffeomorphism $\varphi$, one has 
\begin{equation}
\label{eq:projector_covariance}
    U_\varphi\, P'_\gamma \, U_\varphi^{-1} = P'_{\varphi(\gamma)}.
\end{equation}
Now, given $\widetilde{\Psi}\in \mathcal{H}_{\gamma_0}$ with 
\begin{equation}
\label{eq:start}
   \expect{\hat{\mathcal{Q}}_{\hat{H}}}_{\widetilde{\Psi}}=0, 
\end{equation}
for some $\hat{H}$ that is covariant wrt. $\text{GS}_{\gamma_0}$. Of course \eqref{eq:start} is equivalent to 
\begin{equation}
\label{eq:startt}
    \hat{H}^\dagger_v\widetilde{\Psi}=0 \quad \text{ for all }v \in V(\gamma).  
\end{equation}
First, note that because $\hat{H}$ is covariant with respect to GS$_{\gamma_0}$, we can group average and still have \eqref{eq:start}:
\begin{equation}
    \hat{H}^\dagger_v\   \sum_{g \in \text{GS}_{\gamma_0}}  U_g \widetilde{\Psi}
    = \sum_{g \in \text{GS}_{\gamma_0}}  (U_{g^{-1}} \hat{H}_v U_{g^{-1}}^\dagger)^\dagger \widetilde{\Psi} 
    =\sum_{g \in \text{GS}_{\gamma_0}}   \hat{H}_{g^{-1}v}^\dagger \widetilde{\Psi} = 0. 
\end{equation}
From now on, $\widetilde{\Psi}$ will be assumed invariant under GS$_{\gamma_0}$. 
\newparagraph
We decompose $\widetilde{\Psi}$ according to
\begin{equation}
    \widetilde{\Psi} =\sum_\alpha \widetilde{\Psi}_\alpha, \qquad  \widetilde{\Psi}_\alpha = P'_\alpha \widetilde{\Psi}  \in \mathcal{H}'_\alpha, \qquad \alpha \subseteq \gamma_0. 
\end{equation}
Because of the assumed invariance of $\widetilde{\Psi}$ and \eqref{eq:projector_covariance} we have for any $g \in$ GS$_{\gamma_0}$ 
\begin{equation}
    U_g \widetilde{\Psi}_\alpha = \widetilde{\Psi}_{g (\alpha)}.
\end{equation}
Moreover, because we assume \eqref{eq:gs_map} is onto, for any g' in GS$_\alpha$ there is $g \in$ GS$_{\gamma_0}$ such that   
\begin{equation}
\label{eq:inv}
    U_{g'} \widetilde{\Psi}_\alpha = U_{g'} \widetilde{\Psi}_\alpha = \widetilde{\Psi}_{g'(\alpha)} = \widetilde{\Psi}_{\alpha}.
\end{equation}
Hence under the assumptions we made, $\widetilde{\Psi}_\alpha$ is invariant under  GS$_\alpha$. 
\newparagraph
Now we can define the diffeomorphism invariant state 
\begin{equation}
    \lvert\Psi) = \sum_{\alpha\subseteq\gamma_0}\, \frac{1}{n(\alpha, \gamma_0)} \, \vert\eta (\widetilde{\Psi}_\alpha))\qquad \in \mathcal{H}_{\text{diff}}. 
\end{equation}
Then, for a cylindrical function $f_{\gamma_0}$, consider
\begin{align}
    (\Psi| \hat{H}_v f_{\gamma_0}\rangle 
    &= \sum_{\alpha\subseteq\gamma_0} \sum_{[\varphi]\in \sfrac{\text{Diff}}{\text{Diff}_\alpha}} \frac{1}{n(\alpha, \gamma_0)} \frac{1}{|\text{GS}_{\alpha}|}\sum_{g\in \text{GS}_{\alpha}} \scpr{U_\varphi U_g \widetilde{\Psi}_\alpha}{\hat{H}_v f_{\gamma_0}}\\
    &= \sum_{\alpha\subseteq\gamma_0} \sum_{[\varphi]\in \sfrac{\text{Diff}}{\text{Diff}_\alpha}} \frac{1}{n(\alpha, \gamma_0)} \scpr{U_\varphi \widetilde{\Psi}_\alpha}{\hat{H}_v f_{\gamma_0}},
\end{align}
where in the second line we have used \eqref{eq:inv}. 
Note that the sum over $\varphi\in \sfrac{\text{Diff}}{\text{Diff}_\alpha}$ will only have non-zero contributions when $\varphi$ maps $\alpha$ onto a different subgraph of $\gamma_0$. We have assumed that if this is the case, $\varphi$ can be identified with a graph symmetry $g_{[\varphi]} \in \text{GS}_{\gamma_0}$, see \eqref{eq:gs_inclusion}, so 
\begin{align}
    (\Psi| \hat{H}_v f_{\gamma_0}\rangle 
    &= \sum_{\alpha\subseteq\gamma_0} \sum_{[\varphi]\in \sfrac{\text{Diff}}{\text{Diff}_\alpha}} \frac{1}{n(\alpha, \gamma_0)} \scpr{U_{g_{[\varphi]}} \widetilde{\Psi}_\alpha}{\hat{H}_v f_{\gamma_0}}\\
    &= \sum_{\alpha\subseteq\gamma_0} \sum_{[\varphi]\in \sfrac{\text{Diff}}{\text{Diff}_\alpha}} \frac{1}{n(\alpha, \gamma_0)} \scpr{ \widetilde{\Psi}_{\varphi(\alpha)}}{\hat{H}_v f_{\gamma_0}}\\ 
    &= \sum_{[\alpha]:\alpha \subseteq\gamma_0 }
    \sum_{\alpha'\in [\alpha]}
    \sum_{\alpha''\in [\alpha]}
    \frac{1}{n(\alpha, \gamma_0)} \scpr{ \widetilde{\Psi}_{\alpha''}}{\hat{H}_v f_{\gamma_0}}\\ 
    &= \sum_{[\alpha]:\alpha \subseteq\gamma_0 }
    \sum_{\alpha''\in [\alpha]}
    \scpr{ \widetilde{\Psi}_{\alpha''}}{\hat{H}_v f_{\gamma_0}}
    \label{eq:sumsplit}\\ 
    &= \sum_{\alpha \subseteq\gamma_0 }
    \scpr{ \widetilde{\Psi}_{\alpha}}{\hat{H}_v f_{\gamma_0}}\\ 
    &= \scpr{\hat{H}_v^\dagger \widetilde{\Psi}}{f_{\gamma_0}}\\
    &=0. 
\end{align}
In \eqref{eq:sumsplit} we have introduced the notation $[\alpha]$ for the set of graphs $\subseteq\gamma_0$ that are diffeomorphic to $\alpha$, and split the sum over $\alpha$ into the sum over diffeomorphism equivalence classes $[\alpha]$ and their elements $\alpha'$. Since nothing depends on $\alpha'$, there are $|[\alpha]|=n(\alpha, \gamma_0)$ equal terms, hence the $1/n$ factor cancels in the next step. Finally, \eqref{eq:startt} is used. 
\end{proof}


\section*{References}

\begin{thebibliography}{}


\bibitem{Rovelli:1997yv}
C.~Rovelli,
``Loop quantum gravity,''
Living Rev. Rel. \textbf{1}, 1 (1998)
doi:10.12942/lrr-1998-1
[arXiv:gr-qc/9710008 [gr-qc]].


\bibitem{Thiemann:2001gmi}
T.~Thiemann,
``Introduction to Modern Canonical Quantum General Relativity,''
[arXiv:gr-qc/0110034 [gr-qc]].


\bibitem{Thiemann:2007pyv}
T.~Thiemann,
``Modern Canonical Quantum General Relativity,''
Cambridge University Press, 2007,
ISBN 978-0-511-75568-2, 978-0-521-84263-1
doi:10.1017/CBO9780511755682


\bibitem{Ashtekar:2004eh}
A.~Ashtekar and J.~Lewandowski,
``Background independent quantum gravity: A Status report,''
Class. Quant. Grav. \textbf{21} (2004), R53
doi:10.1088/0264-9381/21/15/R01
[arXiv:gr-qc/0404018 [gr-qc]].


\bibitem{Ashtekar:1986yd}
A.~Ashtekar,
``New Variables for Classical and Quantum Gravity,''
Phys. Rev. Lett. \textbf{57} (1986), 2244-2247
doi:10.1103/PhysRevLett.57.2244


\bibitem{BarberoG:1994eia}
J.~F.~Barbero G.,
``Real Ashtekar variables for Lorentzian signature space times,''
Phys. Rev. D \textbf{51} (1995), 5507-5510
doi:10.1103/PhysRevD.51.5507
[arXiv:gr-qc/9410014 [gr-qc]].


\bibitem{Baez:1996aima}
J.~C.~Baez,
``Spin network states in gauge theory,''
Adv. Math. \textbf{117} (1996), 253-272
doi:10.1006/aima.1996.0012
[arXiv:gr-qc/9411007 [gr-qc]].


\bibitem{Ashtekar:1996eg}
A.~Ashtekar and J.~Lewandowski,
``Quantum theory of geometry. 1: Area operators,''
Class. Quant. Grav. \textbf{14} (1997), A55-A82
doi:10.1088/0264-9381/14/1A/006
[arXiv:gr-qc/9602046 [gr-qc]].


\bibitem{Rovelli:1995discreteness}
C.~Rovelli and L.~Smolin,
``Discreteness of area and volume in quantum gravity,''
Nucl. Phys. B \textbf{442} (1995), 593-619
[erratum: Nucl. Phys. B \textbf{456} (1995), 753-754]
doi:10.1016/0550-3213(95)00150-Q
[arXiv:gr-qc/9411005 [gr-qc]].


\bibitem{Ashtekar:1997volume}
A.~Ashtekar and J.~Lewandowski,
``Quantum theory of geometry. II: Volume operators,''
Adv. Theor. Math. Phys. \textbf{1} (1997), 388-429
doi:10.4310/ATMP.1997.v1.n2.a8
[arXiv:gr-qc/9711031 [gr-qc]].


\bibitem{Thiemann:1996aw}
T.~Thiemann,
``Quantum spin dynamics (QSD),''
Class. Quant. Grav. \textbf{15} (1998), 839-873
doi:10.1088/0264-9381/15/4/011
[arXiv:gr-qc/9606089 [gr-qc]].


\bibitem{Thiemann:1996av}
T.~Thiemann,
``Quantum spin dynamics (qsd). 2.,''
Class. Quant. Grav. \textbf{15}, 875-905 (1998)
doi:10.1088/0264-9381/15/4/012
[arXiv:gr-qc/9606090 [gr-qc]].

\bibitem{Thiemann:1997rv}
T.~Thiemann,
``QSD 3: Quantum constraint algebra and physical scalar product in quantum general relativity,''
Class. Quant. Grav. \textbf{15} (1998), 1207-1247
doi:10.1088/0264-9381/15/5/010
[arXiv:gr-qc/9705017 [gr-qc]].

\bibitem{Thiemann:1997ru}
T.~Thiemann,
``QSD 4: (2+1) Euclidean quantum gravity as a model to test (3+1) Lorentzian quantum gravity,''
Class. Quant. Grav. \textbf{15} (1998), 1249-1280
doi:10.1088/0264-9381/15/5/011
[arXiv:gr-qc/9705018 [gr-qc]].

\bibitem{Reisenberger:1996pu}
M.~P.~Reisenberger and C.~Rovelli,
``'Sum over surfaces' form of loop quantum gravity,''
Phys. Rev. D \textbf{56}, 3490-3508 (1997)
doi:10.1103/PhysRevD.56.3490
[arXiv:gr-qc/9612035 [gr-qc]].

\bibitem{Lewandowski:2014hza}
J.~Lewandowski and H.~Sahlmann,
``Symmetric scalar constraint for loop quantum gravity,''
Phys. Rev. D \textbf{91} (2015) no.4, 044022
doi:10.1103/PhysRevD.91.044022
[arXiv:1410.5276 [gr-qc]].

\bibitem{Varadarajan:2022dgg}
M.~Varadarajan,
``Anomaly free quantum dynamics for Euclidean LQG,''
doi:10.48550/arXiv.2205.10779
[arXiv:2205.10779 [gr-qc]].

\bibitem{Varadarajan:2021zrk}
M.~Varadarajan,
``Euclidean LQG Dynamics: An Electric Shift in Perspective,''
Class. Quant. Grav. \textbf{38} (2021) no.13, 135020
doi:10.1088/1361-6382/abfc2d
[arXiv:2101.03115 [gr-qc]].

\bibitem{Ashtekar:2021shortreview}
A.~Ashtekar and E.~Bianchi,
``A short review of loop quantum gravity,''
Rep. Prog. Phys. \textbf{84} (2021) no.4, 042001
doi:10.1088/1361-6633/abed91
[arXiv:2104.04394 [gr-qc]].


\bibitem{Guedes:2024zbu}
T.~L.~M.~Guedes, G.~A.~Mena Marug{\'a}n, M.~M{\"u}ller and F.~Vidotto,
``Taming Thiemann{\textquoteright}s Hamiltonian constraint in canonical loop quantum gravity: Reversibility, eigenstates, and graph-change analysis,''
Phys. Rev. D \textbf{112} (2025) no.2, 026024
doi:10.1103/hjdk-kdhk
[arXiv:2412.20272 [gr-qc]].


\bibitem{Guedes:2024duc}
T.~L.~M.~Guedes, G.~A.~Mena Marug{\'a}n, F.~Vidotto and M.~M{\"u}ller,
``Computing the Graph-Changing Dynamics of Loop Quantum Gravity,''
Universe \textbf{11} (2025) no.12, 387
doi:10.3390/universe11120387
[arXiv:2412.20257 [gr-qc]].


\bibitem{Assanioussi:2017tql}
M.~Assanioussi, J.~Lewandowski and I.~M{\"a}kinen,
``Time evolution in deparametrized models of loop quantum gravity,''
Phys. Rev. D \textbf{96} (2017) no.2, 024043
doi:10.1103/PhysRevD.96.024043
[arXiv:1702.01688 [gr-qc]].

\bibitem{Kisielowski:2022wvk}
M.~Kisielowski,
``Bouncing Universe in loop quantum gravity: full theory calculation,''
Class. Quant. Grav. \textbf{40} (2023) no.19, 195025
doi:10.1088/1361-6382/acf271
[arXiv:2211.04440 [gr-qc]].

\bibitem{Makinen:2026rof}
I.~M{\"a}kinen,
``Time evolution of semiclassical states in the one-vertex model of quantum-reduced loop gravity,''
[arXiv:2604.00999 [gr-qc]].


\bibitem{Sahlmann:2024pba}
H.~Sahlmann and W.~Sherif,
``Towards quantum gravity with neural networks: solving the quantum Hamilton constraint of U(1) BF theory,''
Class. Quant. Grav. \textbf{41} (2024) no.22, 225014
doi:10.1088/1361-6382/ad84af
[arXiv:2402.10622 [gr-qc]].


\bibitem{Carleo:2016svm}
G.~Carleo and M.~Troyer,
``Solving the quantum many-body problem with artificial neural networks,''
Science \textbf{355} (2017) no.6325, 602-606
doi:10.1126/science.aag2302
[arXiv:1606.02318 [cond-mat.dis-nn]].


\bibitem{Sahlmann:2024kat}
H.~Sahlmann and W.~Sherif,
``Towards quantum gravity with neural networks: solving quantum Hamilton constraints of 3d Euclidean gravity in the weak coupling limit,''
Class. Quant. Grav. \textbf{41} (2024) no.21, 215006
doi:10.1088/1361-6382/ad7c14
[arXiv:2405.00661 [gr-qc]].


\bibitem{Smolin:1992wj}
L.~Smolin,
``The G(Newton) ---{\ensuremath{>}} 0 limit of Euclidean quantum gravity,''
Class. Quant. Grav. \textbf{9} (1992), 883-894
doi:10.1088/0264-9381/9/4/007
[arXiv:hep-th/9202076 [hep-th]].


\bibitem{Ashtekar:1994mh}
A.~Ashtekar and J.~Lewandowski,
``Projective techniques and functional integration for gauge theories,''
J. Math. Phys. \textbf{36} (1995), 2170-2191
doi:10.1063/1.531037
[arXiv:gr-qc/9411046 [gr-qc]].


\bibitem{Ashtekar:1994wa}
A.~Ashtekar and J.~Lewandowski,
``Differential geometry on the space of connections via graphs and projective limits,''
J. Geom. Phys. \textbf{17} (1995), 191-230
doi:10.1016/0393-0440(95)00028-G
[arXiv:hep-th/9412073 [hep-th]].


\bibitem{Thiemann:2006phoenix}
T.~Thiemann,
``The Phoenix Project: Master Constraint Programme for Loop Quantum Gravity,''
Class. Quant. Grav. \textbf{23} (2006), 2211-2247
doi:10.1088/0264-9381/23/7/002
[arXiv:gr-qc/0305080 [gr-qc]].


\bibitem{Dittrich:2006master}
B.~Dittrich and T.~Thiemann,
``Testing the Master Constraint Programme for Loop Quantum Gravity: I. General Framework,''
Class. Quant. Grav. \textbf{23} (2006), 1025-1065
doi:10.1088/0264-9381/23/4/001
[arXiv:gr-qc/0411138 [gr-qc]].


\bibitem{Ashtekar:1995diffeo}
A.~Ashtekar, J.~Lewandowski, D.~Marolf, J.~M.~Mour\~ao and T.~Thiemann,
``Quantization of diffeomorphism invariant theories of connections with local degrees of freedom,''
J. Math. Phys. \textbf{36} (1995), 6456-6493
doi:10.1063/1.531252
[arXiv:gr-qc/9504018 [gr-qc]].



\bibitem{Bakhoda:2020fiy}
S.~Bakhoda and T.~Thiemann,
``Covariant origin of the U(1)$^{3}$ model for Euclidean quantum gravity,''
Class. Quant. Grav. \textbf{39} (2022) no.2, 025006
doi:10.1088/1361-6382/ac37a4
[arXiv:2011.00031 [gr-qc]].

\bibitem{Bakhoda:2022rut}
S.~Bakhoda,
``The $U(1)^3$ Model of Euclidean Quantum Gravity'',
Doctoral thesis,
Friedrich-Alexander-Universit\"at Erlangen--N\"urnberg (FAU),
2022, https://open.fau.de/handle/openfau/19853.


\bibitem{Long:2022wcLQGArea}
G.~Long and Y.~Ma,
``Effective dynamics of weak coupling loop quantum gravity,''
Phys. Rev. D \textbf{105} (2022), 044043
doi:10.1103/PhysRevD.105.044043
[arXiv:2111.11844 [gr-qc]].


\bibitem{Bakhoda:2025U1cubedVolume}
S.~Bakhoda and Y.~Ma,
``Geometrical Quantum Time in the U(1)$^3$ Model of Euclidean Quantum Gravity,''
Commun. Theor. Phys. \textbf{77} (2025), 055401
doi:10.1088/1572-9494/ad972b
[arXiv:2411.19435 [gr-qc]].


\bibitem{Sherif:2025hfl}
W.~Sherif,
``Simultaneous approximation of multiple degenerate states using a single neural network quantum state,''
Mach. Learn. Sci. Tech. \textbf{7} (2026) no.1, 015029
doi:10.1088/2632-2153/ae3e39
[arXiv:2509.02658 [quant-ph]].


\bibitem{Dittrich:2014wpa}
B.~Dittrich and M.~Geiller,
``A new vacuum for Loop Quantum Gravity,''
Class. Quant. Grav. \textbf{32} (2015) no.11, 112001
doi:10.1088/0264-9381/32/11/112001
[arXiv:1401.6441 [gr-qc]].


\bibitem{Sherif:2026neuralqx}
W.~Sherif,
``neuraLQX: a high-performance simulations toolkit for loop quantum gravity,''
version 1.1.1 (2026)
[software, available at: http://github.com/waleed-sh/neuraLQX].


\bibitem{Carleo:2019netket}
G.~Carleo, K.~Choo, D.~Hofmann, J.~E.~T.~Smith, T.~Westerhout, F.~Alet, E.~J.~Davis, S.~Efthymiou, I.~Glasser, S.-H.~Lin, M.~Mauri, G.~Mazzola, C.~B.~Pereira and F.~Vicentini,
``NetKet: A machine learning toolkit for many-body quantum systems,''
SoftwareX \textbf{10} (2019), 100311
doi:10.1016/j.softx.2019.100311.


\bibitem{Vicentini:2022netket3}
F.~Vicentini, D.~Hofmann, A.~Szabó, D.~Wu, C.~Roth, C.~Giuliani, G.~Pescia, J.~Nys, V.~Vargas-Calderón, N.~Astrakhantsev and G.~Carleo,
``NetKet 3: Machine Learning Toolbox for Many-Body Quantum Systems,''
SciPost Phys. Codebases (2022), 7
doi:10.21468/SciPostPhysCodeb.7.


\bibitem{Bradbury:2018jax}
J.~Bradbury, R.~Frostig, P.~Hawkins, M.~J.~Johnson, C.~Leary, D.~Maclaurin, G.~Necula, A.~Paszke, J.~VanderPlas, S.~Wanderman-Milne and Q.~Zhang,
``JAX: Composable Transformations of Python + NumPy Programs,''
(2018).


\bibitem{Heek:2024flax}
J.~Heek, A.~Levskaya, A.~Oliver, M.~Ritter, B.~Rondepierre, A.~Steiner and M.~van~Zee,
``Flax: A Neural Network Library and Ecosystem for JAX,''
(2024).


\bibitem{Haefner:2021mpi4jax}
D.~Häfner and F.~Vicentini,
``mpi4jax: Zero-copy MPI communication of JAX arrays,''
J. Open Source Softw. \textbf{6} (2021) no.65, 3419
doi:10.21105/joss.03419.

\end{thebibliography}
\end{document}